\tikzset{
	on each segment/.style={
		decorate,
		decoration={
			show path construction,
			moveto code={},
			lineto code={
				\path [#1]
				(\tikzinputsegmentfirst) -- (\tikzinputsegmentlast);
			},
			curveto code={
				\path [#1] (\tikzinputsegmentfirst)
				.. controls
				(\tikzinputsegmentsupporta) and (\tikzinputsegmentsupportb)
				..
				(\tikzinputsegmentlast);
			},
			closepath code={
				\path [#1]
				(\tikzinputsegmentfirst) -- (\tikzinputsegmentlast);
			},
		},
	},
	mid arrow/.style={postaction={decorate,decoration={
				markings,
				mark=at position .5 with {\arrow[#1]{stealth}}
	}}},
}
\def\be{\begin{equation}\begin{gathered}}
\def\ee{\end{gathered}\end{equation}}
\newcommand{\rf}[1]{(\ref{#1})}
\def\stackreb#1#2{\mathrel{\mathop{#2}\limits_{#1}}}
\numberwithin{equation}{section}
\newtheorem{thm}{Theorem}[section]
\newtheorem{lemma}[thm]{Lemma}
\newtheorem{conj}[thm]{Conjecture}
\theoremstyle{definition}
\newtheorem{Remark}{Remark}[section]
\newtheorem{Example}[Remark]{Example}
\newtheorem{defin}[Remark]{Definition}
\newcommand{\eq}[1]{\begin{equation}\begin{gathered}
#1\end{gathered}\end{equation}}
\newcommand{\eqs}[1]{\begin{equation}\begin{gathered}\begin{split}
#1\end{split}\end{gathered}\end{equation}}
\def\lb{\left(}
\def\rb{\right)}
\def\d{\partial}
\def\mc{\mathcal}
\title{
Cluster Toda chains and Nekrasov functions}
\author{M. Bershtein, P. Gavrylenko, A. Marshakov}
\date{}
\begin{document}

\maketitle

\begin{abstract}\vspace*{2pt}
\noindent
In this paper the relation between the cluster integrable systems and $q$-difference equations is extended
beyond the Painlev\'e case.

We consider the class of hyperelliptic curves when the Newton polygons contain only four boundary points. The corresponding cluster integrable Toda systems are presented, and their discrete automorphisms are identified with certain reductions of the Hirota difference equation. We also construct non-autonomous versions of these equations and find that their solutions are expressed in terms of 5d Nekrasov functions with the Chern-Simons contributions, while in the autonomous case these equations are solved in terms of the Riemann theta-functions.
\end{abstract}

\tableofcontents

\newpage

\begin{flushright}
\textit{To the memory of L.D. Faddeev}
\end{flushright}

\section{Introduction}

In this paper, following \cite{BGM}, we continue the study of the relation between cluster integrable systems, $q$-difference equations and Nekrasov partition functions for 5d gauge theories, and extend it to the class of
theories with the higher rank gauge groups.

Recall the main conjecture of \cite{BGM}. First, to any \textit{Newton polygon $\Delta$} one can assign the cluster integrable system \cite{GK}, \cite{FM:2014}. The phase space of this system is \textit{$X$-cluster} variety $\mathcal{X}_\mathcal{Q}$ with the Poisson bracket defined by the quiver~$\mathcal{Q}$. The group $\mathcal{G}_\mathcal{Q}$ of discrete automorphisms  acts on $\mathcal{X}_\mathcal{Q}$, preserving the integrals of motion of the cluster integrable system. After \textit{deautonomization} the action $\mathcal{G}_\mathcal{Q}$ leads to $q$-difference equations, which are equations of $q$-isomonodromic deformations. Finally, these equations can be explicitly solved using Nekrasov functions of 5d supersymmetric gauge theory or topological strings amplitudes for the toric Calabi-Yau $CY_\Delta$. The Seiberg-Witten curve for corresponding supersymmetric gauge theory and corresponding toric Calabi-Yau manifold are constructed from the same Newton polygon~$\Delta$.

The statement about solutions to the $q$-difference equations is in fact a generalization of the ($q$-deformed) Isomonodromy/CFT correspondence \cite{GIL1207}, \cite{GIso}. Moreover, it has been recently proposed in \cite{BGM} that after quantization of the Poisson variety $\mathcal{X}_\mathcal{Q}$ the corresponding \emph{quantum} $q$-difference equations are solved using the \emph{refined} topological strings partition functions,
depending also on multiplicative quantum parameter in addition to the parameter of $q$-deformation. In terms of the Isomonodromy/CFT correspondence this quantization leads to generalization for the case of arbitrary central charge.

This proposal has been verified in \cite{BGM} for the class of polygons $\Delta$ with a single interior integer point. The corresponding $q$-difference equations are well-known discrete Painlev\'e equations \cite{SakaiCMP}. Note that in this case the Poisson bracket on $\mathcal{X}_\mathcal{Q}$ has rank two, and integrable system is almost trivial (any integral of motion is function of the Hamiltonian), however the group $\mathcal{G}_\mathcal{Q}$ can be already very nontrivial. In terms of combinatorics of $\Delta$ the rank of the Poisson bracket on $\mathcal{X}_\mathcal{Q}$ is twice the number of the interior integer points in $\Delta$ (equal to the number of independent integrals of motion), while number of commuting discrete flows (or the rank of corresponding Abelian subgroup in $\mathcal{G}_\mathcal{Q}$) is related to the number $B$ of the integer points on the border of $\Delta$ (literarly equals to $B-3$).

In this paper we consider the ``opposite'' case where polygon $\Delta$ has only four integer points on the border, then $\mathcal{G}_\mathcal{Q}$ contains only  rank one integer lattice. Generator of this lattice shifts Casimir variable $z\mapsto qz$. Moreover, we restrict ourselves to the subclass of polygons, when all interior integer points in $\Delta$ belong to the same line, and denote the number of such interior points by $N-1$. The spectral curves corresponding to such Newton polygons are always hyperelliptic. We classify the corresponding Newton polygons in Theorem~\ref{th:polygons}: they belong either to $Y^{N,k}$, with $0\leq k \leq N$, or to $L^{1,2N-1,2}$ families (here in notations we follow \cite{Hanany1}, \cite{Hanany2}).
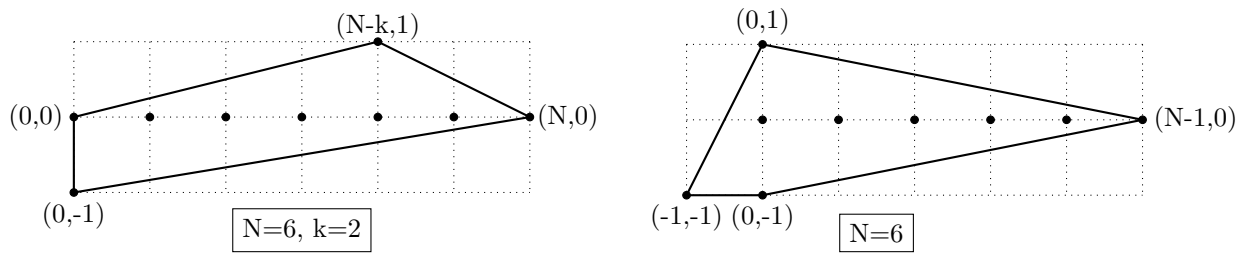
\begin{figure}[h]

\begin{center}
\begin{tabular}{cc}
\begin{tikzpicture}[scale=1, font = \small, rotate=-90]

\tikzmath{\N=6;\k=2;\Nk=int(\N-\k);}

\draw[dotted] (-1,0) grid (1,\N);

\begin{scope}[
thick]
\draw[postaction={decorate}] (0,0)--(1,0);\draw[postaction={decorate}] (1,0)--(0,\N);
\draw[postaction={decorate}] (0,\N)--(-1,\N-\k);\draw[postaction={decorate}](-1,\N-\k)--(0,0);
\end{scope}

\foreach \x in {0,...,\N} \draw[fill] (0,\x) circle[radius=0.05];
\draw[fill] (1,0) circle[radius=0.05];
\draw[fill] (-1,\N-\k) circle[radius=0.05];

\node at (0,\N+0.5) {(N,0)};
\node at (1.3,0) {(0,-1)};
\node at (0,-0.5) {(0,0)};
\node at (-1.2,\N-\k) {(N-k,1)};

\node[draw] at (1.5,\N/2) {N=\N, k=\k};

\end{tikzpicture}
&
\begin{tikzpicture}[scale=1, font = \small, rotate=-90]

\tikzmath{\Ni=6;\N=\Ni-1;}

\draw[dotted] (-1,-1) grid (1,\N);

\begin{scope}[
thick]
\draw[postaction={decorate}] (-1,0)--(1,-1)--(1,0);\draw (1,0)--(0,\N);
\draw (0,\N)--(-1,0);
\end{scope}

\foreach \x in {0,...,\N} \draw[fill] (0,\x) circle[radius=0.05];
\draw[fill] (1,0) circle[radius=0.05];
\draw[fill] (1,-1) circle[radius=0.05];
\draw[fill] (-1,0) circle[radius=0.05];

\node at (0,\N+0.7) {(N-1,0)};
\node at (1.3,0) {(0,-1)};
\node at (1.3,-1) {(-1,-1)};
\node at (-1.3,0) {(0,1)};

\node[draw] at (1.5,\N/2-1) {N=\Ni};

\end{tikzpicture}
\end{tabular}
\end{center}
	\caption{Examples of $Y^{N,k}$ polygon for $N=6$, $k=2$ and $L^{1,2N-1,2}$ with $N=6$.}\label{fig:YNk}
\end{figure}
Integrable systems, corresponding to $Y^{N,k}$ polygons, were studied e.g. in \cite{Brini}, \cite{Eager:2011}. In the case of $Y^{N,0}$ it is standard affine relativistic Toda chain with $N$ particles (see details e.g. in \cite{AMJGP}), for other $Y^{N,k}$ they can be viewed as different affinizations of the same open Toda chain.

Below we show that deautonomization of discrete flows of these integrable systems can be written in a form of bilinear Hirota equations. For $Y^{N,k}$-case these equations are
\eq{\label{eq:taulat}
	\tau_{(n,m+1)}\tau_{(n,m-1)}=\tau_{(n,m)}^2+ z_0^{1/N} q^{\frac{kn-Nm}{N^2}}
\tau_{(n+1,m)}\tau_{(n-1,m)}
}
with the boundary conditions $\tau_{(n+k,m+N)}=\tau_{(n,m)}$.
One can also rewrite equations \eqref{eq:taulat} as difference equations in the variable $z=z_0q^{\frac{kn-Nm}N}$
\eq{\label{eq:tauJ}
	\tau_{j}\lb q z\rb \tau_{j}\lb q^{-1} z\rb=\tau_j(z)^2+z^{1/N} \tau_{j+1}\lb q^{k/N}z\rb \tau_{j-1}\lb q^{-k/N}z\rb\,,
}
on $N$ tau-functions $\{\tau_{j}(z)|j\in \mathbb{Z}/N\mathbb{Z}\}$. Similar difference equation is derived for the case of $L^{1,2N-1,2}$.

We propose generic solution of the difference equations \rf{eq:tauJ} in Conjecture~\ref{conj:bilinTau}. The result is given in terms of topological string amplitudes for the $Y^{N,k}$ geometry, which in this case are equal to Nekrasov partition functions for 5d pure $SU(N)$ supersymmetric gauge theory with Chern-Simons term at level $k$, \cite{IKP02}, \cite{Eguchi:2003}, \cite{Ta}. Substituting our solution into equations \eqref{eq:tauJ} reduces them to bilinear relations on Nekrasov partition functions, similar to the blow-up equations of \cite{GNY} (but for another geometry --- blow-up of $\mathbb{C}^2/\mathbb{Z}_2$).

We also present solution of the autonomous version of the equations \eqref{eq:taulat}, and their analog for the $L^{1,2N-1,2}$ geometry, for $q=1$. The corresponding tau functions are essentially given by the Riemann theta-functions on the Jacobians of hyperelliptic curves with the Newton polygons $Y^{N,k}$: then bilinear relations \eqref{eq:taulat} reduce to the Fay identity. We conclude our discussion with some remarks about 4d limit and present few comments about quantization.

\section{Cluster integrable systems and Toda chains}

\label{sec:Todas}

\subsection{Newton polygons} \label{ssec:polygons}

For a \textit{convex lattice} polygon $\Delta$ in $\mathbb{R}^2$ (with all vertices in  $\mathbb{Z}^2\subset\mathbb{R}^2$), 
one can write the Laurent polynomial $f_\Delta(\lambda,\mu)$, so that equation
\be
\label{SC}
f_\Delta(\lambda,\mu) = \sum_{(a,b)\in\Delta}\lambda^a\mu^b f_{a,b}=0.
\ee
defines a plane (noncompact) spectral curve in $\mathbb{C}^\times\times \mathbb{C}^\times$. Polygons related by the action of  $SA(2,\mathbb{Z})=SL(2,\mathbb{Z})\ltimes \mathbb{Z}^2$ lead to the same integrable systems.

\begin{defin}
By the \emph{Toda family curves} we call the curves \rf{SC} that are hyperelliptic, and their Newton polygons have 4 boundary integer points.
\end{defin}

\noindent
It turns out that all such curves can be
classified. First, it is well-known that there are only three $N=2$ Toda family curves with a single internal point (see $4_a, 4_b, 4_c$ in \cite{BGM}). Here we present classification for generic $N>2$ case.

\begin{lemma} All internal points of the Newton polygon $\Delta$ for a hyperelliptic curve belong to a single straight line. \label{hyperellipticity}
\end{lemma}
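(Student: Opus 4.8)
The plan is to relate the hyperelliptic condition on the curve $f_\Delta(\lambda,\mu)=0$ to the combinatorial structure of the interior lattice points of $\Delta$, using the fact that the genus of a generic curve with Newton polygon $\Delta$ equals the number $I$ of interior integer points (by Khovanskii's theorem), while a hyperelliptic curve of genus $g$ carries a unique $g^1_2$. First I would recall that a smooth projective model of the curve has genus $I=|\mathrm{Int}(\Delta)\cap\mathbb{Z}^2|$, and that for a hyperelliptic curve of genus $g\ge 2$ the canonical map is the composition of the degree-two map to $\mathbb{P}^1$ with the degree $g-1$ embedding of $\mathbb{P}^1$; in particular the image of the canonical map is a rational normal curve, so any $g$ points of it in "general position" impose only a severely constrained set of conditions.

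The key step is the standard description of the canonical system in toric terms: for a curve in the torus with Newton polygon $\Delta$, a basis of holomorphic differentials is given by $\lambda^a\mu^b\,\frac{d\lambda}{\lambda}\wedge(\text{stuff})/df$ with $(a,b)$ ranging over the interior lattice points $\mathrm{Int}(\Delta)\cap\mathbb{Z}^2$ — more precisely the canonical map is, up to the toric data, the monomial map $P\mapsto [\,\lambda^{a_i}\mu^{b_i}\,]_{i}$ determined by the interior points. Hence the image of the canonical map is the projective toric variety associated to the point configuration $\mathrm{Int}(\Delta)\cap\mathbb{Z}^2$. Now I invoke hyperellipticity: the canonical image must be a rational normal curve (genus $0$, degree $g-1$ in $\mathbb{P}^{g-1}$). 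A toric variety attached to a finite set $S\subset\mathbb{Z}^2$ is a rational normal curve precisely when $S$ is, up to $SA(2,\mathbb{Z})$, a set of collinear points $\{(0,0),(1,0),\dots,(g-1,0)\}$ (this is the classical fact that the rational normal curve of degree $d$ is the toric variety of $\{0,1,\dots,d\}\subset\mathbb{Z}$); in particular the affine span of $S$ is one-dimensional, i.e. all interior points of $\Delta$ lie on a line. That is exactly the claim.

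The main obstacle I anticipate is handling the low-genus and degenerate cases cleanly, and making the "canonical map = monomial map on interior points" statement precise enough to be rigorous — one must be careful that the generic curve with Newton polygon $\Delta$ is actually smooth (or that its normalization behaves as expected) so that the genus/canonical-system dictionary applies, and one must treat $g=0,1$ separately (there the statement is vacuous or trivial since there are $\le 1$ interior points). A secondary subtlety: strictly speaking one wants the statement for \emph{all} curves of the Toda family, but since hyperellipticity of the generic member already forces the polygon, and the polygon is what the lemma is about, it suffices to argue for a generic (smooth) representative. An alternative, more elementary route that avoids toric canonical-map machinery would be: suppose the interior points are not collinear, exhibit an explicit sublattice polygon $\Delta'\subseteq\Delta$ (e.g.\ a triangle or quadrilateral spanned by three non-collinear interior points together with boundary structure) whose associated curve has a non-hyperelliptic degeneration, or directly count that the number of interior points forces genus $\ge 3$ with a plane model of degree $\ge 4$ having no $g^1_2$; but the toric canonical-map argument is the most conceptual and is the one I would write up.
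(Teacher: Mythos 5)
Your overall strategy --- identify the canonical map of $\mathcal{C}$ with the monomial map determined by the interior lattice points of $\Delta$, and play this off against the fact that the canonical map of a hyperelliptic curve of genus $\geq 2$ is degenerate (two-to-one onto a rational normal curve) --- is exactly the strategy of the paper's proof. However, the pivotal sentence ``the image of the canonical map is the projective toric variety associated to the point configuration $\mathrm{Int}(\Delta)\cap\mathbb{Z}^2$'' is not correct, and it is doing all the work. The canonical map is defined on the curve, so its image is one-dimensional, whereas the toric variety $X_S$ attached to a non-collinear configuration $S$ is a \emph{surface}: what is true is only that the canonical image equals $\Phi_S(\mathcal{C})$ and is therefore \emph{contained} in $X_S=\overline{\Phi_S((\mathbb{C}^\times)^2)}$. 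Since a rational normal curve can perfectly well lie on a two-dimensional toric variety, you cannot pass from ``the canonical image is a rational normal curve'' to ``$X_S$ is a rational normal curve'' and hence to collinearity; the false identification in effect assumes the conclusion. There is also a second, related soft spot: even granting that one wants to show $\Phi_S$ restricted to $\mathcal{C}$ is birational when $S$ is non-collinear, this requires knowing that the monomials indexed by $S$ separate points of the curve, which fails if the differences of points of $S$ only span a proper sublattice --- so some combinatorial input about $S$ beyond mere non-collinearity is needed.

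The repair is precisely what the paper does. If the interior points are not collinear, then (up to $SA(2,\mathbb{Z})$) they contain a triple $(a,b),(a+1,b),(a,b+1)$; this uses the elementary fact that the interior lattice points form a saturated set, so a non-collinear such set contains an empty, hence unimodular, lattice triangle. The ratios of the corresponding holomorphic differentials $dv_{a,b}=\lambda^{a-1}\mu^{b-1}d\mu/(\partial f_\Delta/\partial\lambda)$ are then literally the coordinate functions $\lambda$ and $\mu$, so the canonical map separates generic points of $\mathcal{C}$, i.e.\ is birational onto its image --- impossible for a hyperelliptic curve of genus $\geq 2$. Your closing remarks about genericity, smoothness and the low-genus cases are apt and would still be needed, and your sketched ``alternative elementary route'' gestures toward this fix, but as written the main argument has a genuine gap at its central step.
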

\begin{proof}
Suppose that there are three points inside the Newton polygon $\Delta$ with the coordinates $(a,b), (a+1,b), (a,b+1)$~\footnote{Or some $SL(2,\mathbb Z)$ image of this triple.}. Then the ratios of corresponding holomorphic 1-forms $dv_{a,b} = \frac{\lambda^{a-1}\mu^{b-1}d\mu}{\d f_\Delta/\d\lambda}$
\eq{
\frac{dv_{(a+1,b)}}{dv_{(a,b)}}=\lambda,\quad \frac{dv_{(a,b+1)}}{dv_{(a,b)}}=\mu
}
give just two coordinate functions $(\lambda,\mu)$. It means that the canonical map:
${\mathcal C}\to {\mathbb P}\lb H^{1,0}(\mathcal{C},\mathbb C)\rb$ is non-degenerate, what contradicts to the well-known fact that canonical map is
degenerate iff the curve is hyperelliptic.
\end{proof}

\begin{lemma} All points of the Newton polygon $\Delta$ of Toda family curve can be placed on three consecutive  horizontal lines.
\end{lemma}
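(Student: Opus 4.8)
The plan is to combine the previous lemma — which says all interior points of a Toda family curve $\Delta$ lie on a single line $\ell$ — with the constraint that $\Delta$ has only four boundary integer points, and then act by $SA(2,\mathbb{Z})$ to normalize the picture. First I would use $SL(2,\mathbb{Z})$ to rotate so that the line $\ell$ containing all interior points is horizontal, say $\ell = \{b = 0\}$; by a further translation I can assume the interior points are exactly $(1,0),(2,0),\dots,(N-1,0)$ for some $N \geq 2$. The claim then becomes: all of $\Delta$ lies in the horizontal strip $-1 \leq b \leq 1$, i.e. on the three consecutive lines $b=-1,0,1$.

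The key step is to show $\Delta$ cannot reach the line $b=2$ (and symmetrically $b=-2$). Suppose it did. Then, since $\Delta$ is convex and contains interior lattice points on $b=0$, the portion of $\Delta$ with $b \geq 1$ would be a sub-polygon of positive height at least $2$, and I would argue that it must then contain a lattice point strictly interior to $\Delta$ with $b = 1$ — contradicting the fact that all interior points have $b=0$. Concretely: if $\Delta$ has a vertex (or boundary point) with $b \geq 2$, consider the two edges of $\Delta$ emanating upward from the top of the $b=0$ level; convexity plus the presence of the full segment of interior points $(1,0),\dots,(N-1,0)$ forces the region at height $b=1$ to be a horizontal segment of lattice-length at least... and a careful bound shows at least one of those $b=1$ points is interior, not on the boundary. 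The symmetric argument handles $b \leq -2$. Hence $\Delta \subseteq \{-1 \leq b \leq 1\}$.

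A cleaner way to organize the key step, which I would actually prefer: count boundary points. On the line $b=0$, the polygon $\Delta$ intersects in a segment containing the interior points $(1,0),\dots,(N-1,0)$ plus possibly endpoints; the two endpoints of this segment on $b=0$ are on the boundary of $\Delta$. If $\Delta$ extends to $b \geq 2$, then the boundary of $\Delta$ must cross the line $b=1$ in two points, and I claim these contribute boundary lattice points or force interior lattice points on $b=1$ — in either case we either exceed four boundary points or contradict Lemma~\ref{hyperellipticity}. Since $\Delta$ already has at least the two endpoints on $b=0$ among its four boundary points, there is very little room, and a short case analysis on where the remaining boundary points sit (on $b=1$ and $b=-1$) pins everything into the three-line strip.

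The main obstacle I anticipate is the bookkeeping in the key step: convexity arguments about which lattice points on $b=\pm 1$ are forced to be interior versus boundary require care, especially near the endpoints of the $b=0$ segment where an edge of $\Delta$ can pass through a lattice point on $b=1$ making it a boundary rather than interior point. The cleanest route is probably to invoke that a lattice point $(x,1)$ of $\Delta$ is interior unless it lies on an edge of $\Delta$, and then note that $\Delta$ can have at most two edges meeting the line $b=1$ nontrivially, so at most the two extreme lattice points of $\Delta \cap \{b=1\}$ can be non-interior; if $\Delta \cap \{b=1\}$ had length $\geq 2$ we would get an interior point with $b=1$. Combined with the analogous statement for $b=-1$ and the fact that going to $b \geq 2$ forces $\Delta \cap \{b=1\}$ to have length at least $2$ (by convexity, since $\Delta \cap \{b=0\}$ has length $\geq N \geq 2$ and $\Delta$ is convex with points above $b=1$), this yields the contradiction and completes the proof.
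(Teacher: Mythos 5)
Your overall strategy coincides with the paper's: normalize by $SA(2,\mathbb{Z})$ so that the interior lattice points are $(1,0),\dots,(N-1,0)$, and derive a contradiction with Lemma~\ref{hyperellipticity} by producing an interior lattice point off the line $y=0$ whenever $\Delta$ reaches height $2$. The gap is in how you produce that point. Your quantitative claims do not hold: the segment $\Delta\cap\{y=0\}$ need not have length $\geq N$ (its lattice points only span $[1,N-1]$, of length $N-2$; for instance for the $L^{1,2N-1,2}$ polygon, suitably translated, this cross-section has length $N-\tfrac12$), and even granting length $L_0$ at height $0$, convexity only gives length at least $L_0(1-1/h)$ at height $1$ when coning to a point at height $h\ge 2$; for small $N$ this can be $\le 1$, in which case the relative interior of $\Delta\cap\{y=1\}$ is not forced to contain a lattice point and your argument yields nothing. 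The ``boundary counting'' variant has the same problem: the two points where $\partial\Delta$ crosses the line $y=1$ need not be lattice points at all, so no contradiction with the count of four boundary lattice points arises.

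The paper closes exactly this step with a parity trick that avoids all length estimates: given a lattice point $(a,2)\in\Delta$, one of the two midpoints $\bigl(\tfrac{a+1}{2},1\bigr)$, $\bigl(\tfrac{a+2}{2},1\bigr)$ of the segments joining $(a,2)$ to the interior points $(1,0)$ and $(2,0)$ is a lattice point, and it automatically lies in the interior of $\Delta$ (being strictly between an interior point and a point of $\Delta$), contradicting Lemma~\ref{hyperellipticity}. Note that this uses two adjacent interior points, hence $N>2$ (the $N=2$ case is quoted as known). I would replace your length estimates with this midpoint argument; as written, the key step of your proof is not established.
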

\begin{proof}
Using the action of $SA(2,\mathbb{Z})$ one can move all internal integer points to $(1,0)\ldots (N-1,0)$. Assume that there is at least one point with coordinates $(a,2)$ in $\Delta$. Then the triangle with the vertices  $\{(a,2),(1,0),(2,0)\}$ belongs to $\Delta$, this triangle contains integer boundary point $(\frac{a+1}2,1)$ or $(\frac{a+2}2,1)$, depending of parity of $a$. This point should be internal since $(1,0)$ and $(2,0)$ are internal, and this contradicts to
Lemma~\ref{hyperellipticity}. Therefore all points of $\Delta$ belong to three horizontal lines with $y=1$, $y=0$ and $y=-1$.
\end{proof}
Sometimes, for convenience, we rotate a polygon from the Toda family, so that it lie on three consecutive vertical lines.

\begin{thm}\label{th:polygons}
	The Newton polygon of a Toda family curve is $SA(2,\mathbb{Z})$ equivalent to one of the following polygons:
\eqs{
Y^{N,k}&=\{(0,0),(0,-1),(N-k,1),(N,0)\},\quad k=0,\ldots,N\\
L^{1,2N-1,2}&=\{(-1,-1),(0,-1),(0,1),(N-1,0)\}.
}
Here we specify a convex polygon by listing its boundary integer points.
\end{thm}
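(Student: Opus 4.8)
The plan is to build on the two preceding lemmas, which already reduce the problem to polygons whose integer points lie on three consecutive horizontal lines $y=-1,0,1$, with the interior points occupying exactly $(1,0),\dots,(N-1,0)$. So I must classify the possible configurations of boundary points on the lines $y=1$ and $y=-1$, together with the two endpoints on $y=0$.

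First I would observe that, since $(1,0),\dots,(N-1,0)$ are the \emph{only} interior points and they lie on $y=0$, the points of $\Delta$ on the line $y=0$ form a segment $[(p,0),(q,0)]$ with $p\le 0$ and $q\ge N$; and since the total number of boundary integer points is $4$, the line $y=0$ contributes at most two boundary points, so in fact $p=0$ and $q=N$ — otherwise an extra interior point on $y=0$ would appear. Thus the endpoints on $y=0$ are exactly $(0,0)$ and $(N,0)$, and the remaining two boundary points lie one on $y=1$ and/or one on $y=-1$ (they cannot both be on the same horizontal line, else that line would create an interior point on $y=0$ between them by convexity, or the polygon would be degenerate). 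Here I should be careful to enumerate the degenerate-looking cases (one of the two extra points missing) and argue they cannot happen, since a convex polygon needs at least three vertices and the four boundary points must include all vertices.

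Next, writing the point on $y=1$ as $(a,1)$ and the point on $y=-1$ as $(c,-1)$, I would use convexity plus the requirement that no further lattice points appear on the edges (the edge from $(a,1)$ to $(N,0)$, from $(N,0)$ to $(c,-1)$, etc.) and that the points $(1,0),\dots,(N-1,0)$ are genuinely \emph{interior}. The key arithmetic constraints are: the edge $(0,0)$–$(c,-1)$ and the edge $(0,0)$–$(a,1)$ must not pass strictly between through any lattice point other than their endpoints only if forced, while $(0,0)$ must be a vertex, so $c$ and $a$ cannot both be such that $(0,0)$ lies in the interior; this forces $a\ge 0$ and $c\le 0$, with at least one strict. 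Then I would normalize using $SA(2,\mathbb{Z})$: a shear fixing the line $y=0$ pointwise acts on $(a,1)$ by $a\mapsto a+t$ and on $(c,-1)$ by $c\mapsto c-t$, so I can set $c=0$, giving the point $(0,-1)$; then $(a,1)$ becomes $(N-k,1)$ for some integer $k$, and the constraints $0\le N-k\le N$ (so that $(1,0),\dots,(N-1,0)$ remain interior and the polygon stays convex with the right boundary count) yield $0\le k\le N$: this is the $Y^{N,k}$ family. The remaining case is when the shear cannot bring \emph{both} extra points onto $y=0$'s neighbours in the "generic" position — specifically when one of the slanted edges is forced to have length $2$ in the lattice sense, producing the exceptional polygon with vertices $(-1,-1),(0,-1),(0,1),(N-1,0)$, i.e. $L^{1,2N-1,2}$; I would identify this as exactly the configuration where the point on $y=-1$ and the point on $y=1$ sit over the same $x$-coordinate after normalization, forcing the extra vertex $(-1,-1)$.

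The main obstacle I anticipate is the careful bookkeeping of the interior-versus-boundary condition: I must verify in each candidate configuration that $(1,0),\dots,(N-1,0)$ are interior (not on an edge) and that \emph{no other} lattice point is interior — this is where Lemma~\ref{hyperellipticity} is used to exclude a point at height $\pm 1$ sitting over the interval $[1,N-1]$, and it constrains $a,c$ relative to $N$. A secondary subtlety is making the $SA(2,\mathbb{Z})$ reduction canonical: I need to check that the shears used are genuinely in $SL(2,\mathbb{Z})\ltimes\mathbb{Z}^2$ and that after using translations to put an interior point at $(1,0)$ and $SL(2,\mathbb{Z})$ to flatten the interior line, the only remaining freedom is the horizontal shear, so the two listed families are genuinely inequivalent and exhaust all cases. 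The hyperelliptic hypothesis itself enters only through the two lemmas, so the theorem's proof is essentially a finite combinatorial case analysis once those are in hand.
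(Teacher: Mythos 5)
Your overall strategy --- reduce to three horizontal lines via the two lemmas and then run a finite case analysis on where the four boundary lattice points sit --- is the same as the paper's. But there is a concrete error at the very first step of your case analysis, and it is precisely the step that separates the two families. You claim that the segment $\Delta\cap\{y=0\}$ has lattice endpoints, so that $(0,0)$ and $(N,0)$ are always among the four boundary points, leaving exactly one point on $y=1$ and one on $y=-1$. This does not follow: the endpoints of $\Delta\cap\{y=0\}$ lie on edges of $\Delta$ but need not be lattice points. Indeed, for $L^{1,2N-1,2}=\{(-1,-1),(0,-1),(0,1),(N-1,0)\}$ the segment $\Delta\cap\{y=0\}$ runs from $(-1/2,0)$ (where the edge joining $(0,1)$ to $(-1,-1)$ crosses the axis) to $(N-1,0)$, so the middle line carries only \emph{one} boundary lattice point, and the ``missing'' fourth point is the second point on the bottom line. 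Your premise therefore forces the $Y^{N,k}$ answer and makes the $L$ family invisible; the later attempt to recover it via an ``extra vertex $(-1,-1)$'' is inconsistent with the count of four boundary points, and your parenthetical claim that the two remaining points cannot lie on the same horizontal line is exactly what does happen in the $L$ case (two points at lattice distance $1$ on $y=-1$ create no extra boundary or interior points).

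The correct dichotomy, which is how the paper argues, is by pigeonhole: four boundary points on three lines means two of them share a line. If that shared line is the middle one, the two points must be $(0,0)$ and $(N,0)$ (any other lattice point of the segment is interior) and your shear normalization then correctly produces $Y^{N,k}$; if the shared line is the top or bottom one, the two points must be at unit distance (otherwise the edge between them carries extra boundary lattice points), which gives $L^{1,2N-1,2}$. To be more careful than the paper you should also dispose of the distributions with \emph{no} lattice point on the middle line (two on top and two on bottom, or three on one and one on the other): there the segment $\Delta\cap\{y=0\}$ has length $1$ with non-integer endpoints, forcing $N=2$, and the resulting polygons are $SA(2,\mathbb{Z})$-equivalent to $Y^{2,0}$ and $Y^{2,2}$ respectively; your framework cannot even see these configurations. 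The $Y^{N,k}$ normalization itself is fine, modulo invoking the reflection $Y^{N,k}\sim Y^{N,-k}$ to reduce the full range of the top point's abscissa to $0\le k\le N$.
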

\begin{proof}
Notice first that since all four boundary integer points should be placed on three lines, two of them necessarily belong to the same line. If this line is the line
of internal integer points, then two remaining points should lie above and below --- this leads to the $Y^{N,k}$ class. If this is the bottom (or top) line, the distance between these two points should be unit,
just to avoid extra boundary points, and this leads to the $L^{1,2N-1,2}$ class.
\end{proof}

According to \cite{GK}, \cite{FM:2014} any convex Newton polygon $\Delta$ defines a cluster integrable system on a Poisson $X$-cluster variety $\mathcal{X}$ of dimension $\dim\mathcal{X} = 2S$, where $S$ is an area of the polygon $\Delta$. The Poisson structure can be encoded by the quiver (oriented graph) $\mathcal{Q}$ with $2S$ vertices and
antisymmetric exchange matrix $\epsilon$, where $\epsilon_{ij}$ equals to the number of arrows from $i$-th to $j$-th vertex of $\mathcal{Q}$ minus number of arrows from $j$-th to $i$-th vertex of $\mathcal{Q}$.

Below we construct these quivers for all Newton polygons of the Toda family curves, following the general algorithm of \cite{GK}, i.e. for each polygon we present a Thurston diagram, a bipartite graph on torus, and a quiver.

\subsection{Thurston diagram and quivers for $Y^{N,k}$ systems} \label{ssec:Thurst}

According to the Goncharov-Kenyon algorithm, one starts with orienting all boundaries of the Newton polygon counterclockwise and considering
them as closed loops on torus $\mathbb{R}^2/\mathbb{Z}^2$. Then one has to deform these loops to certain smooth curves in order to get only
triple intersections with alternating orientations of the incoming curves.

\tikzset{
thurston1/.pic=
{\tikzset{
scale=0.5, font = \small,
styleFill/.style={fill,gray,opacity=0.2},
styleArrow/.style={postaction={decorate},decoration={markings,mark=at position 0.5 with {\arrow{Stealth[scale=1.5]}}}},
blackCircle/.style={fill=black,thick,radius=0.1,inner sep=0},
whiteCircle/.style={fill=white,thick,radius=0.1,inner sep=0},
}

\draw[dotted,scale=6] (0,0) grid (1,1);

\path[styleFill] (0,4)--(2,2)--(4,2)--(0,6);
\path[styleFill] (6,3) to [in=70,out=180] (4,2)--(6,2);
\path[styleFill] (4,2) to [in=-15,out=250] (2,2) -- (4,0)--(6,0);
\path[styleFill] (0,2) to (2,2) to [in=0,out=165] (0,3);
\path[styleFill] (6,4) to (6,6) to (4,6);

\draw[blue,styleArrow] (6,0)--(0,6);
\draw[red,styleArrow] (0,2)--(6,2);
\draw[olive,postaction=decorate,decoration={markings,mark=at position 0.65 with {\arrow{Stealth[scale=1.5]}}}] (0,4)--(4,0);
\draw[olive,styleArrow] (4,6)--(6,4);
\draw[orange,styleArrow] (6,3) to [in=70,out=180] (4,2) to [in=-15,out=250] (2,2) to [in=0,out=165] (0,3);

\draw[rounded corners=5,inner sep=0] (0,3.5) -- (0.2,3.5) --
(2,2) node (b1){}
(b1)--
(3.5,1.82) node (w1) {}
(w1)--
(4,2) node (b2) {}
(b2)--
(2,5) node (w2){}
(w2)--
(5.5,3.5)--(6,3.5)

(b2)--(5.5,1.5)--(6,1)
(0,1)--(1,0)
(1,6)--(w2)

(b1)--(2,0)
(2,6)--(w2);

\draw[whiteCircle] (w1) circle;
\draw[whiteCircle] (w2) circle;
\draw[blackCircle] (b1) circle;
\draw[blackCircle] (b2) circle;

}
}

\tikzset{thurston2/.pic={
\tikzset{
scale=0.5, font = \small,
styleFill/.style={fill,gray,opacity=0.2},
styleArrow/.style={postaction={decorate},decoration={markings,mark=at position 0.45 with {\arrow{Stealth[scale=1.5]}}}},
blackCircle/.style={fill=black,thick,radius=0.1,inner sep=0},
whiteCircle/.style={fill=white,thick,radius=0.1,inner sep=0}
}

\draw[dotted,scale=6] (0,0) grid (1,1);

\path[styleFill] (6,3) to [out=180,in=70] (4,2) to (6,2);
\path[styleFill] (6,0) to (4,2) to [in=45,out=-120](3,0);
\path[styleFill] (0,2) to (4,2) to (2,4) to [in=0,out=-90](0,3);
\path[styleFill] (2,4) to [in=135,out=15](6,4) to (6,6) to (3,6) to [out=-135,in=90] (2,4);
\path[styleFill] (2,4) to (0,6) to (0,4) to [in=195,out=-45] (2,4);

\draw[blue,styleArrow] (6,0)--(0,6);
\draw[red,styleArrow] (0,2)--(6,2);
\draw[olive,styleArrow] (0,4) to [in=195,out=-45] (2,4) to [in=135,out=15](6,4);
\draw[orange,postaction={decorate},decoration={markings,mark=at position 0.4 with {\arrow{Stealth[scale=1.5]}}}] (3,6) to [out=-135,in=90] (2,4) to [in=0,out=-90](0,3);
\draw[orange,styleArrow] (6,3) to [out=180,in=70] (4,2) to [in=45,out=-120](3,0);

\draw[rounded corners=5,inner sep=0] (0,3.5) -- (1.5,3.5) --
(2,4) node (b1){}
(b1)--
(4,3.5) node (w1) {}
(w1)--
(4,2) node (b2) {}
(b2)--
(1.5,1) node (w2){}
(w2)--(1.5,0)

(w1)--(6,3.5)
(0,1)--(0.2,0.8)--(w2)
(b2)--(5.5,1.5)--(6,1)

(b1)--(1.5,5.5)--(1.5,6);

\draw[whiteCircle] (w1) circle;
\draw[whiteCircle] (w2) circle;
\draw[blackCircle] (b1) circle;
\draw[blackCircle] (b2) circle;
}
}

\tikzset{thurston3/.pic={

\tikzset{
scale=0.5, font = \small,
styleFill/.style={fill,gray,opacity=0.2},
styleArrow/.style={postaction={decorate},decoration={markings,mark=at position 0.45 with {\arrow{Stealth[scale=1.5]}}}},
blackCircle/.style={fill=black,thick,radius=0.1,inner sep=0},
whiteCircle/.style={fill=white,thick,radius=0.1,inner sep=0},
}

\draw[dotted,scale=6] (0,0) grid (1,1);

\path[styleFill] (0,4) to [out=-45,in=90] (1.5,2) to (4,2) to (0,6);
\path[styleFill] (0,3) to [out=0] (1.5,2) to (0,2);
\path[styleFill] (1.5,2) to [out=-90,in=135] (2,0) to (3.5,0);
\path[styleFill] (4,2) to[out=-90,in=135] (4.5,0) to (6,0);
\path[styleFill] (2,6) to[in=90,out=-45] (4,2) to (6,2) to (6,3) to [out=180, in=-45] (3.5,6);
\path[styleFill] (4.5,6) to[out=-45,in=135] (6,4) to (6,6);

\draw[blue,styleArrow] (6,0)--(0,6);
\draw[red,styleArrow] (0,2)--(6,2);
\draw[orange,postaction={decorate},decoration={markings,mark=at position 0.65 with {\arrowreversed{Stealth[scale=1.5]}}}] (0,3) to [out=0] (1.5,2) to (3.5,0);
\draw[orange,postaction={decorate},decoration={markings,mark=at position 0.45 with {\arrowreversed{Stealth[scale=1.5]}}}] (3.5,6) to[in=180, out=-45] (6,3);

\draw[olive,styleArrow] (0,4) to [out=-45,in=90] (1.5,2) to [out=-90,in=135] (2,0);
\draw[olive,styleArrow] (2,6) to[in=90,out=-45] (4,2)
(4,2) to[out=-90,in=135] (4.5,0);
\draw[olive,styleArrow] (4.5,6) to[out=-45,in=135] (6,4);

\draw[inner sep=0] (0,3.5) to[out=0,in=140] (0.5,3.25) to [out=320,in=120] (1.5,2) node (b1) {}
to (3,1.5) node (w2) {}
to (4,2) node (b2){}
to[in=135,out=-22.5] (6,1)
(w2) to[in=135,out=-70] (4,0)
(4,6) to[out=-45,in=180] (6,3.5)
(1,6) to[out=-45,in=112.5] (b2)
(b1) to (0.5,0.5) node (w1) {}
(1,0) to (0,1);

\draw[whiteCircle] (w1) circle;
\draw[whiteCircle] (w2) circle;
\draw[blackCircle] (b1) circle;
\draw[blackCircle] (b2) circle;

}
}

\noindent\begin{figure}[h]

\noindent\begin{tabular}{cccc}

\begin{tikzpicture}[scale=0.7, font = \scriptsize]

\tikzmath{\N=6;\k=2;\Nk=int(\N-\k);}

\draw[dotted] (-1,0) grid (1,\N);

\begin{scope}[decoration={markings,mark=at position 0.55 with {\arrow{Stealth[scale=1.5]}}},thick]
\draw[red,postaction={decorate}] (0,0)--(1,0);\draw[blue,postaction={decorate}] (1,0)--(0,\N);
\draw[orange,postaction={decorate}] (0,\N)--(-1,\N-\k);\draw[olive,postaction={decorate}](-1,\N-\k)--(0,0);
\end{scope}

\foreach \x in {2,...,\N} \draw[fill] (0,\x-1) circle[radius=0.05];

\node at (0,\N+0.3) {(0,N)};
\node at (1.5,0) {(1,0)};
\node at (0,-0.4) {(0,0)};
\node at (-1.8,\N-\k) {(-1,N-k)};

\node[draw] at (0,-1) {N=\N, k=\k};

\end{tikzpicture}

&

\begin{tikzpicture}
\draw (0,0) (0,0.5) pic[scale=1.3]{thurston1};

\end{tikzpicture}

&

\begin{tikzpicture}
\draw (0,0) (0,0.5) pic[scale=1.3]{thurston2};

\end{tikzpicture}

&

\begin{tikzpicture}
\draw (0,0) (0,0.5) pic[scale=1.3]{thurston3};

\end{tikzpicture}

\end{tabular}
\caption{Newton polygon and building blocks for the Thurston diagrams: type 0, $\mathrm{I}$ and $-\mathrm{I}$ blocks.}
\label{fig:Thurston1}
\end{figure}
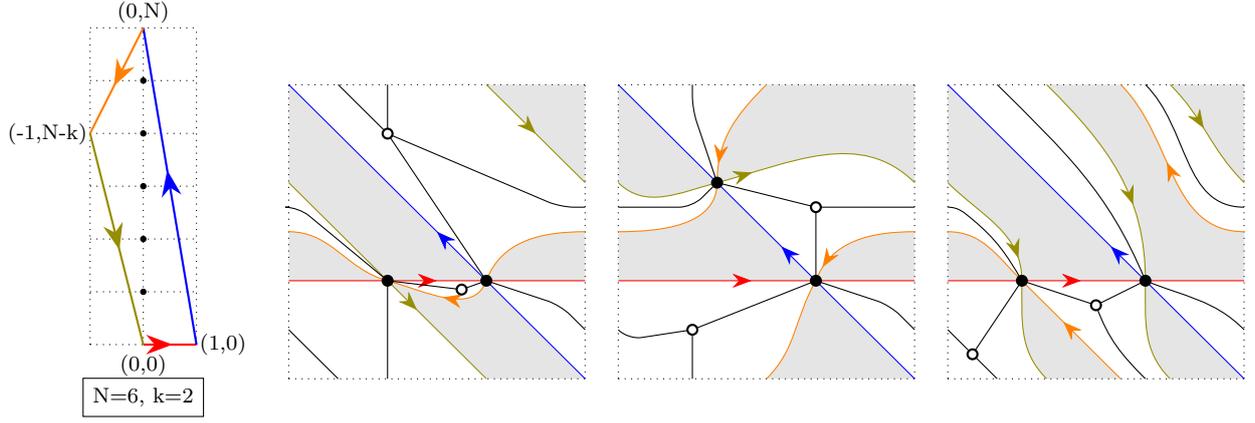
\begin{lemma}\label{Lem:YNkThurst}
The Thurston diagram on torus for $Y^{N,k}$ polygon can be given by $N_0$ type 0 blocks, $N_1$ type
$\mathrm{I}$ blocks and $N_{-1}$ type $-\mathrm{I}$ blocks from Fig.~\ref{fig:Thurston1}, where $N=N_0+N_1+N_{-1}$, $k=N_1-N_{-1}$. Heights of these blocks equal to $1$,
whereas their widths equal to $1/N$. The order of blocks can be arbitrary.
\end{lemma}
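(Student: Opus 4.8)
The plan is to build the diagram from the blocks and verify it, rather than to follow the Goncharov--Kenyon algorithm through its choices: concatenate $N$ of the three blocks of Fig.~\ref{fig:Thurston1} around the torus (each of width $1/N$ and height $1$), check that the result is a legitimate Thurston diagram, and compute its zig-zag homology classes; the lemma then follows because a reduced bipartite graph on the torus --- hence the cluster integrable system it carries --- is determined, up to the standard dimer moves, by the multiset of homology classes of its zig-zag paths, equivalently by its Newton polygon. As a preliminary I would record that all four boundary edges of $Y^{N,k}$ are primitive, so its Thurston diagram has exactly four zig-zag strands; that in the normalization of Fig.~\ref{fig:Thurston1} (the polygon on three vertical lines) their homology classes on $\mathbb{R}^2/\mathbb{Z}^2$ are the four edge vectors $(1,0)$, $(-1,N)$, $(1,k-N)$ and $(-1,-k)$, which sum to $0$; and that $Y^{N,k}$ has area $S=N$, so its quiver $\mathcal{Q}$ has $2N$ vertices. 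It is against this data that the assembled diagram must be matched.

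Next comes the local analysis of a single block. In each of the three pictures the red strand runs horizontally across the block while the blue strand makes one full vertical turn, crossing red once; the only difference between the types is in the way the olive--orange pair traverses the block. For each block I would check by inspection that: (i) every crossing is a triple point with alternating incoming and outgoing arrows, and the four strands enter and leave through points on the left and right vertical edges that do not depend on the block type; (ii) each block contributes the right local combinatorics for the concatenated graph to be \emph{minimal} (reduced), hence yielding exactly the $2N$ quiver vertices (two per block); and (iii) across one block the olive strand is displaced by $(1/N,-1)$, $(1/N,0)$, $(1/N,-2)$ for types $0$, $\mathrm{I}$, $-\mathrm{I}$, and the orange strand by $(-1/N,0)$, $(-1/N,-1)$, $(-1/N,1)$ respectively --- the two vertical components summing to $-1$ in every case, as they must since red, blue, olive, orange contribute $0,1,-1$ to the vertical winding per block and sum to $0$. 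The type $0$ block is the one tiling the Thurston diagram of the standard affine relativistic Toda chain attached to $Y^{N,0}$; type $\pm\mathrm{I}$ differ from it by a single local twist. The delicate points here are the two verifications (i) and (ii): that each block really is a valid fragment of a Thurston diagram (no two-valent vertices, only alternating triple intersections) and that the concatenation is reduced, so that the integrable system obtained is genuinely that of $Y^{N,k}$ and not some covering or degeneration of it. I expect this to be the bulk of the work.

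Assembly is then bookkeeping. By (i) the blocks glue along their vertical edges into a Thurston diagram on the torus for any choice of the $N$ types in any cyclic order, and by (iii) the zig-zag homology classes of the assembled diagram are additive over the blocks, so the cyclic order is immaterial; if $N_0$, $N_1$, $N_{-1}$ of the blocks have type $0$, $\mathrm{I}$, $-\mathrm{I}$, then $N_0+N_1+N_{-1}=N$, and the four classes come out $(1,0)$, $(-1,N)$, $(1,-N_0-2N_{-1})$ and $(-1,-N_1+N_{-1})$. Comparing with the preliminary list and using $N_0+N_1+N_{-1}=N$, the olive class equals $(1,k-N)$ and the orange class equals $(-1,-k)$ if and only if $N_1-N_{-1}=k$. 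Hence for every $(N_0,N_1,N_{-1})$ with $N_0+N_1+N_{-1}=N$ and $N_1-N_{-1}=k$, and for every cyclic order, the assembled diagram is a Thurston diagram for $Y^{N,k}$ --- consistently with what the Goncharov--Kenyon algorithm produces when run directly on $Y^{N,k}$.
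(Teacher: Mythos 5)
Your proposal is correct and follows essentially the same route as the paper: the paper's proof likewise lists the per-block homology contributions of the four strands (exactly the displacement vectors you compute for types $0$, $\mathrm{I}$, $-\mathrm{I}$), observes that $N_0c^0+N_1c^{\mathrm{I}}+N_{-1}c^{-\mathrm{I}}$ reproduces the boundary classes of $Y^{N,k}$ precisely when $N_0+N_1+N_{-1}=N$ and $N_1-N_{-1}=k$, and concludes because each block locally satisfies the Thurston requirements. You merely make explicit two points the paper leaves implicit --- that matching the multiset of zig-zag homology classes suffices to identify the Newton polygon, and that the concatenation is minimal --- which is a reasonable elaboration rather than a different argument.
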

\begin{proof}
First let us list the homology classes of closed cycles (see Fig.~\ref{fig:Thurston1}):
\eq{
	C^{N,k}=\{\color{blue}(-1,N),\color{orange}(-1,-k),\color{olive}(1,k-N),\color{red}(1,0)\color{black}\}
}
and compare them with homology classes corresponding to the building blocks, $c^0$, $c^\mathrm{I}$ and $c^{-\mathrm{I}}$:
\eqs{
	 c^0&=\{\color{blue}(-\frac1N,1),\color{orange}(-\frac1N,0),\color{olive}(\frac1N,-1),\color{red}(\frac1N,0)\color{black}\}\,,\\
	 c^{\mathrm{I}}&=\{\color{blue}(-\frac1N,1),\color{orange}(-\frac1N,-1),\color{olive}(\frac1N,0),\color{red}(\frac1N,0)\color{black}\}\,,\\
	c^{-\mathrm{I}}&=\{\color{blue}(-\frac1N,1),\color{orange}(-\frac1N,	 1),\color{olive}(\frac1N,-2),\color{red}(\frac1N,0)\color{black}\}\,.
}
One obviously gets $N_0c^0+N_1c^{\mathrm{I}}+N_{-1}c^{-\mathrm{I}}=C^{N,k}$. Since each of the blocks satisfies all requirements to the Thurston diagrams it is enough to check this equality in homologies.\end{proof}

\begin{Remark}
	If $N_{-1}>N_{1}$, one gets the Thurston diagram for $Y^{N,k}$ with negative $k$. But polygon $Y^{N,k}$ is $SA(2,\mathbb{Z})$ equivalent to $Y^{N,-k}$, therefore we restrict ourselves below to the case of non-negative $k$.
\end{Remark}

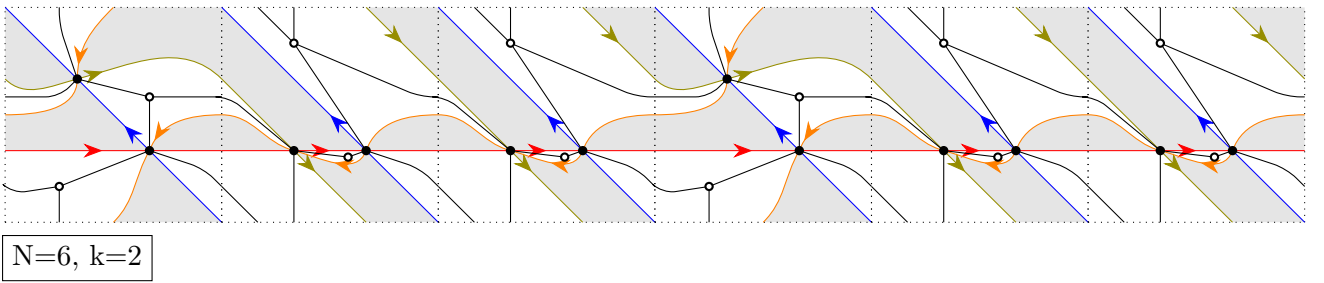
\begin{figure}[h!]
\begin{center}
\begin{tikzpicture}
\tikzmath{
int \N,\k,\Nk; \N=6; \k=2; \Nk=\N-\k;
\Lx=\N;\Ly=1;
\scale=0.95;}

\begin{scope}[scale=\scale]

\tikzmath{
for \z in {0,...,\Ly-1}{
\j=0;
for \i in {-1,...,\Lx-2}{
  if ((\i+1)*\k/\N)>=\j then{
    \j=\j+1;
    {\draw  (\i*3,\z*3) pic[scale=\scale] {thurston2};};
  } else{
    {\draw  (\i*3,\z*3) pic[scale=\scale] {thurston1};};};
  };
};
}

\node[draw] at (-2,-0.5) {N=\N, k=\k};

\end{scope}
\end{tikzpicture}
\end{center}
\caption{Example of Thurston diagram for $(N,k)=(6,2)$. Notice that interested reader can
download the source file from \texttt{arXiv} and get more examples just modifying the definitions of the variables \textbackslash N and \textbackslash k in the Ti\emph{k}Z code.}
\label{Thurston2}
\end{figure}

The next step is to construct a bipartite graph. It can be drawn in the following way: the white vertices are placed in the white-colored faces (with the clockwise orientation), the black vertices are placed in the triple intersections, and the edges are drawn between the neighboring white and black vertices (belonging to the same white face). After the graph is drawn one may erase the valence two vertices and shrink corresponding edges.

As well as Thurston diagrams, the bipartite graphs can be constructed from the building blocks from
Fig.~\ref{fig:Thurston1}: one example of gluing is shown in Fig.~\ref{Thurston2}. The gluing rules are more transparently shown at Fig.~\ref{fig:blocks1}, where
the Thurston diagrams are omitted, but instead the faces are shown --- together with the arrows of the quiver and labels ``$\times$'' or ``$+$''. The arrows encode Poisson structure for the cluster variables attached to
faces of bipartite graph on torus. The corresponding quiver is constructed in the following way: one draws $n$ arrows clockwise around valence $n$ black vertices and then counts the total
number of arrows connecting all pairs of different faces. The labels ``$\times$'' or ``$+$'' are put for
correct gluing, giving rise to a bipartite graph --- one should just connect the labels of the same type.

\tikzset{cross/.pic={
\draw[line width=2.5*#1,scale=#1](-0.1,-0.1)--(0.1,0.1)
(-0.1,0.1)--(0.1,-0.1);
}
}

\tikzset{dimer1/.pic={
\tikzset{
scale=0.5, font = \small,thick,
styleArrow/.style={postaction={decorate},decoration={markings,mark=at position 0.45 with {\arrow{Stealth[scale=1.5]}}}},
blackCircle/.style={fill=black,thick,radius=0.2,inner sep=0},
whiteCircle/.style={fill=white,thick,radius=0.2,inner sep=0}
}

\begin{scope}[yshift=-3cm,thick]

\draw [rounded corners=0,inner sep=0] (2,0)--(2,3) node(b){}--(2,7) node(w){}--(2,8)
(0,5)--
(b)--
(4,1)
(1,8)--(w)
(w)--(3,6)--
(4,5)
(0,1) to 
(1,0);

\draw[whiteCircle,fill=white] (w) circle;
\draw[blackCircle] (b) circle;
\end{scope}

}
}

\tikzset{dimer2/.pic={

\tikzset{
scale=0.5,
font = \small,
styleArrow/.style={postaction={decorate},decoration={markings,mark=at position 0.45 with {\arrow{Stealth[scale=1.5]}}}},
blackCircle/.style={fill=black,thick,radius=0.2,inner sep=0},
whiteCircle/.style={fill=white,thick,radius=0.2,inner sep=0},
}
\begin{scope}[yshift=-1cm,thick]


\draw [rounded corners=5,inner sep=0]
(0,3)--(1,2) node(w1){}--(3,0) node(b1){}--(4,-1)
(0,7)--(1,6) node(b2){}--(3,4) node(w2){}--(4,3)
(1,0)--(w1)
(b1)--(w2)
(b2)--(1,8);

\draw[whiteCircle] (w1) circle;
\draw[blackCircle] (b1) circle;
\draw[whiteCircle] (w2) circle;
\draw[blackCircle] (b2) circle;
\end{scope}

}
}

\tikzset{dimer3/.pic={
		
		\tikzset{
			scale=0.5,
			font = \small,
			styleArrow/.style={postaction={decorate},decoration={markings,mark=at position 0.45 with {\arrow{Stealth[scale=1.5]}}}},
			blackCircle/.style={fill=black,thick,radius=0.2,inner sep=0},
			whiteCircle/.style={fill=white,thick,radius=0.2,inner sep=0},
		}
		\begin{scope}[yshift=-1cm,thick]
			
			
			\draw [rounded corners=5,inner sep=0,yslant=-2]
			(0,3)--(-1,2) node(w1){}--(-3,0) node(b1){}--(-4,-1)
			(0,7)--(-1,6) node(b2){}--(-3,4) node(w2){}--(-4,3)
			(-1,0)--(w1)
			(b1)--(w2)
			(b2)--(-1,8);
			
			\draw[blackCircle] (w1) circle;
			\draw[whiteCircle] (b1) circle;
			\draw[blackCircle] (w2) circle;
			\draw[whiteCircle] (b2) circle;
		\end{scope}
		
	}
}

\tikzset{arrows1/.pic={
\tikzset{
scale=0.5, font = \small,
styleArrow/.style={blue,postaction={decorate},decoration={markings,mark=at position 0.6 with {\arrow{Stealth[scale=1.5]}}}},
styleLine/.style={blue}}
\tikzmath{\offset=0.25cm;}

\draw[styleArrow,shorten <=\offset,shorten >=\offset](0,0)..controls(0.2,2)..(0,4);
\draw[styleArrow,shorten <=\offset,shorten >=\offset](0,4)--(4,0);
\draw[styleArrow,shorten <=\offset](4,0) to [in=90,out=-100](3.9,-3);
\draw[styleLine,shorten >=\offset](3.9,5) to [out=-90,in=90](4,4);
\draw[styleLine,shorten <=\offset](4,4)--(3,5);
\draw[styleArrow,shorten >=\offset] (3,-3)--(0,0);

}
}

\tikzset{vertices1/.pic={
\tikzset{font = \small,
styleArrow/.style={blue,postaction={decorate},decoration={markings,mark=at position 0.6 with {\arrow{Stealth[scale=1.5]}}}},
styleLine/.style={blue}}
\tikzmath{\offset=0.25cm;}

\tikzset{scale=#1}

  \draw(0,0) pic[magenta,rotate=45]{cross=#1};
  \draw(0,2) pic[teal]{cross=#1};
  \draw(2,0) pic[magenta,rotate=45]{cross=#1};
  \draw(2,2) pic[teal]{cross=#1};

}
}

\tikzset{arrows2/.pic={

\begin{scope}[yshift=2cm]

\tikzset{
scale=0.5, font = \small,
styleArrow/.style={blue,postaction={decorate},decoration={markings,mark=at position 0.6 with {\arrow{Stealth[scale=1.5]}}},shorten <=\offset,shorten >=\offset},
styleLine/.style={blue}}
\tikzmath{\offset=0.25cm;}

\draw[styleArrow](0,0)..controls(0.2,2)..(0,4);
\draw[styleArrow](0,4)--(4,0);
\draw[styleArrow](4,0)..controls(2,-0.3)..(0,0);

\draw[styleArrow](4,-8)--(0,0);
\draw[styleArrow](4,-4)..controls(3.8,-6)..(4,-8);
\draw[styleArrow] (0,0)--(4,-4);

\end{scope}
}
}

\tikzset{vertices2/.pic={

\tikzset{
font = \small,
styleArrow/.style={blue,postaction={decorate},decoration={markings,mark=at position 0.6 with {\arrow{Stealth[scale=1.5]}}},shorten <=\offset,shorten >=\offset},
styleLine/.style={blue}}
\tikzmath{\offset=0.25cm;}

\tikzset{scale=#1}

\draw(0,0) pic[teal]{cross=#1};
\draw(0,2) pic[magenta,rotate=45]{cross=#1};
\draw(2,2) pic[teal]{cross=#1};
\draw(2,0) pic[magenta,rotate=45]{cross=#1};

}
}

\tikzset{vertices3/.pic={

		\tikzset{
			font = \small,
			styleArrow/.style={blue,postaction={decorate},decoration={markings,mark=at position 0.6 with {\arrow{Stealth[scale=1.5]}}},shorten <=\offset,shorten >=\offset},
			styleLine/.style={blue},
                        }
		\tikzmath{\offset=0.25cm;}

\tikzset{scale=#1}
		
		\draw[teal](0,4) pic{cross=#1};
		\draw[magenta](0,2) pic[rotate=45]{cross=#1};
		\draw(-2,2) pic[teal]{cross=#1};
		\draw(-2,4) pic[magenta,rotate=45]{cross=#1};

	}
}

\tikzset{arrows3/.pic={
		
		\begin{scope}[yshift=2cm]
			
			\tikzset{
				scale=0.5, font = \small,
				styleArrow/.style={blue,postaction={decorate},decoration={markings,mark=at position 0.6 with {\arrow{Stealth[scale=1.5]}}},shorten <=\offset,shorten >=\offset},
				styleLine/.style={blue},
                                yslant=-2}
			\tikzmath{\offset=0.25cm;}
			
			\draw[styleArrow](0,0)..controls(-0.2,2)..(-0,4);
			\draw[styleArrow](0,4)--(-4,0);
			\draw[styleArrow](-4,0)..controls(-2,-0.3)..(0,0);
			
			\draw[styleArrow](-4,-8)--(0,0);
			\draw[styleArrow](-4,-4)..controls(-3.8,-6)..(-4,-8);
			\draw[styleArrow] (0,0)--(-4,-4);
			
		\end{scope}
	}
}

\begin{figure}[h!]
\begin{center}
\begin{tikzpicture}

\draw (0,-2) pic[scale=0.8]{dimer1};
\draw (0,-2) pic[scale=0.8]{arrows1};
\draw[ultra thick] (0,-2) pic{vertices1=0.8};

\draw (4,-2) pic[scale=0.8]{dimer2};
\draw (4,-2) pic[scale=0.8]{arrows2};
\draw[ultra thick] (4,-2) pic{vertices2=0.8};

\draw (10,-3.6) pic[scale=0.8]{dimer3};
\draw (10,-3.6) pic[scale=0.8]{arrows3};

\draw[ultra thick] (10,-3.6) pic{vertices3=0.8};

\end{tikzpicture}
\end{center}

\caption{Building blocks for the bipartite graph: type 0, type $\mathrm{I}$ and type $-\mathrm{I}$.}
\label{fig:blocks1}
\end{figure}
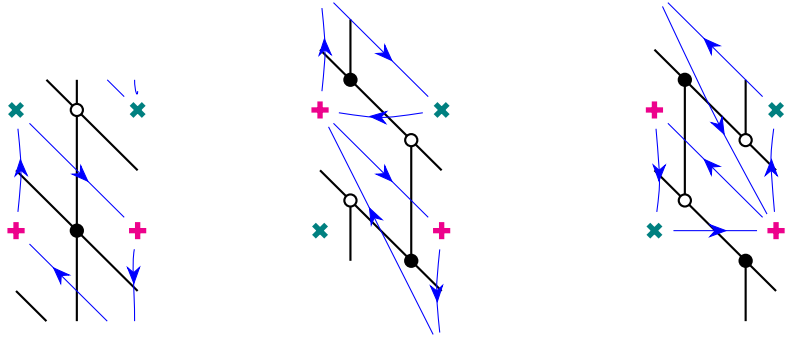

 Consider now Fig.~\ref{fig:blocks2} with the building blocks for the Poisson quiver, taken from Fig.~\ref{fig:blocks1}. It will be convenient below to draw vertices in the lattice $\mathbb{Z}^2\subset \mathbb{R}^2$, such that four vertices of block 0 form a square, while four vertices for the blocks $\mathrm{I}$ and $-\mathrm{I}$ in Fig.~\ref{fig:blocks2} form  parallelograms. Then sequence of $N$ blocks on a torus that we need belongs to a strip of (horizontal) size $N$, and we extend it to infinite $N$-periodic sequence on a whole plane. Both sets of ``$+$'' (top) and ``$\times$'' (bottom) vertices of a quiver form a path with the steps $(1,0)$, $(1,1)$ and $(1,-1)$, sometimes called grand Motzkin path. Since $N_0(1,0)+N_1(1,1)+N_{-1}(1,-1)=(N,k)$ (in notations of the Lemma~\ref{Lem:YNkThurst}), we see that the quiver corresponding to $Y^{N,k}$ polygon should be periodic with the period, given by vector $(N,k)$.
\tikzset{crossFilled/.pic={
    \tikzset{scale=0.1*#1}
		\fill[white] circle[radius=2];
		\draw[line width=2.5*#1](-1,-1)--(1,1)
		(-1,1)--(1,-1);
	}
}

\tikzset{quiver1/.pic={
		\tikzset{styleArrow/.style={blue,postaction={decorate},decoration={markings,mark=at position 0.5 with {\arrow{Stealth[scale=1.5*#1]}}}}}

                \tikzset{scale=#1}

		\draw [styleArrow,bend left] (0,2) to (0,0);
		\draw [styleArrow,bend right] (2,2) to (2,0);
		\draw [styleArrow] (0,0) to (2,2);
		\draw [styleArrow] (2,0) to (0,2);

		\draw (0,0) pic[teal]{crossFilled=#1}
		(2,0) pic[teal]{crossFilled=#1}
		(2,2) pic[rotate=45,magenta,pic actions]{crossFilled=#1}
		(0,2) pic[rotate=45,magenta,pic actions]{crossFilled=#1};
	}
}

\tikzset{quiver2/.pic={
		\tikzset{styleArrow/.style={blue,postaction={decorate},decoration={markings,mark=at position 0.5 with {\arrow{Stealth[scale=1.5*#1]}}}}}
		
                \tikzset{scale=#1}		

		\draw [styleArrow,bend left] (0,2) to (0,0);
		\draw [styleArrow,bend right] (2,4) to (2,2);
		\draw [styleArrow] (0,0) to (2,2);
		\draw [styleArrow,bend left] (2,2) to (0,2);
		\draw [styleArrow,bend right] (2,2) to (0,2);
		\draw [styleArrow] (0,2) to (2,4);

		\draw (0,0) pic[teal]{crossFilled=#1}
		(2,2) pic[teal]{crossFilled=#1}
		(2,4) pic[rotate=45,magenta,pic actions]{crossFilled=#1}
		(0,2) pic[rotate=45,magenta,pic actions]{crossFilled=#1};
		
	}
}

\tikzset{quiver3/.pic={
		\tikzset{styleArrow/.style={blue,postaction={decorate},decoration={markings,mark=at position 0.5 with {\arrow{Stealth[scale=1.5*#1]}}}}}
		
                \tikzset{scale=#1}
		
		\draw [styleArrow,bend left] (0,4) to (0,2);
		\draw [styleArrow,bend right] (2,2) to (2,0);
		\draw [styleArrow] (2,0) to (0,2);
		\draw [styleArrow,bend left] (0,2) to (2,2);
		\draw [styleArrow,bend right] (0,2) to (2,2);
		\draw [styleArrow] (2,2) to (0,4);
		\draw (0,2) pic[teal]{crossFilled=#1}
		(2,0) pic[teal]{crossFilled=#1}
		(2,2) pic[rotate=45,magenta]{crossFilled=#1}
		(0,4) pic[rotate=45,magenta]{crossFilled=#1};
		
	}
}

\begin{figure}[h!]
	\begin{center}
		
		\begin{tikzpicture}
		    \draw (0,0) pic{quiver1=0.8};
		    \draw (5,0) pic{quiver2=0.8};
		    \draw (10,0) pic{quiver3=0.8};
		\end{tikzpicture}
		
	\end{center}
	\caption{Building blocks for quiver: type 0, type $\mathrm{I}$ and type $-\mathrm{I}$.}
	\label{fig:blocks2}
\end{figure}
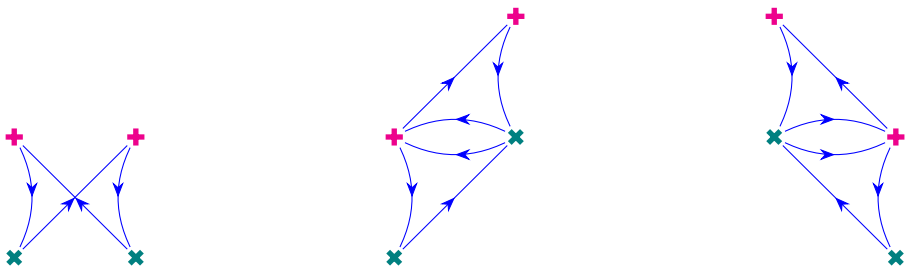

Denote by $\mathcal{Q}^{N,k}$ the quiver corresponding to $Y^{N,k}$. This quiver is not unique and depends on the concrete choice of the numbers $N_0, N_1, N_{-1}$ and order of the blocks. Our construction can be summarizes in the following:
\begin{lemma}\label{Lem:YNk:quiver}
	The vertices of the quiver $\mathcal{Q}^{N,k}$  form a polyline strip on the lattice $\mathbb{Z}^2$ with heigth $1$ and period $(N,k)$. This strip consist of the blocks presented at Fig.~\ref{fig:blocks2}.
\end{lemma}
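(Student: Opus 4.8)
The plan is to read the quiver directly off the block decomposition of the Thurston diagram supplied by Lemma~\ref{Lem:YNkThurst}, after checking that the two remaining steps of the Goncharov--Kenyon algorithm (passing to the bipartite graph, then to the quiver) are compatible with that decomposition.

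First I would argue that the bipartite graph itself decomposes along the block boundaries. By Lemma~\ref{Lem:YNkThurst} the Thurston diagram on the torus is a concatenation of $N$ blocks of types $0,\mathrm{I},-\mathrm{I}$, each of width $1/N$ and height $1$, glued side by side. Inspecting the three pictures in Fig.~\ref{fig:Thurston1} one sees that every triple intersection point lies strictly inside a single block, so the black vertices are block-local; and every white face is either contained in one block or straddles exactly one vertical block boundary, in which case the two halves carry matching labels ``$+$'' (top) or ``$\times$'' (bottom), as recorded in Fig.~\ref{fig:blocks1}. Hence the whole procedure --- placing white vertices in white faces, black vertices at triple points, joining neighbours, and contracting $2$-valent vertices --- can be carried out block by block, and the resulting bipartite graph is the concatenation of the three building blocks of Fig.~\ref{fig:blocks1} in the same cyclic order as the Thurston blocks.

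Next I would pass to the quiver. The rule ``draw $n$ arrows clockwise around each valence-$n$ black vertex, then sum over all black vertices and cancel $2$-cycles'' is again local in the black vertices, and the only quiver vertices shared between consecutive blocks are the boundary faces labelled ``$+$'' and ``$\times$''. Therefore $\mathcal{Q}^{N,k}$ is glued from the three quiver blocks of Fig.~\ref{fig:blocks2} in the prescribed order. Drawing the vertices in $\mathbb{Z}^2$ as in that figure, each block contributes exactly one new ``$+$'' vertex to the top path and one new ``$\times$'' vertex to the bottom path, the two paths running parallel at vertical distance $1$, and a block of type $0$, $\mathrm{I}$, $-\mathrm{I}$ advances each path by the horizontal vector $(1,0)$, $(1,1)$, $(1,-1)$ respectively. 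This is precisely the ``polyline strip on $\mathbb{Z}^2$ of height $1$'' in the statement. For the period: since the quiver lives on the torus and is obtained from a cyclic word of $N$ blocks, $N_0$ of type $0$, $N_1$ of type $\mathrm{I}$, $N_{-1}$ of type $-\mathrm{I}$, its lift to the plane is invariant under translation by the total displacement of one period, namely $N_0(1,0)+N_1(1,1)+N_{-1}(1,-1)=(N_0+N_1+N_{-1},\,N_1-N_{-1})=(N,k)$, by Lemma~\ref{Lem:YNkThurst}.

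The only point requiring genuine care is the locality claim --- that contracting $2$-valent vertices and counting arrows never produces an edge, or an arrow, joining faces of non-adjacent blocks. This is the main (though entirely elementary) verification, and it is settled by direct inspection of the three building blocks in Fig.~\ref{fig:blocks1}: the edges emanating from a black vertex stay within its block, and the faces which are cut by a block boundary are exactly the ones marked ``$+$'' and ``$\times$'', so the gluing data is carried entirely by those marked faces.
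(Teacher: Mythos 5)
Your proposal is correct and follows essentially the same route as the paper: the lemma there is stated as a summary of the block-by-block construction (Thurston diagram from Lemma~\ref{Lem:YNkThurst}, then bipartite graph, then quiver), with the period read off from $N_0(1,0)+N_1(1,1)+N_{-1}(1,-1)=(N,k)$ exactly as you do. The only difference is that you spell out the locality check (black vertices and their arrows staying inside a block, boundary faces carrying the ``$+$''/``$\times$'' gluing data), which the paper delegates to inspection of Figs.~\ref{fig:blocks1} and~\ref{fig:blocks2}.
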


Note that the same bipartite graphs on torus and quivers, corresponding to $Y^{N,k}$ polygons, were obtained in \cite{Hanany1} by different method.

\begin{Remark}
In the approach of \cite{FM:2014} the Poisson $X$-cluster varieties are realized as double Bruhat cells in the loop group $\widehat{SL}(M)$. The corresponding cells are labeled by elements of co-extedned  double affine Weyl group $(\mathbb{Z}/M\mathbb{Z})\ltimes \widehat{W}_M\times \widehat{W}_M$. It appears that $X$-cluster Poisson varieties, corresponding to quivers given in Lemma~\ref{Lem:YNk:quiver}, can be realized in the loop group $\widehat{SL}(2)$, so below we consider only the case $M=2$. This goes back to Faddeev-Takhtajan approach to the Toda chains~\cite{FT}. Another natural choice would be $M=N$.

As in \cite{FM:2014}, we denote by $s_0,s_1,s_{\overline 0}, s_{\overline 1}$ the generators of $\widehat{W}_2\times \widehat{W}_2$, and by  $\Uplambda$ --- the generator of $\mathbb{Z}/2\mathbb{Z}$ with relations
\be
s_is_{\overline j}=s_{\overline j}s_i,\quad \Uplambda s_i=s_{i+1}\Uplambda,\; \Uplambda s_{\overline i}=s_{\overline{i+1}}\Uplambda,\quad \Uplambda^2=s_i^2=s_{\overline i}^2=1\,.
\ee
There is a straightforward way to reconstruct $X$-cluster Poisson variety from an element of the co-extended  double affine Weyl group. The vertices of corresponding quiver are drawn on two parallel lines on cylinder. To each generator $s_i$ one assigns a triangle as at Fig.~\ref{fig:Weyl}, while the generator $\Uplambda$ corresponds to the entanglement of the lines.
\begin{figure}[h!]
	\begin{center}
		{\begin{tabular}{c c c c c}
				\begin{tikzpicture}[scale=1, font = \small]
				\draw[fill] (0,cos{30}) circle (1pt);
				\draw[fill] (-0.5,0) circle (1pt);
				\draw[fill] (0.5,0) circle (1pt);
				\path 	
				(-0.5,0) edge[mid arrow] (0.5,0)
				(0.5,0) edge[mid arrow] (0,cos{30})
				(0,cos{30}) edge[mid arrow] (-0.5,0);
				\draw[ultra thin] (-0.7,0) -- (0.7,0);
				\draw[ultra thin] (-0.7,cos{30}) -- (0.7,cos{30});	
				\end{tikzpicture}
				&
				\begin{tikzpicture}[scale=1, font = \small]
				\draw[fill] (0,cos{30}) circle (1pt);
				\draw[fill] (-0.5,0) circle (1pt);
				\draw[fill] (0.5,0) circle (1pt);
				\path 	
				(0.5,0) edge[mid arrow] (-0.5,0)
				(-0.5,0) edge[mid arrow] (0,cos{30})
				(0,cos{30}) edge[mid arrow] (0.5,0);
				\draw[ultra thin] (-0.7,0) -- (0.7,0);
				\draw[ultra thin] (-0.7,cos{30}) -- (0.7,cos{30});	
				\end{tikzpicture}
				&
				\begin{tikzpicture}[scale=1, font = \small]
				\draw[fill] (0,0) circle (1pt);
				\draw[fill] (-0.5,cos{30}) circle (1pt);
				\draw[fill] (0.5,cos{30}) circle (1pt);
				\path 	
				(-0.5,cos{30}) edge[mid arrow] (0.5,cos{30})
				(0.5,cos{30}) edge[mid arrow] (0,0)
				(0,0) edge[mid arrow] (-0.5,cos{30});
				\draw[ultra thin] (-0.7,0) -- (0.7,0);
				\draw[ultra thin] (-0.7,cos{30}) -- (0.7,cos{30});	
				\end{tikzpicture}
				&
				\begin{tikzpicture}[scale=1, font = \small]
				\draw[fill] (0,0) circle (1pt);
				\draw[fill] (-0.5,cos{30}) circle (1pt);
				\draw[fill] (0.5,cos{30}) circle (1pt);
				\path 	
				(0.5,cos{30}) edge[mid arrow] (-0.5,cos{30})
				(-0.5,cos{30}) edge[mid arrow] (0,0)
				(0,0) edge[mid arrow] (0.5,cos{30});
				\draw[ultra thin] (-0.7,0) -- (0.7,0);
				\draw[ultra thin] (-0.7,cos{30}) -- (0.7,cos{30});	
				\end{tikzpicture}
 		        &
\begin{tikzpicture}[scale=1, font = \small]
				\draw[ultra thin] (-0.2,0) to[in=180,out=0] (1.2,cos{30});
				\draw[ultra thin] (-0.2,cos{30}) to[in=180,out=0] (1.2,0);	
				\end{tikzpicture}
				\\
				$s_0$ & $s_{\bar{0}}$  & $s_1$ & $s_{\bar{1}}$ & $\Uplambda$
		\end{tabular}}
	\end{center}
	\begin{center}
		{\begin{tabular}{c c c c c c}
				\begin{tikzpicture}[scale=1, font = \small]
				\draw[fill] (0,0) circle (1pt);
				\draw[fill] (0,1) circle (1pt);
				\draw[fill] (1,0) circle (1pt);
				\draw[fill] (1,1) circle (1pt);		
				\path 	
				(0,0) edge[mid arrow] (1,0)
				(1,0) edge[mid arrow] (1,1)
				(1,1) edge[mid arrow] (0,1)
				(0,1) edge[mid arrow] (0,0);
				\draw[ultra thin] (-0.2,0) -- (1.2,0);
				\draw[ultra thin] (-0.2,1) -- (1.2,1);	
				\end{tikzpicture}
				&
				\begin{tikzpicture}[scale=1, font = \small]
				\draw[fill] (0,0) circle (1pt);
				\draw[fill] (0,1) circle (1pt);
				\draw[fill] (1,0) circle (1pt);
				\draw[fill] (1,1) circle (1pt);		
				\path 	
				(1,0) edge[mid arrow] (0,0)
				(1,1) edge[mid arrow] (1,0)
				(0,1) edge[mid arrow] (1,1)
				(0,0) edge[mid arrow] (0,1);
				\draw[ultra thin] (-0.2,0) -- (1.2,0);
				\draw[ultra thin] (-0.2,1) -- (1.2,1);	
				\end{tikzpicture}
				&
				\begin{tikzpicture}[scale=1, font = \small]
				\draw[fill] (0,0) circle (1pt);
				\draw[fill] (0,1) circle (1pt);
				\draw[fill] (1,0) circle (1pt);
				\draw[fill] (1,1) circle (1pt);		
				\path 	
				(0,0) edge[mid arrow] (1,0)
				(1,1) edge[mid arrow] (1,0)
				(0,1) edge[mid arrow] (1,1) edge[mid arrow] (0,0)
				(1,0) edge[mid arrow,bend right=20] (0,1) edge[mid arrow,bend left=20] (0,1);
				\draw[ultra thin] (-0.2,0) -- (1.2,0);
				\draw[ultra thin] (-0.2,1) -- (1.2,1);	
				\end{tikzpicture}
				&
				\begin{tikzpicture}[scale=1, font = \small]
				\draw[fill] (0,0) circle (1pt);
				\draw[fill] (0,1) circle (1pt);
				\draw[fill] (1,0) circle (1pt);
				\draw[fill] (1,1) circle (1pt);		
				\path 	
				(1,0) edge[mid arrow] (0,0) edge[mid arrow] (1,1)
				(0,0) edge[mid arrow] (0,1)
				(1,1) edge[mid arrow] (0,1)
				(0,1) edge[mid arrow,bend right=20] (1,0) edge[mid arrow,bend left=20] (1,0);
				\draw[ultra thin] (-0.2,0) -- (1.2,0);
				\draw[ultra thin] (-0.2,1) -- (1.2,1);	
				\end{tikzpicture}
				&
				\begin{tikzpicture}[scale=1, font = \small]
				\draw[fill] (0,0) circle (1pt);
				\draw[fill] (0,1) circle (1pt);
				\draw[fill] (1,0) circle (1pt);
				\draw[fill] (1,1) circle (1pt);		
				\path 	
				(0,0) edge[mid arrow] (1,0) edge[mid arrow] (0,1)
				(1,0) edge[mid arrow] (1,1)
				(0,1) edge[mid arrow] (1,1)
				(1,1) edge[mid arrow,bend right=20] (0,0) edge[mid arrow,bend left=20] (0,0);
				\draw[ultra thin] (-0.2,0) -- (1.2,0);
				\draw[ultra thin] (-0.2,1) -- (1.2,1);	
				\end{tikzpicture}
				&  		
				\begin{tikzpicture}[scale=1, font = \small]
				\draw[fill] (0,0) circle (1pt);
				\draw[fill] (0,1) circle (1pt);
				\draw[fill] (1,0) circle (1pt);
				\draw[fill] (1,1) circle (1pt);		
				\path 	
				(1,0) edge[mid arrow] (0,0)
				(1,1) edge[mid arrow] (0,1) edge[mid arrow] (1,0)
				(0,1) edge[mid arrow] (0,0)
				(0,0) edge[mid arrow,bend right=20] (1,1) edge[mid arrow,bend left=20] (1,1);		
				\draw[ultra thin] (-0.2,0) -- (1.2,0);
				\draw[ultra thin] (-0.2,1) -- (1.2,1);	
				\end{tikzpicture}		
				\\
				$s_0s_{\bar{1}}$ & $s_{\bar{0}}s_1$  & $s_0s_1$ & $s_{\bar{0}}s_{\bar{1}}$ & $s_1s_0$ & $s_{\bar{1}}s_{\bar{0}}$
		\end{tabular}}
	\end{center}
	\caption{Blocks for construction of quiver from the element of co-extended double affine Weyl group.}\label{fig:Weyl}
\end{figure}
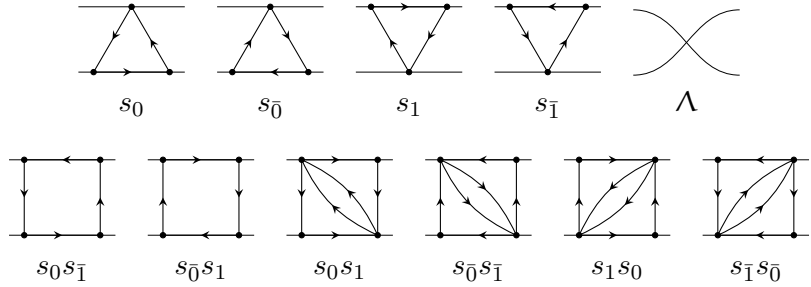

One finds from Fig.~\ref{fig:Weyl} that the products $s_0s_{\overline 1}$ and $s_{\overline 0}s_1$ correspond to the type 0 block  by clear gluing rules, the products $s_0s_{1}$ and $s_{1}s_0$ correspond to the block~$\mathrm{I}$, while the products $s_{\overline 0}s_{\overline 1}$ and $s_{\overline 0}s_{\overline 1}$ correspond to the block~$-\mathrm{I}$ (see Fig.~\ref{fig:blocks2}). Therefore the reduced word $u\in(\mathbb{Z}/2\mathbb{Z})\ltimes \widehat{W}_2\times \widehat{W}_2$, corresponding to $\mathcal{Q}^{N,k}$, consists of product of $2N$ reflections $s\in\widehat{W}_2$, and, in the case of odd $N+k$, an extra $\Uplambda$. The total number of the generators $s_0, s_{\overline 0}$ equals to the number of generators $s_1, s_{\overline 1}$  and is given by $N$, while the total number of the generators $s_{0}, s_{1}$ equals to $N+k$. For example, $k=0$ relativistic Toda chain~\footnote{Alternatively it can be always realized on a Poisson submanifold in affine $\widehat{SL(N)}$ being defined by the ``double-Coxeter'' word $u=s_0s_{\overline 0}s_1s_{\overline 1}\ldots s_{N-1}s_{\overline{N-1}}\in \widehat{W}_N\times \widehat{W}_N$, see e.g. \cite{AMJGP}.} corresponds to the word $u=(s_0s_{\overline 0}s_{1}s_{\overline{1}})^{\lfloor N/2\rfloor}(s_0s_{\overline{0}}\Uplambda)^{2\{N/2\}}\in (\mathbb{Z}/2\mathbb{Z})\ltimes\widehat{W}_2\times \widehat{W}_2$, while for the $k=N$ case one finds $u=(s_0s_1)^N\in \widehat{W}_2\times \widehat{W}_2$.

\end{Remark}

\subsection{Quiver mutations}

One has a freedom in choosing numbers $N_0, N_1, N_{-1}$ and order of blocks in the Lemma~\ref{Lem:YNkThurst}. This freedom remains in the subsequent constructions of bipartite graphs and quivers, but as proven in \cite{GK}, the resulting cluster variety $\mathcal{X}$ does not depend on this choice. Thurston diagrams, corresponding to different choices, are related by Thurston moves; corresponding bipartite graphs are related by spider moves~\footnote{also referred as ``square moves'' or ``urban renewal''}
and contractions of two-valent vertices (we recall the definition of the Thurston moves at Fig.~\ref{fig:Thurston:moves}, and the spider moves and contractions of bipartite graphs at Fig.~\ref{fig:spider:moves}), whereas the corresponding quivers are related by mutations.

\begin{Remark}
Since there are three types of blocks, one gets $3^N$ quivers (for a given $N$), to be compared with $3^N$ open relativistic Toda chains that appeared in \cite{GSV:2009},  see \cite[Sect. 11.7]{FT17} and \cite{GT:2018} for the quantum case. Note that in the case of open Toda chains their integrals of motion are in fact equivalent by certain rational transformations (the B\"acklund-Darboux transformations), see \cite[Theorem 6.1]{GSV:2009}. In terms of $X$-cluster varieties these rational transformations are compositions of mutations. In contrast to the open Toda cases, the $Y^{N,k}$ integrable systems correspond to affine Toda chains, when there is an additional parameter $k$, preserved under mutations. The corresponding integrals of motion are not equivalent, for example in the expressions for two $N=3$ and $k=1$ Hamiltonians
\be
H_1^{(k=1)} = {1+y_1+y_1x_1+y_1x_1y_2+y_1x_1y_2x_2+z^{-2/3}y_1^{1/3}x_1^{4/3}y_2^{2/3}x_2^{2/3}(1+x_2)\over (x_1y_1)^{2/3}(x_2y_2)^{1/3}}\,,
\\
H_2^{(k=1)} = {1+y_2+y_2x_2+y_2x_2y_1+y_2x_2y_1x_1+z^{-2/3}y_1^{1/3}x_1^{4/3}y_2^{2/3}x_2^{5/3}\over (x_1y_1)^{1/3}(x_2y_2)^{2/3}}
\ee
one can easily find the difference in the ``affine'' ($z$-dependent) terms with the usual $N=3$ and $k=0$ relativistic Toda expressions
\be
H_1^{(k=0)} = {1+y_1+y_1x_1+y_1x_1y_2+y_1x_1y_2x_2+z^{-1}x_1x_2\over (x_1y_1)^{2/3}(x_2y_2)^{1/3}},
\\
H_2^{(k=0)} = {1+y_2+y_2x_2+y_2x_2y_1+y_2x_2y_1x_1+z^{-1}x_1x_2\over (x_1y_1)^{1/3}(x_2y_2)^{2/3}}
\ee
Both pairs are in the involution $\{H_1^{(k)},H_2^{(k)}\}=0$ w.r.t. the standard Poisson bracket,
where the only nontrivial relations are $\{ y_i, x_j\} = C_{ij}y_ix_j$ for $i,j=1,\ldots,2$. Here $C_{ij}$ is the Cartan matrix for $SL(3)$,
the only nontrivial Casimir is denoted by $z$.
\end{Remark}

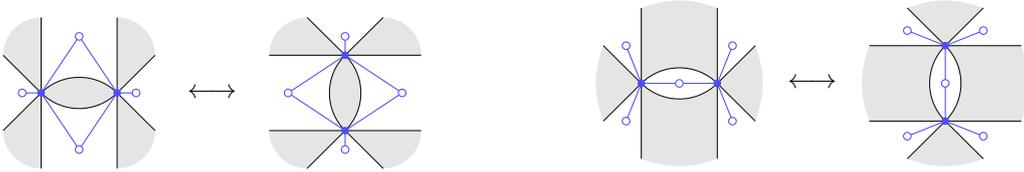
\begin{figure}[ht]
	\begin{center}
\begin{tikzpicture}[scale=0.5, font = \small,				styleFill/.style={fill,gray,opacity=0.2},
styleArrow/.style={postaction={decorate},decoration={markings,mark=at position 0.5 with {\arrow{Stealth[scale=1.5]}}}},
styleDimer/.style={blue!70}]					
	\draw (0,0) -- (0,4);	
	\draw (2,0) -- (2,4);
	\draw (-1,1) -- (0,2) to[out=45,in=135] (2,2) -- (3,1);
	\draw (3,3) -- (2,2) to[out=225,in=315] (0,2) -- (-1,3);

	\path[styleFill] (0,0)--(0,2) --  (-1,1)  to  [out=270,in=180] (0,0);
	\path[styleFill] (0,4)--(0,2) --  (-1,3) to  [out=90,in=180] (0,4);
	\path[styleFill] (0,2) to[out=45,in=135]  (2,2) to[out=225,in=315] (0,2);
	\path[styleFill] (2,0)--(2,2) --  (3,1)  to  [out=270,in=00] (2,0);
	\path[styleFill] (2,4)--(2,2) --  (3,3) to  [out=90,in=0] (2,4);

	\node at (4.5,2) {$\longleftrightarrow$};

	\draw (6,1) -- (10,1); \draw (6,3)--(10,3);	
	\draw (7,0) -- (8,1) to[out=45,in=315] (8,3) -- (7,4);
	\draw (9,0) -- (8,1) to[out=135,in=225] (8,3) -- (9,4);	

	\path[styleFill] (6,1) -- (8,1) -- (7,0) to[out=180,in=270] (6,1); 	
	\path[styleFill] (6,3) -- (8,3) -- (7,4) to[out=180,in=90] (6,3);
	\path[styleFill] (10,1) -- (8,1) -- (9,0) to[out=0,in=270] (10,1);
	\path[styleFill] (10,3) -- (8,3) -- (9,4) to[out=0,in=90] (10,3);	
	\path[styleFill] (8,1) to[out=45,in=315]  (8,3) to[out=225,in=135] (8,1);		

        \fill[styleDimer](0,2) circle(0.1)
        (2,2) circle(0.1);
        \draw[styleDimer](0,2)--(1,3.5)--(2,2)--(1,0.5)--cycle (-0.5,2)--(0,2) (2,2)--(2.5,2);
        \draw[styleDimer,fill=white]
        (1,3.5) circle(0.1)
        (1,0.5) circle(0.1)
        (-0.5,2) circle(0.1)
        (2.5,2) circle(0.1);

\begin{scope}[xshift=10cm,yshift=1cm,rotate=90]
         \fill[styleDimer](0,2) circle(0.1)
        (2,2) circle(0.1);
        \draw[styleDimer](0,2)--(1,3.5)--(2,2)--(1,0.5)--cycle (-0.5,2)--(0,2) (2,2)--(2.5,2);
        \draw[styleDimer,fill=white]
        (1,3.5) circle(0.1)
        (1,0.5) circle(0.1)
        (-0.5,2) circle(0.1)
        (2.5,2) circle(0.1);
\end{scope}

\end{tikzpicture}		
\hspace{2cm}
\begin{tikzpicture}[scale=0.5, font = \small,				styleFill/.style={fill,gray,opacity=0.2},
styleArrow/.style={postaction={decorate},decoration={markings,mark=at position 0.5 with {\arrow{Stealth[scale=1.5]}}}},
styleDimer/.style={blue!70}]				
	\draw (0,0) -- (0,4); \draw (2,0) -- (2,4);
	\draw (-1,1) -- (0,2) to[out=45,in=135] (2,2) -- (3,1);
	\draw (3,3) -- (2,2) to[out=225,in=315] (0,2) -- (-1,3);

	\path[styleFill] (-1,1) --(0,2) -- (-1,3) to[out=250,in=110] (-1,1);
	\path[styleFill] (3,1) --(2,2) -- (3,3) to[out=290,in=70] (3,1);	
	\path[styleFill] (0,0) -- (0,2) to[in=225,out=315] (2,2) -- (2,0) to[out=200,in=340] (0,0);
	\path[styleFill] (0,4) -- (0,2) to[out=45,in=135] (2,2) -- (2,4) to[out=160,in=20] (0,4);
	
	\node at (4.5,2) {$\longleftrightarrow$};

	\draw (6,1) -- (10,1); \draw (6,3)--(10,3);	
	\draw (7,0) -- (8,1) to[out=45,in=315] (8,3) -- (7,4);
	\draw (9,0) -- (8,1) to[out=135,in=225] (8,3) -- (9,4);	

	\path[styleFill] (6,1) -- (8,1) to[out=135,in=225] 	(8,3) --(6,3) to[out=250,in=110] (6,1);
	\path[styleFill] (10,1) -- (8,1) to[out=45,in=315] 	(8,3) --(10,3) to[out=290,in=70] (10,1);
	\path[styleFill] (7,0) -- (8,1) -- (9,0) to[out=200,in=340] 	(7,0);
	\path[styleFill] (7,4) -- (8,3) -- (9,4) to[out=160,in=20] 	(7,4);

        \fill[styleDimer](0,2) circle(0.1)
        (2,2) circle(0.1);
        \draw[styleDimer](0,2)--(2,2) (2.4,3)--(2,2)--(2.4,1) (-0.4,3)--(0,2)--(-0.4,1);
        \draw[styleDimer,fill=white]
        (1,2) circle(0.1)
        (-0.4,3) circle(0.1)
        (-0.4,1) circle(0.1)
        (2.4,3) circle(0.1)
        (2.4,1) circle(0.1);

\begin{scope}[xshift=10cm,yshift=1cm,rotate=90]
        \fill[styleDimer](0,2) circle(0.1)
        (2,2) circle(0.1);
        \draw[styleDimer](0,2)--(2,2) (2.4,3)--(2,2)--(2.4,1) (-0.4,3)--(0,2)--(-0.4,1);
        \draw[styleDimer,fill=white]
        (1,2) circle(0.1)
        (-0.4,3) circle(0.1)
        (-0.4,1) circle(0.1)
        (2.4,3) circle(0.1)
        (2.4,1) circle(0.1);
\end{scope}

\end{tikzpicture}		
	\end{center}
\caption{Grey and white Thurston moves.}\label{fig:Thurston:moves}
\end{figure}

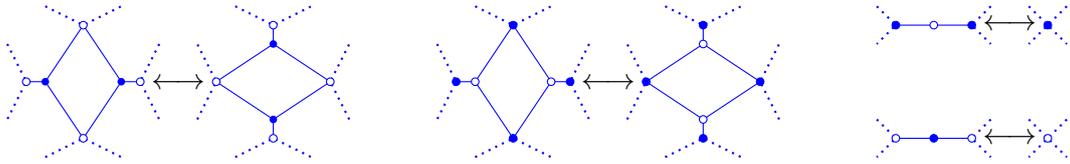
\begin{figure}[ht]
\begin{center}
\begin{tikzpicture}[styleDimer/.style={blue},scale=0.5]				

        \draw[dotted,thick,styleDimer](-1,3)--(-0.5,2)--(-1,1)
        (3,3)--(2.5,2)--(3,1)
        (0,4)--(1,3.5)--(2,4)
        (0,0)--(1,0.5)--(2,0);

        \fill[styleDimer](0,2) circle(0.1)
        (2,2) circle(0.1);
        \draw[styleDimer](0,2)--(1,3.5)--(2,2)--(1,0.5)--cycle (-0.5,2)--(0,2) (2,2)--(2.5,2);
        \draw[styleDimer,fill=white]
        (1,3.5) circle(0.1)
        (1,0.5) circle(0.1)
        (-0.5,2) circle(0.1)
        (2.5,2) circle(0.1);

\node at(3.5,1.95){$\longleftrightarrow$};

\begin{scope}[xshift=8cm,yshift=1cm,rotate=90]

        \draw[dotted,thick,styleDimer](-1,3)--(-0.5,2)--(-1,1)
        (3,3)--(2.5,2)--(3,1)
        (0,4)--(1,3.5)--(2,4)
        (0,0)--(1,0.5)--(2,0);

         \fill[styleDimer](0,2) circle(0.1)
        (2,2) circle(0.1);
        \draw[styleDimer](0,2)--(1,3.5)--(2,2)--(1,0.5)--cycle (-0.5,2)--(0,2) (2,2)--(2.5,2);
        \draw[styleDimer,fill=white]
        (1,3.5) circle(0.1)
        (1,0.5) circle(0.1)
        (-0.5,2) circle(0.1)
        (2.5,2) circle(0.1);
\end{scope}

\end{tikzpicture}\hspace{1cm}
\begin{tikzpicture}[styleDimer/.style={blue},scale=0.5]				

        \draw[dotted,thick,styleDimer](-1,3)--(-0.5,2)--(-1,1)
        (3,3)--(2.5,2)--(3,1)
        (0,4)--(1,3.5)--(2,4)
        (0,0)--(1,0.5)--(2,0);

        \draw[styleDimer](0,2)--(1,3.5)--(2,2)--(1,0.5)--cycle (-0.5,2)--(0,2) (2,2)--(2.5,2);
        \draw[styleDimer,fill=blue]
        (1,3.5) circle(0.1)
        (1,0.5) circle(0.1)
        (-0.5,2) circle(0.1)
        (2.5,2) circle(0.1);
         \draw[styleDimer,fill=white](0,2) circle(0.1)
        (2,2) circle(0.1);

\node at(3.5,1.95){$\longleftrightarrow$};

\begin{scope}[xshift=8cm,yshift=1cm,rotate=90]

        \draw[dotted,thick,styleDimer](-1,3)--(-0.5,2)--(-1,1)
        (3,3)--(2.5,2)--(3,1)
        (0,4)--(1,3.5)--(2,4)
        (0,0)--(1,0.5)--(2,0);

        \draw[styleDimer](0,2)--(1,3.5)--(2,2)--(1,0.5)--cycle (-0.5,2)--(0,2) (2,2)--(2.5,2);
        \draw[styleDimer,fill=blue]
        (1,3.5) circle(0.1)
        (1,0.5) circle(0.1)
        (-0.5,2) circle(0.1)
        (2.5,2) circle(0.1);
         \draw[styleDimer,fill=white](0,2) circle(0.1)
        (2,2) circle(0.1);

\end{scope}
\end{tikzpicture}
\hspace{1cm}
\begin{tikzpicture}[styleDimer/.style={blue},scale=0.5]				

        \draw[dotted,thick,styleDimer](2.5,0.5)--(2,0)--(2.5,-0.5)
        (-0.5,0.5)--(0,0)--(-0.5,-0.5)
        (4.5,0.5)--(4,0)--(4.5,-0.5)
        (3.5,0.5)--(4,0)--(3.5,-0.5);

        \draw[styleDimer] (0,0)--(2,0);
        \draw[styleDimer,fill=white]  (0,0) circle(0.1) (2,0) circle(0.1) (4,0) circle(0.1);
        \draw[styleDimer,fill=blue] (1,0) circle(0.1);
        \node at(3,0){$\longleftrightarrow$};

\begin{scope}[yshift=3cm]

        \draw[dotted,thick,styleDimer](2.5,0.5)--(2,0)--(2.5,-0.5)
        (-0.5,0.5)--(0,0)--(-0.5,-0.5)
        (4.5,0.5)--(4,0)--(4.5,-0.5)
        (3.5,0.5)--(4,0)--(3.5,-0.5);

        \draw[styleDimer] (0,0)--(2,0);
        \draw[styleDimer,fill=blue]  (0,0) circle(0.1) (2,0) circle(0.1) (4,0) circle(0.1);
        \draw[styleDimer,fill=white] (1,0) circle(0.1);
        \node at(3,0){$\longleftrightarrow$};

\end{scope}

\end{tikzpicture}

\end{center}

\caption{Spider moves and contraction of the edges. }\label{fig:spider:moves}
\end{figure}

Mutation of the quiver $\mathcal{Q}$ is defined as follows. Let $\epsilon_{ij}$ be the number of arrows from $i$-th to $j$-th vertex ($\epsilon_{ji} = -\epsilon_{ij}$) of $\mathcal{Q}$. The mutation at the vertex $j$ acts as
\be
\label{mute}
\mu_j\colon \epsilon_{ik}\mapsto -\epsilon_{ik}, \text{ if $i=j$ or $k=j$},\quad  \epsilon_{ik}\mapsto\epsilon_{ik}+\frac{\epsilon_{ij}|\epsilon_{jk}|+\epsilon_{jk}|\epsilon_{ij}|}{2}\ \  \text{ otherwise}.
\ee

\begin{lemma}\label{Lem:mut}
Mutations can transform the sequences of blocks as follows: $(0,0)\leftrightarrow (-\mathrm{I},\mathrm{I})$;\; $(\mathrm{I},-\mathrm{I})\leftrightarrow (0,0)$;\; $(0,-\mathrm{I}) \leftrightarrow (-\mathrm{I},0)$;\; $(\mathrm{I},0) \leftrightarrow (0,\mathrm{I})$.
\end{lemma}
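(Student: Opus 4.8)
The plan is to verify the four claimed block-transformations directly, by the following procedure: for each pair of adjacent blocks from Fig.~\ref{fig:blocks2}, write down the induced exchange matrix $\epsilon$ on the relevant finite set of quiver vertices (the vertices of the two blocks, plus the boundary vertices shared with the neighbouring blocks, which are needed to see that the external arrows are unchanged), pick the single vertex at which one must mutate, apply the mutation rule \rf{mute}, and read off the resulting quiver. One then matches it — after the relabelling that reflects the change of the underlying grand Motzkin path — with the quiver of the block sequence on the right-hand side. Since in each case only one internal vertex changes its ``height'' on the path (e.g. in $(0,0)\leftrightarrow(-\mathrm{I},\mathrm{I})$ the middle vertex of the top or bottom path goes up or down by one), exactly one mutation is involved, and the identification is a finite check.

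First I would fix coordinates: by Lemma~\ref{Lem:YNk:quiver} the ``$+$'' vertices and the ``$\times$'' vertices each form a polyline with steps $(1,0),(1,\pm1)$, so a sequence of two blocks is determined by three consecutive ``$+$''-lattice points and three consecutive ``$\times$''-lattice points. The four relations of the lemma correspond precisely to the four ways a length-two subpath of a grand Motzkin path can be replaced by another length-two subpath with the same endpoints: $((0),(0))\leftrightarrow(+1,-1)$ on one of the two strands (giving $(0,0)\leftrightarrow(-\mathrm{I},\mathrm{I})$ and $(\mathrm{I},-\mathrm{I})\leftrightarrow(0,0)$ according to which strand moves), and $(0,-1)\leftrightarrow(-1,0)$ resp. $(+1,0)\leftrightarrow(0,+1)$ (giving the last two relations). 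In each case the vertex to mutate at is the middle point of the moving strand.

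Next I would carry out the mutation computation for one representative case in detail, say $(0,0)\leftrightarrow(-\mathrm{I},\mathrm{I})$, using the arrow data of Fig.~\ref{fig:blocks2}: list the arrows incident to the chosen vertex $v$, apply $\mu_v$ — the arrows at $v$ reverse, and for each path $i\to v\to k$ one adds an arrow $i\to k$ (cancelling against any opposite arrow) — and check that the six surviving arrows, together with the unchanged external arrows to the neighbouring blocks, assemble into the $-\mathrm{I}$-followed-by-$\mathrm{I}$ configuration. The remaining three relations are handled the same way; by the left–right symmetry of the blocks (type $\mathrm{I}$ and type $-\mathrm{I}$ are mirror images, and block $0$ is symmetric), two of the four checks reduce to the other two. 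It is worth remarking that these mutations are the quiver shadow of the Thurston moves of Fig.~\ref{fig:Thurston:moves} and the spider moves of Fig.~\ref{fig:spider:moves} applied locally to two adjacent blocks, which gives an independent conceptual confirmation.

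The only real obstacle is bookkeeping: one must be careful that the vertices shared between a block and its neighbour are genuinely untouched by the mutation (the mutation rule only affects arrows through $v$, and $v$ is internal, so this holds, but it should be stated), and that the relabelling of vertices after the move is the ``right'' one, i.e. the one under which the infinite $N$-periodic strip of Lemma~\ref{Lem:YNk:quiver} with data $(N_0,N_1,N_{-1})$ on the left matches the strip with data $(N_0-2,N_1+1,N_{-1}+1)$ (etc.) on the right, both having the same period vector $(N,k)$. Once the coordinates are fixed this is immediate, so I expect the proof to be a short, essentially pictorial verification rather than a computation of any length.
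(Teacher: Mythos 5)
Your plan is essentially identical to the paper's proof: the authors simply exhibit the before/after quivers in Fig.~\ref{fig:mut:quiv} with the mutation vertex circled, i.e.\ they perform exactly the finite, case-by-case check of the rule \rf{mute} at a single internal vertex that you describe, together with the same repositioning/relabelling of that vertex on the lattice (the paper's remark that after mutation the ``$+$'' label is placed above ``$\times$'' on each vertical line). Your additional observations --- that external arrows are untouched because the mutated vertex is internal, and that these mutations shadow the Thurston/spider moves --- are correct and appear in the paper's surrounding discussion rather than in the proof itself.
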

\begin{proof}
We just present the corresponding transformations of quivers at Fig.~\ref{fig:mut:quiv}, where the circles mark the vertices, where mutation is performed~\footnote{We also notice that such quivers on the square lattice and their mutations appeared already in \cite[page 5]{DiFrancesco} during the study of similar integrable system.}. Note that after mutation we change the marks of the vertices such that label $+$ is above $\times$ on each vertical line.
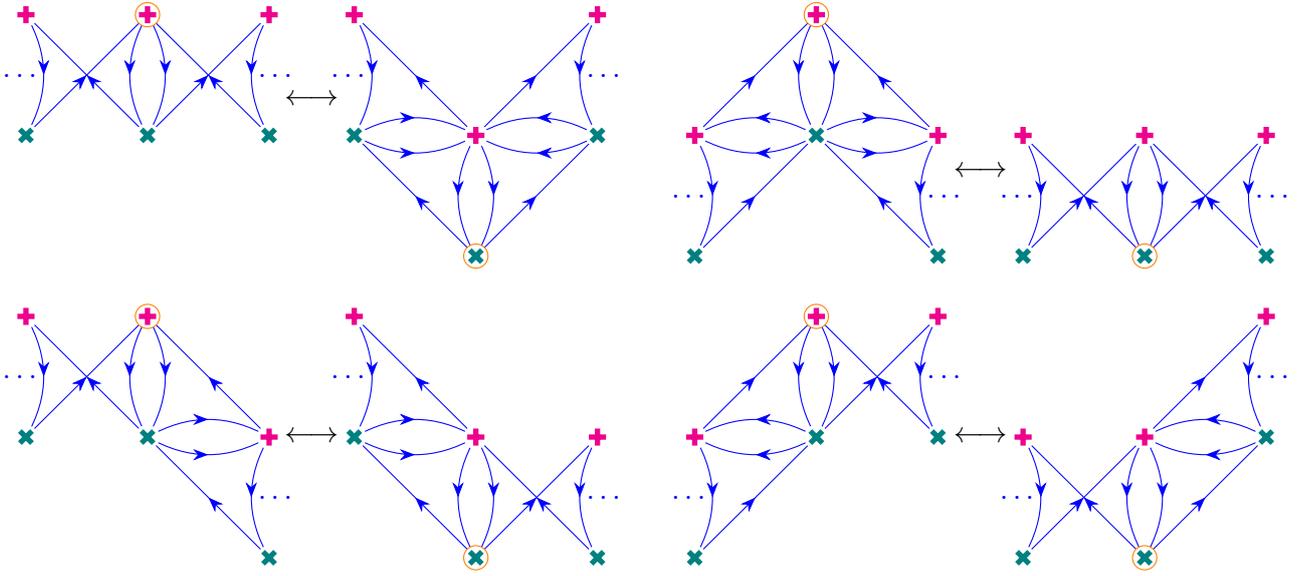
\begin{figure}[h!]
\begin{center}
\begin{tikzpicture}
\tikzmath{\scale=0.8;\radius=0.2;}
\draw[scale=\scale](0,0) pic{quiver1=\scale} (2,0) pic{quiver1=\scale} (4.7,0.6) node {$\longleftrightarrow$}
(5.4,-2) pic{quiver3=\scale} (7.4,-2) pic{quiver2=\scale}
(-0.1,1) node[blue]{$\ldots$} (4.1,1) node[blue]{$\ldots$} (5.3,1) node[blue]{$\ldots$} (9.5,1) node[blue]{$\ldots$};
\draw[orange,scale=\scale] (0+2,0+2) circle[radius=\radius] (5.4+2,-2) circle[radius=\radius];

\draw[scale=\scale,xshift=11cm](0,-2) pic{quiver2=\scale} (2,-2) pic{quiver3=\scale} (4.7,-0.6) node {$\longleftrightarrow$}
(5.4,-2) pic{quiver1=\scale} (7.4,-2) pic{quiver1=\scale}
(-0.1,-1) node[blue]{$\ldots$} (4.1,-1) node[blue]{$\ldots$} (5.3,-1) node[blue]{$\ldots$} (9.5,-1) node[blue]{$\ldots$}; \draw[orange,scale=\scale,xshift=11cm] (0+2,-2+4) circle[radius=\radius] (5.4+2,-2) circle[radius=\radius];

\draw[scale=\scale,yshift=-5cm](0,0) pic{quiver1=\scale} (2,-2) pic{quiver3=\scale} (4.7,0) node {$\longleftrightarrow$}
(5.4,-2) pic{quiver3=\scale} (7.4,-2) pic{quiver1=\scale}
(-0.1,1) node[blue]{$\ldots$} (4.1,-1) node[blue]{$\ldots$} (5.3,1) node[blue]{$\ldots$} (9.5,-1) node[blue]{$\ldots$};
\draw[orange,scale=\scale,yshift=-5cm] (0+2,-2+4) circle[radius=\radius] (5.4+2,-2) circle[radius=\radius];

\draw[scale=\scale,yshift=-5cm,xshift=11cm](0,-2) pic{quiver2=\scale} (2,0) pic{quiver1=\scale} (4.7,0) node {$\longleftrightarrow$}
(5.4,-2) pic{quiver1=\scale} (7.4,-2) pic{quiver2=\scale}
(-0.1,-1) node[blue]{$\ldots$} (4.1,1) node[blue]{$\ldots$} (5.3,-1) node[blue]{$\ldots$} (9.5,1) node[blue]{$\ldots$};
\draw[orange,scale=\scale,yshift=-5cm,xshift=11cm] (0+2,-2+4) circle[radius=\radius] (5.4+2,-2) circle[radius=\radius];

\end{tikzpicture}
\end{center}
\caption{Mutations.}\label{fig:mut:quiv}
\end{figure}
\end{proof}
Two remarks are now in order. First, note that mutations at Fig.~\ref{fig:mut:quiv} move the corresponding vertex of a quiver to another integer point on the same vertical line. Clearly, using such moves one can transform any sequence of $0, \mathrm{I}, -\mathrm{I}$ blocks to any other sequence with the same $N=N_0+N_1+N_{-1}$ and $k=N_1-N_{-1}$. In other words, one can transform grand Motzkin path of any shape to path of any other shape (but with the same $N,k$).

Second, generally speaking, not any mutation of a quiver corresponds to spider move of a bipartite graph (or Thurston move of corresponding Thurston diagram), but one can find such moves corresponding to the mutations from Fig.~\ref{fig:mut:quiv}. For example, the moves corresponding to $(0,0)\leftrightarrow (-\mathrm{I},\mathrm{I})$ and $(0,-\mathrm{I})$ $\leftrightarrow$ $(-\mathrm{I},0)$ mutations are given in Fig.~\ref{fig:moves}.

\begin{figure}[h!]
\begin{center}
\tikzset{pics/line1/.style 2 args={code={
    \draw (#1,0) to[out=90,in=-90] (#2,4);
  }}
}

\tikzset{line2/.pic={
    \draw (#1,0) to[out=-90,in=180]  (#1+0.5,-1) (#1+0.5,5) to[out=180,in=90] (#1,4) ;
    \draw[dashed,gray] (#1+0.5,-1) to[out=0,in=-90,dashed]  (#1+2,1.7) to [out=90,in=0] (#1+0.5,5);
  }
}

\begin{tikzpicture}

\tikzmath{\scale=0.4;}

\tikzset{scale=\scale,
blackCircle/.style={fill=black,thick,radius=0.15,inner sep=0},
whiteCircle/.style={fill=white,thick,radius=0.15,inner sep=0}
}


\draw[thick]
  (-2,0)--(6,0)
  (-2,4)--(6,4)
  (0,0) pic[scale=\scale]{line2=0}
  (0,0) pic[scale=\scale]{line1=0 0}
  (0,0) pic[scale=\scale]{line2=4}
  (0,0) pic[scale=\scale]{line1=4 4}
  ;

\draw [whiteCircle] (0,4)  circle[] (4,0) circle[];
\draw [blackCircle] (0,0)  circle[] (4,4) circle[];

\node at (7,2) {\bf $\longrightarrow$};

\begin{scope}[xshift=10cm]

\draw[thick]
  (-2,0)--(6,0)
  (-2,4)--(6,4)
  (0,0) pic[scale=\scale]{line2=0}
  (0,0) pic[scale=\scale]{line2=4}
  (0,0) pic[scale=\scale]{line1=0 2}
  (0,0) pic[scale=\scale]{line1=2 4}
  ;

\draw [whiteCircle] (0,4)  circle[] (2,4) circle[] (4,0) circle[] (2,0) circle[];
\draw [blackCircle] (0,0)  circle[] (3,0) circle[] (4,4) circle[] (1,4) circle[];

\node at (7,2) {\bf $\longrightarrow$};

\end{scope}

\begin{scope}[xshift=20cm]

\draw[thick]
  (-2,0)--(6,0)
  (-2,4)--(6,4)
  (0,0) pic[scale=\scale]{line2=0}
  (0,0) pic[scale=\scale]{line2=4}
  (0,0) pic[scale=\scale]{line1=1 1}
  (0,0) pic[scale=\scale]{line1=3 3}
  ;

\draw [whiteCircle] (0,4)  circle[] (3,4) circle[] (4,0) circle[] (1,0) circle[];
\draw [blackCircle] (0,0)  circle[] (3,0) circle[] (4,4) circle[] (1,4) circle[];


\end{scope}

\end{tikzpicture}

\

\noindent
\begin{tikzpicture}

\tikzmath{\scale=0.4;}

\tikzset{scale=\scale,
blackCircle/.style={fill=black,thick,radius=0.15,inner sep=0},
whiteCircle/.style={fill=white,thick,radius=0.15,inner sep=0}
}


\draw[thick]
  (-1,0)--(7,0)
  (-1,4)--(7,4)
  (0,0) pic[scale=\scale]{line2=0}
  (0,0) pic[scale=\scale]{line1=0 0}
  (0,0) pic[scale=\scale]{line2=5}
  (0,0) pic[scale=\scale]{line1=4 4}
  ;

\draw [whiteCircle] (0,4)  circle[] (4,0) circle[] (5,4) circle[];
\draw [blackCircle] (0,0)  circle[] (4,4) circle[] (5,0) circle[];

\node at (8,2) {\bf $\longrightarrow$};

\begin{scope}[xshift=10cm]

\draw[thick]
  (-1,0)--(7,0)
  (-1,4)--(7,4)
  (0,0) pic[scale=\scale]{line2=0}
  (0,0) pic[scale=\scale]{line1=0 2}
  (0,0) pic[scale=\scale]{line2=5}
  (0,0) pic[scale=\scale]{line1=4 4}
  ;

\draw [whiteCircle] (0,4)  circle[] (2,4) circle[] (4,0) circle[] (5,4) circle[];
\draw [blackCircle] (0,0)  circle[] (4,4) circle[] (1,4) circle[] (5,0) circle[];

\node at (8,2) {\bf $\longrightarrow$};

\end{scope}

\begin{scope}[xshift=20cm]

\draw[thick]
  (-1,0)--(7,0)
  (-1,4)--(7,4)
  (0,0) pic[scale=\scale]{line2=0}
  (0,0) pic[scale=\scale]{line1=1 1}
  (0,0) pic[scale=\scale]{line2=5}
  (0,0) pic[scale=\scale]{line1=5 3}
  ;

\draw [whiteCircle] (0,4)  circle[] (3,4) circle[] (1,0) circle[] (5,4) circle[];
\draw [blackCircle] (0,0)  circle[] (4,4) circle[] (1,4) circle[] (5,0) circle[];

\node at (8,2) {\bf $\longrightarrow$};

\end{scope}

\begin{scope}[xshift=30cm]

\draw[thick]
  (-1,0)--(7,0)
  (-1,4)--(7,4)
  (0,0) pic[scale=\scale]{line2=0}
  (0,0) pic[scale=\scale]{line1=1 1}
  (0,0) pic[scale=\scale]{line2=5}
  (0,0) pic[scale=\scale]{line1=5 5}
  ;

\draw [whiteCircle] (0,4)  circle[] (1,0) circle[] (5,4) circle[];
\draw [blackCircle] (0,0)  circle[] (1,4) circle[] (5,0) circle[];


\end{scope}
\end{tikzpicture}

\end{center}
\caption{Moves of the dimer lattice.}\label{fig:moves}
\end{figure}

\subsection{$X$-cluster variety and $Y$-system}\label{ssec:bezout}

The Poisson bracket on $X$-cluster variety is defined in terms of $X$-cluster variables, to be denoted by $\{x_i\}$. The bracket is logarithmically constant and has the form
\begin{equation}\label{eq:Poisson:x}
	\{x_i,x_j\}=\epsilon_{ij}x_ix_j
	\end{equation}
determined by the exchange matrix of quiver. The mutations $\mu_j$ act on such variables by the formula
\be
\label{mutx}
	\mu_j: x_j\mapsto x_j^{-1},\ \ \ \ \
	x_i\mapsto x_i\left(1+ x_j^{{\rm sgn}\epsilon_{ij}}\right)^{\epsilon_{ij}},\quad i\neq j
\ee
Formulas \rf{mutx} and \rf{mute} define simultaneous transformation of $\{x_i\}$ and $\epsilon_{ij}$ that preserves the form of the bracket
 \eqref{eq:Poisson:x}.

From now on assume first that $0\leq k<N$, the case $k=N$ will be discussed separately in Sect.~\ref{ssec:YNN} below. We draw quivers $\mathcal{Q}^{N,k}$ in the plane $\mathbb{R}^2$, and it is natural to assign $X$-variables to the integer points $(n,m)\in \mathbb{Z}^2\subset\mathbb{R}^2$. However, since each point can belong to several grand Motzkin paths (equivalently, to different quivers), one can assign to this point several different $X$-variables. We describe this correspondence below.

First, we consider $\{x_{(n,m)}\}$ corresponding to integer points $(n,m)\in \mathbb{Z}^2$ subject to relations of a sort of a Y-system
\begin{equation}\label{eq:Ysystem}
    \frac{x_{(n,m+1)}x_{(n,m-1)}}{x_{(n,m)}^2}=\frac{(1+x_{(n+1,m)})(1+x_{(n-1,m)})}{(1+x_{(n,m)})^2}
\end{equation}
with the boundary conditions
\begin{equation}\label{eq:x:periodicity}
    x_{(n,m)}=x_{(n+N,m+k)}.
\end{equation}
As initial data one can take $x_{(n,m)}$ in all points of the polyline strip from Lemma~\ref{Lem:YNk:quiver} (with periodicity \eqref{eq:x:periodicity}), and then uniquely determine $x_{(n,m)}$ for all integer points of the plane, using equation \eqref{eq:Ysystem}. Here we actually use the restriction $0\leq k<N$ --- in such case there is always at least one of the following consecutive block's sequences $(0,0)$, $(\mathrm{I},-\mathrm{I})$, $(0,-\mathrm{I})$ or $(\mathrm{I},0)$. Therefore, using a mutation from Fig.~\ref{fig:mut:quiv}, we can determine one $x_{(n,m)}$ below the polyline strip, and then continue inductively. Analogously, one can always recover $x_{(n,m)}$ above the initial polyline strip.

Now, for any given polyline strip we define at each integer point $(n,m)$ of this strip the corresponding cluster variable $x^{***}_{(n,m)}$ by the following rule (which depends both on the shape of the polyline and $+$ or $\times$ type of the point):
\begin{itemize}

    \item $x^{+00}_{(n,m)}=x_{(n,m)}$ if the polyline has the form
    \begin{tikzpicture}[scale=0.5]
    \draw (0,0) -- (1,0) pic[rotate=45,magenta]{crossFilled=1} --
     (2,0) ;
    \end{tikzpicture}
    or
    \begin{tikzpicture}[scale=0.5]
    \draw (0,0) -- (1,0) pic[rotate=45,magenta]{crossFilled=1} --
     (2,-1) ;
    \end{tikzpicture}
    or
    \begin{tikzpicture}[scale=0.5]
    \draw (0,-1) -- (1,0) pic[rotate=45,magenta]{crossFilled=1} -- (2,0) ;
    \end{tikzpicture}
    or
    \begin{tikzpicture}[scale=0.5]
    \draw (0,-1) -- (1,0) pic[rotate=45,magenta]{crossFilled=1} -- (2,-1) ;
    \end{tikzpicture}.

    \item $x^{+10}_{(n,m)}=\dfrac{x_{(n,m)}}{1+x_{(n-1,m+1)}}$ if the polyline has the form
    \begin{tikzpicture}[scale=0.5]
    \draw (0,1) -- (1,0) pic[rotate=45,magenta]{crossFilled=1} --
     (2,-1) ;
    \end{tikzpicture}
    or
    \begin{tikzpicture}[scale=0.5]
    \draw (0,1) -- (1,0) pic[rotate=45,magenta]{crossFilled=1} --
     (2,0) ;
    \end{tikzpicture}.

    \item $x^{+01}_{(n,m)}=\dfrac{x_{(n,m)}}{1+x_{(n+1,m+1)}}$ if the polyline has the form
    \begin{tikzpicture}[scale=0.5]
    \draw (0,-1) -- (1,0) pic[rotate=45,magenta]{crossFilled=1} -- (2,1) ;
    \end{tikzpicture}
    or
    \begin{tikzpicture}[scale=0.5]
    \draw (0,0) -- (1,0) pic[rotate=45,magenta]{crossFilled=1} --
     (2,1) ;
    \end{tikzpicture}.

    \item $x^{+11}_{(n,m)}=\dfrac{x_{(n,m)}}{(1+x_{(n-1,m+1)})(1+x_{(n+1,m+1)})}$ if the polyline has the form
    \begin{tikzpicture}[scale=0.5]
    \draw (0,1) -- (1,0) pic[rotate=45,magenta]{crossFilled=1} --   (2,1) ;
    \end{tikzpicture}.

    \item $x^{\times00}_{(n,m)}=\dfrac{1}{x_{(n,m+2)}}$ if the polyline has the form
    \begin{tikzpicture}[scale=0.5]
    \draw (0,0) -- (1,0) pic[teal]{crossFilled=1} --
     (2,0) ;
    \end{tikzpicture}
    or
    \begin{tikzpicture}[scale=0.5]
    \draw (0,0) -- (1,0) pic[teal]{crossFilled=1} --
     (2,1) ;
    \end{tikzpicture}
    or
    \begin{tikzpicture}[scale=0.5]
    \draw (0,1) -- (1,0) pic[teal]{crossFilled=1} --
     (2,0) ;

    \end{tikzpicture}
    or
    \begin{tikzpicture}[scale=0.5]
    \draw (0,1) -- (1,0) pic[teal]{crossFilled=1} --
     (2,1) ;
    \end{tikzpicture}.

    \item $x^{\times10}_{(n,m)}=\dfrac{1+x_{(n-1,m+1)}}{x_{(n,m+2)}}$ if the polyline has the form
    \begin{tikzpicture}[scale=0.5]
    \draw (0,-1) -- (1,0) pic[teal]{crossFilled=1} --
     (2,0) ;
    \end{tikzpicture}
    or
    \begin{tikzpicture}[scale=0.5]
    \draw (0,-1) -- (1,0) pic[teal]{crossFilled=1} --
     (2,1) ;
    \end{tikzpicture}.

    \item $x^{\times01}_{(n,m)}=\dfrac{1+x_{(n+1,m+1)}}{x_{(n,m+2)}}$ if the polyline has the form
    \begin{tikzpicture}[scale=0.5]
    \draw (0,0) -- (1,0) pic[teal]{crossFilled=1} --
     (2,-1) ;
    \end{tikzpicture}
    or
    \begin{tikzpicture}[scale=0.5]
    \draw (0,1) -- (1,0) pic[teal]{crossFilled=1} --
     (2,-1) ;
    \end{tikzpicture}.

    \item $x^{\times11}_{(n,m)}=\dfrac{(1+x_{(n-1,m+1)})(1+x_{(n+1,m+1)})}{x_{(n,m+2)}}$ if the polyline has the form
    \begin{tikzpicture}[scale=0.5]
    \draw (0,-1) -- (1,0) pic[teal]{crossFilled=1} --
     (2,-1) ;
    \end{tikzpicture}.
\end{itemize}
All these expressions can be actually written uniformly as
\eq{
    x^{+\alpha\beta}_{(n,m)}=\frac{x_{(n,m)}}{(1+x_{(n-1,m+1)})^\alpha(1+x_{(n+1,m+1)})^\beta},\quad
    x^{\times\alpha\beta}_{(n,m)}=\frac{(1+x_{(n-1,m+1)})^\alpha(1+x_{(n+1,m+1)})^\beta}{x_{(n,m+2)}}.
    \label{eq:x_alpha_beta}
}

\begin{thm}
If a quiver $\mathcal{Q}^{N,k}$ is transformed to $\tilde{\mathcal{Q}}^{N,k}$ by mutation from Fig.~\ref{fig:mut:quiv}, the corresponding cluster variables $x_{(n,m)}^{***}$ defined by above rule transform into cluster variables $\tilde{x}_{(n,m)}^{***}$.
\end{thm}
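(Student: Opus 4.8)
The plan is to verify the claim by a local, case‑by‑case computation. First I would reduce to the four elementary moves of Lemma~\ref{Lem:mut}: a mutation from Fig.~\ref{fig:mut:quiv} acts at a single vertex $j$ of the polyline strip and performs one such move of two consecutive blocks, and by Lemma~\ref{Lem:mut} the resulting quiver is again of the form $\tilde{\mathcal{Q}}^{N,k}$, so the combinatorics of the arrows need not be re‑examined and only the values of the cluster variables are at issue. Note that the $\mathbb{Z}^2$‑indexed solution $\{x_{(n,m)}\}$ of the $Y$‑system \eqref{eq:Ysystem}--\eqref{eq:x:periodicity} is unchanged by a move --- only the polyline, hence the dictionary \eqref{eq:x_alpha_beta}, changes --- and by the cluster rule \eqref{mutx} only $j$ and its neighbours in $\mathcal{Q}^{N,k}$ are affected. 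So for each of the four moves I would fix a site $n_0$ where it acts and compare, vertex by vertex, the value produced by \eqref{mutx} with the one assigned by \eqref{eq:x_alpha_beta} to the new polyline; since the moves are involutive, this also settles the inverse moves. Two facts are used throughout: the shape of two consecutive blocks fixes the exponents $\alpha,\beta$ in \eqref{eq:x_alpha_beta}, and a move swaps the $+$/$\times$ labels of the two column‑$n_0$ vertices, keeping the convention that ``$+$ is above $\times$'' (as in the proof of Lemma~\ref{Lem:mut}).

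For the mutated vertex $j$ itself I must check $x^{***}_j\mapsto (x^{***}_j)^{-1}$. Under each of the four moves $j$ is the $+$‑vertex of column $n_0$ before the move and the $\times$‑vertex of column $n_0$ afterwards (or vice versa, for the inverse move), and in both polylines it lies on a flat stretch, a peak, or a valley of the relevant path; a short inspection of \eqref{eq:x_alpha_beta} then shows that $x^{***}_j$ equals $x_{(n_0,m)}$ before the move and $1/x_{(n_0,m)}$ afterwards, for one and the same lattice point $(n_0,m)$. Hence the two values are reciprocal, as the rule \eqref{mutx} demands.

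For a neighbour $w$ of $j$ I must check $\tilde x^{***}_w = x^{***}_w\,\bigl(1+(x^{***}_j)^{\operatorname{sgn}\epsilon_{wj}}\bigr)^{\epsilon_{wj}}$, with $\epsilon_{wj}$ read from $\mathcal{Q}^{N,k}$. The vertex $w$ does not move, but the path at column $n_0$ does, so its exponents $\alpha,\beta$ in \eqref{eq:x_alpha_beta} change; consequently $\tilde x^{***}_w/x^{***}_w$ is an explicit monomial in the factors $(1+x_{(n_0\pm1,\cdot)})$, $(1+x_{(n_0,\cdot)})$ and $x_{(n_0,\cdot)}$. On the other hand $\bigl(1+(x^{***}_j)^{\pm1}\bigr)^{\epsilon_{wj}}$ can be written out directly from \eqref{eq:x_alpha_beta}. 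For the neighbours whose $+$/$\times$ type is unchanged it already equals the required monomial, so only the change in the exponents $\alpha,\beta$ is needed; for the single neighbour that swaps type --- which in the new polyline becomes a peak/valley vertex with $\alpha=\beta=1$ --- the two sides coincide precisely by the $Y$‑system relation \eqref{eq:Ysystem} evaluated at the lattice point sitting between the two column‑$n_0$ vertices. Running through all affected vertices, and then through the four moves, establishes the theorem.

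The step I expect to be most laborious is this last one: each of the four moves splits into several sub‑cases according to the shapes of the two blocks flanking the flipped pair, since these determine the neighbours' exponents $\alpha,\beta$ and the multiplicities $\epsilon_{wj}$, and one must keep track of the relabelling of the $+$/$\times$ marks. None of the sub‑cases is hard --- each reduces to one application of \eqref{eq:Ysystem} together with elementary algebra in the $(1+x)$‑factors --- but the bookkeeping is extensive. A less computational alternative would be to use Goncharov--Kenyon: the bipartite‑graph moves of Fig.~\ref{fig:moves} realise the mutations of Fig.~\ref{fig:mut:quiv}, and the normalised dimer face weights transform under spider moves by exactly \eqref{mutx}; the theorem would then follow once one identifies the right‑hand sides of \eqref{eq:x_alpha_beta} with those face weights, which is again a purely local check.
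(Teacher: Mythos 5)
Your proposal is correct and follows essentially the same route as the paper: the paper's proof is exactly such a case-by-case local check, packaged as a short list of identities among the $x^{***}_{(n,m)}$ (the first four govern the neighbours whose exponent $\alpha$ or $\beta$ flips, and the last one, $x^{+\alpha\beta}_{(n,m)}=x^{\times\alpha'\beta'}_{(n,m)}(1+x^{-1})^{-2}$ with $\alpha+\alpha'=\beta+\beta'=1$, is precisely your single application of the $Y$-system at the vertex that swaps $+/\times$ type), all following immediately from \eqref{eq:x_alpha_beta} and \eqref{eq:Ysystem}. The only minor slip is your side remark that the type-swapping neighbour always becomes a peak/valley vertex with $\alpha=\beta=1$: this holds for the moves $(0,0)\leftrightarrow(-\mathrm{I},\mathrm{I})$ and $(\mathrm{I},-\mathrm{I})\leftrightarrow(0,0)$, but for $(\mathrm{I},0)\leftrightarrow(0,\mathrm{I})$ and $(0,-\mathrm{I})\leftrightarrow(-\mathrm{I},0)$ it acquires mixed exponents, which the paper's relation above covers uniformly and which your $Y$-system step handles just as well.
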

\begin{proof}
The proof just follows from the case by case direct check of the collection of relations, that should be satisfied by $x_{(n,m)}^{***}$ in order to obey conditions of the theorem:
\eq{
    x_{(n,m)}^{+*0}=x_{(n,m)}^{+*1}(1+x_{(n+1,m+1)}),\quad x_{(n,m)}^{+0*}=x_{(n,m)}^{+1*}(1+x_{(n-1,m+1)}),\\
    x_{(n,m)}^{\times*1}=x_{(n,m)}^{\times*0}(1+x_{(n+1,m+1)}),\quad x_{(n,m)}^{\times1*}=x_{(n,m)}^{\times0*}(1+x_{(n-1,m+1)}),\\
    x_{(n,m)}^{+\alpha\beta}=x_{(n,m)}^{\times\alpha'\beta'}(1+x^{-1}_{(n,m)})^{-2},
}
where $\alpha+\alpha'=\beta+\beta'=1$.
Indeed, these relations immediately follow from \eqref{eq:x_alpha_beta} and \eqref{eq:Ysystem}.
\end{proof}

\subsection{Hirota equation from cluster mutations}\label{ssec:Hirota}

Here and below we denote $l_{n,m}=kn-Nm$, this linear function is invariant under our periodic shift $(n,m)\rightarrow (n+N,m+k)$.

\begin{lemma} \label{lem:tau:x}
Let $\tau_{(n,m)}$ satisfy the non-autonomous version of discrete Hirota bilinear equation
\eq{
		\tau_{(n,m+1)}\tau_{(n,m-1)}=\tau_{(n,m)}^2+z_0^{1/N} q^{l_{n,m}/N^2}\tau_{(n+1,m)}\tau_{(n-1,m)}\,,
		\label{eq:toda_deaut1}
	}
together with the boundary condtitons $\tau_{(n,m)}=\tau_{(n+N,m+k)}$. Then $x_{(n,m)}$ defined by
\begin{equation}\label{eq:x:through:tau}
    x_{(n,m)}=z_0^{1/N}q^{(l_{n,m}+N)/N^2} \tau_{(n-1,m-1)}\tau_{(n+1,m-1)}\tau_{(n,m-1)}^{-2}
\end{equation}
satisfy \eqref{eq:Ysystem} and \eqref{eq:x:periodicity}.
\end{lemma}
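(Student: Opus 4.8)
The plan is to deduce the $Y$-system \eqref{eq:Ysystem} from the Hirota equation \eqref{eq:toda_deaut1} via a single reformulation. The periodicity \eqref{eq:x:periodicity} comes for free: the linear function $l_{n,m}=kn-Nm$ is invariant under $(n,m)\mapsto(n+N,m+k)$, and $\tau_{(n,m)}$ satisfies the same periodicity by hypothesis, so the explicit formula \eqref{eq:x:through:tau} for $x_{(n,m)}$ inherits this property term by term.

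The heart of the matter is to recognize the object $x_{(n,m+1)}$ inside \eqref{eq:toda_deaut1}. Shifting $m\mapsto m+1$ in \eqref{eq:x:through:tau} and using $l_{n,m+1}+N=l_{n,m}$ one gets $x_{(n,m+1)}=z_0^{1/N}q^{l_{n,m}/N^2}\tau_{(n-1,m)}\tau_{(n+1,m)}\tau_{(n,m)}^{-2}$, which is exactly the coefficient multiplying $\tau_{(n,m)}^{-2}$ when \eqref{eq:toda_deaut1} is divided by $\tau_{(n,m)}^2$. Hence \eqref{eq:toda_deaut1} is equivalent to the single relation $1+x_{(n,m+1)}=\tau_{(n,m+1)}\tau_{(n,m-1)}\tau_{(n,m)}^{-2}$, or, shifting $m\mapsto m-1$,
\[
1+x_{(n,m)}=\frac{\tau_{(n,m)}\tau_{(n,m-2)}}{\tau_{(n,m-1)}^2}.
\]
This is the only consequence of the Hirota equation that will be used, and it is what makes the whole computation collapse.

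Next I would substitute \eqref{eq:x:through:tau} directly into the left-hand side of \eqref{eq:Ysystem}. Tracking the exponents via $l_{n,m+1}+N=l_{n,m}$ and $l_{n,m-1}+N=l_{n,m}+2N$, one finds that the $z_0$- and $q$-prefactors carried by $x_{(n,m+1)}x_{(n,m-1)}$ and by $x_{(n,m)}^2$ agree (both give $z_0^{2/N}$ and $q^{(2l_{n,m}+2N)/N^2}$) and cancel identically, leaving
\[
\frac{x_{(n,m+1)}x_{(n,m-1)}}{x_{(n,m)}^2}=\frac{\tau_{(n-1,m)}\tau_{(n-1,m-2)}}{\tau_{(n-1,m-1)}^2}\cdot\frac{\tau_{(n+1,m)}\tau_{(n+1,m-2)}}{\tau_{(n+1,m-1)}^2}\cdot\frac{\tau_{(n,m-1)}^4}{\tau_{(n,m)}^2\tau_{(n,m-2)}^2}.
\]
By the boxed relation (with $n$ shifted to $n\pm1$) the first two factors are $1+x_{(n-1,m)}$ and $1+x_{(n+1,m)}$, while the third is $(1+x_{(n,m)})^{-2}$; this is precisely the right-hand side of \eqref{eq:Ysystem}.

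I do not anticipate a genuine obstacle here: the argument is essentially bookkeeping. The only point requiring care is the consistent tracking of the affine exponent $l_{n,m}$ through the index shifts, to confirm that the non-autonomous factors cancel out of the $Y$-system; all the actual content sits in the reformulation of \eqref{eq:toda_deaut1} as the relation for $1+x_{(n,m)}$, after which \eqref{eq:Ysystem} becomes an algebraic identity among the $\tau$'s.
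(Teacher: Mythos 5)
Your computation is correct and is exactly the ``straightforward'' direct verification that the paper itself omits (its proof of Lemma~\ref{lem:tau:x} consists of the single sentence that the proof is straightforward). The key identity $1+x_{(n,m)}=\tau_{(n,m)}\tau_{(n,m-2)}/\tau_{(n,m-1)}^2$, the cancellation of the $z_0,q$ prefactors via $l_{n,m\pm1}=l_{n,m}\mp N$, and the periodicity check all hold as you state them.
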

\begin{proof}
The proof of the lemma is straightforward.
\end{proof}

Let us now rederive the Hirota equations \eqref{eq:toda_deaut1} from the mutations rules of  $\tau$-variables
on $A$-cluster variety. For any vertex $i\in \mathcal{Q}$ of a quiver assign the variable $\tau_i$, so that mutation $\mu_j$ at $j$-th vertex has the form
\be
\label{eq:muttau}
\mu_j:\ \  \tau_j\mapsto \frac{\prod_{\epsilon_{ij}>0} \tau_i^{\epsilon_{ij}}+\prod_{\epsilon_{ij}<0} \tau_i^{-\epsilon_{ij}}}{\tau_j},
\quad\quad
\tau_i\mapsto \tau_i,\ \ \ \ \ i\neq j
\ee
again being supplemented by transformation \rf{mute} of a quiver. Note that the formula \eqref{eq:muttau} is the simplest, coefficient free, case of mutation. For inclusion of coefficients see the formula \eqref{mutay} below.

When quivers $\mathcal{Q}^{N,k}$ are drawn on the plane, it becomes natural to assign $\tau$-variables to all integer points $(n,m)\in \mathbb{Z}^2\subset \mathbb{R}^2$. Let us immediately point out, that since the $A$-cluster mutation \eqref{eq:muttau} changes only the variable at mutation vertex (in contrast to \eqref{mutx}, all variables in other vertices remain intact), the variables $\tau_{(n,m)}$ will be uniquely determined by \rf{eq:toda_deaut1}, not depending on the shape of the polyline strip (in contrast to multiple choices for $x_{(n,m)}^{***}$).

These $\tau$-variables can be constructed as follows: for a quiver $\mathcal{Q}^{N,k}$, described in
Lemma~\ref{Lem:YNk:quiver} and  drawn on a plane, we assign first the variables $\tau_{(n,m)}$ to the integer points $(n,m)$ in the polyline strip (taking into account periodicity). For any choice of $\mathcal{Q}^{N,k}$
(with $0\leq k<N$) one can always make one of the mutations from Fig.~\ref{fig:mut:quiv}, like we already discussed above in the construction for the $X$-variables. There is always a mutation from Fig.~\ref{fig:mut:quiv}, which moves a vertex in the middle down, and at least one mutation moves the vertex up.

Consider mutation at the point $(n,m+1)$, which moves a vertex down, then new polyline strip is obtained by removing the point $(n,m+1)$ and adding the point $(n,m-1)$. For any mutation from Fig.~\ref{fig:mut:quiv} new $\tau$-variable is given by
\begin{equation}\label{eq:mu:taunm}
\tau_{(n,m-1)}=\mu_{(n,m+1)}(\tau_{(n,m+1)})=\frac{\tau_{(n,m)}^2+\tau_{(n+1,m)}\tau_{(n-1,m)}}{\tau_{(n,m+1)}}.
\end{equation}
Performing these mutations one assigns $\tau$-variables to each integer point of the plane. Since equation \eqref{eq:mu:taunm} does not depend on the concrete type of the mutation, such assignment is unambiguously determined. In other words, we obtain in such a way the solutions of the Hirota type bilinear equation
(see e.g. review~\cite{Zabrodin_Hir} and references therein)
\eq{
	\tau_{(n,m+1)}\tau_{(n,m-1)}=\tau_{(n,m)}^2+\tau_{(n+1,m)}\tau_{(n-1,m)}
	\label{eq:toda_aut1}
}
with the periodicity condition
\eq{
	\tau_{(n+N,m+k)}=\tau_{(n,m)}\,.
	\label{eq:boundary_condition1}
}
The values of $\tau$-variables on initial polyline strip correspond to
the initial conditions for the discrete Hirota equation \eqref{eq:toda_aut1}.

Hence, we have obtained equation \eqref{eq:toda_aut1}, which is a special case of \eqref{eq:toda_deaut1} with constant coefficients, or $q=z_0=1$. In this simplest case the relation between the $\tau$-variables and $x$-variables, satisfying \rf{eq:Ysystem}, is given by $x_i=\prod_j{\tau_j^{\epsilon_{ji}}}$, which together with formula for mutations \eqref{eq:muttau} reproduces the mutation rules \eqref{mutx}. Consider however the Casimir or central elements of the Poisson algebra \eqref{eq:Poisson:x}: it is easy to see that they correspond to the kernel of matrix $\epsilon$. Expressing them by $x_i=\prod_j{\tau_j^{\epsilon_{ji}}}$ through the $\tau$-variables, one gets very restricted values of the Casimirs (e.g. for the monomial ones --- just unities), i.e. we obtain a very special symplectic leaf in the Poisson $X$-cluster variety.

  In order to consider generic situation one should modify the relation between $x$- and $\tau$-variables into $x_i=y_i\prod_j{\tau_j^{\epsilon_{ji}}}$ by introducing coefficients $\{y_i\}$~\footnote{Or, equivalently, by adding frozen vertices, corresponding to the Casimirs, to a quiver.}.
	In our case for the exchange matrix we have ${\rm corank}(\epsilon)=2$, therefore the bracket \eqref{eq:Poisson:x} has two independent Casimir functions, which can be chosen as
	\begin{equation}
	q=\prod x_{(n,m)}^{***}=\prod y_{(n,m)}^{***}\,,\qquad z=\prod \Big(x_{(n,m)}^{***}\Big)^{l_{n,m}}=\prod \Big(y_{(n,m)}^{***}\Big)^{l_{n,m}}\,.
	\end{equation}
The product here is taken over the vertices in a fundamental domain (under $(N,k)$-translation) of the polyline strip. Without coefficients these Casimirs turn into unities. It is therefore natural to express the coefficients $\{ y_i\}$ through two Casimir variables.

One can start with generic coefficients $\{y_i\}$ in the tropical semifield \cite{FZ:2006}: $y_{(n,m)}\in {\rm Trop}(z_0,q)$,
where the tropical operations on ${\rm Trop}(z_0,q)$ are
\eqs{
	z_0^a q^b\odot z_0^c q^d&=z_0^{a+c} q^{b+d}\,,\qquad
	z_0^a q^b\oplus z_0^c q^d&=z_0^{\min(a,c)} q^{\min(b,d)}\,.
\label{trop}
}
In this case mutations \rf{eq:muttau} are modified by coefficients
\eqs{
	&\mu_j(\tau_j)=\frac{y_j\prod\nolimits_{\epsilon_{ij}>0}\tau_i + \prod\nolimits_{\epsilon_{ij}<0}\tau_i }{(1\oplus y_j)\tau_j}\,,&\qquad 	&\mu_j(\tau_i)=\tau_i,\quad i\neq j\,,\\
	&\mu_j(y_j)=y_j^{-1}\,,&\qquad
	&\mu_j(y_i)=y_i(1\oplus y_j^{{\rm sgn}\epsilon_{ij}})^{\epsilon_{ij}},\quad i\neq j\,,
\label{mutay}
}
to be supplemented by mutation rules for the coefficients $\{y_j\}$ themselves, which are the same as mutations of the $x$-variables \eqref{mutx}, up to replacement $+$ by tropical $\oplus$ from \rf{trop}. The coefficient's assignment
should be consistent with tropical version of \eqref{eq:Ysystem}.

Set, for example $y_{(n,m)}=z_0^{1/N} q^{(l_{n,m}+N)/N^2}$, then in the region $l_{n,m}>0$ we have an obvious equality
\begin{equation}\label{eq:Ysystem:y}
    \frac{y_{(n,m+1)}y_{(n,m-1)}}{y_{(n,m)}^2}=\frac{(1\oplus y_{(n+1,m)})(1\oplus y_{(n-1,m)})}{(1\oplus y_{(n,m)})^2}\,.
\end{equation}
Due to tropical addition \rf{trop} the corresponding $y_{(n,m)}^{***}$ depend only on the vertex type, and do not depend on the shape of the polyline strip, i.e.
\begin{equation}
    y^{ +**}_{(n,m)}=z_0^{1/N} q^{(l_{n,m}+N)/N^2},\quad
	y^{\times**}_{(n,m)}=z_0^{-1/N} q^{(-l_{n,m}+N)/N^2}\,.
	\label{eq:coefficients1}
\end{equation}
For a mutation from \rf{mutay} one now gets
\eq{\label{eq:mut:tau:y}
	\tau_{(n,m-1)}=\mu_{(n,m+1)}\left(\tau_{(n,m+1)}\right)=\frac{\tau_{(n,m)}^2+z_0^{1/N} q^{l_{n,m}/N^2}\tau_{(n+1,m)}\tau_{(n-1,m)}}{\tau_{(n,m+1)}}\,,
}
and in this way we obtain the generic Hirota equation \eqref{eq:toda_deaut1} from cluster mutations.

\begin{Remark}
    Strictly speaking, we have obtained equation \eqref{eq:mut:tau:y} only in the region $l_{n,m}>0$.
    Certainly, this asymmetry comes from the tropical summation $\oplus$ in \rf{trop}. However, for any given $(n,m)$ on can redefine $z_0\rightarrow z_0q^{-c}$ for sufficiently large $c$ and obtain a positive
    $q$-exponent in $y_{(n,m)}$, then one returns to \eqref{eq:mut:tau:y}. This can be considered as an explanation why Lemma~\ref{lem:tau:x} does not have any restrictions on $(n,m)$.

    On the other side, one can ask what happens with the continuation of the solution $y_{(n,m)}=z_0^{1/N} q^{(l_{n,m}+N)/N^2}$ of the equation \eqref{eq:Ysystem:y} to the region $l_{n,m}<0$. It appears that after crossing the line for $l_{n,m}=0$  the behaviour of the solution changes drastically. We are going to discuss this issue elsewhere.
\end{Remark}

It is also convenient to rewrite \eqref{eq:toda_deaut1}, using the variables $j=n\ {\rm mod}\ N$ and $l = l_{n,m}=kn-Nm$: it takes the form
\eq{
	\tau_{j,l+N}\tau_{j,l-N}=\tau_{j,l}^2+z_0^{1/N} q^{l/N^2} \tau_{j+1,l+k}\tau_{j-1,l-k}\,.
\label{eq:toda_deaut2}
}
Then in the case $\gcd(N,k)=1$ the $\tau$-variables actually depend only on $l$, since the index $j$ is determined by $l$ and can be dropped. For generic case we have effectively $d=\gcd(N,k)$ auxiliary indices $ i\in \mathbb{Z}/d\mathbb{Z}$, for example for $k=0$ one gets $i\in \mathbb{Z}/N\mathbb{Z}$. However, it is convenient (for any $N,k$) to arrange $\tau$-variables into $N$ tau-functions by the formula $\tau_{j,l}=\tau_j(z_0 q^{l/N})$. Then equation \eqref{eq:toda_deaut2} can be rewritten as a $q$-difference equations in the variable $z=z_0 q^{l/N}$:
\eq{
	\tau_{j}\lb q z\rb \tau_{j}\lb q^{-1} z\rb=\tau_j(z)^2+z^{1/N} \tau_{j+1}\lb q^{k/N}z\rb \tau_{j-1}\lb q^{-k/N}z\rb\,.
	\label{eq:toda_deaut3}
}
Equations \eqref{eq:Ysystem}, \eqref{eq:toda_deaut1} and \eqref{eq:toda_deaut3} are the main results of Sect.~\ref{sec:Todas}. In Sect.~\ref{sec:sol} we present solutions of these equations.		
Recall that these equation have been derived for $k=0,\ldots, N-1$. It turns out however, that in $Y^{N,N}$ case one gets the same equations: we derive them below in Sect.~\ref{ssec:YNN}.

\subsection{The ``Uniform'' quiver}

\noindent
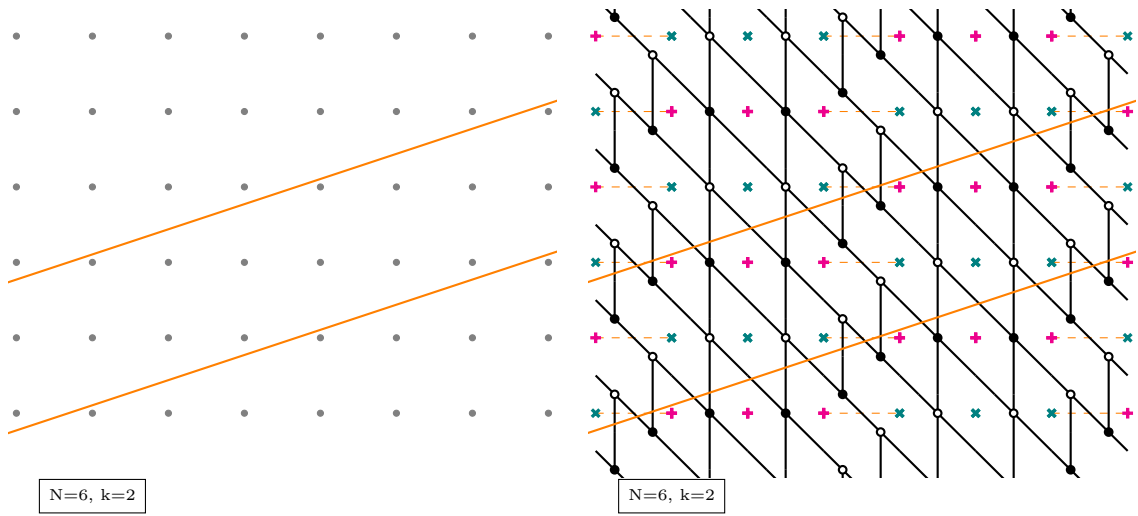
\begin{figure}[h!]
\begin{center}
\begin{tabular}{cc}

\begin{tikzpicture}

\tikzmath{int \N,\k,\Nk,\Lx,\Ly,\d,\Nd,\kd;
\N=6; \k=2;
\Lx=\N+1; \Ly=3;
\Nk=\N-\k;
\d=gcd(\N,\k); \Nd=\N/\d; \kd=\k/\d;
\scale=0.5;}

\tikzset{
style1/.style={dashed},
color1/.style={red},
color2/.style={green},
color3/.style={orange},
}

\begin{scope}[scale=\scale,font=\tiny]

\begin{scope}

\clip ($(-2,-1.5)-(0.2,0.2)$) rectangle ($(2*\Lx-2,4*\Ly-1.5)+(0.2,0.2)$);

\begin{scope}[xshift=-2cm]

\tikzmath{
for \i in {0,...,\Lx}{
 for \j in {0,...,2*\Ly-1}{
   {\draw[gray,fill=gray](2*\i,2*\j) circle(0.08);};
 };
};
}

\end{scope}

\tikzmath{
  \shift=0.2*sqrt(\N*\N+\k*\k)/\k;
}
\coordinate (slope) at ($(1,\k/\N)$);
\coordinate (start) at (-\shift,0);
\draw[color3,thick]($(start)-2*(slope)$)--+($2*\Lx*(slope)+3*(slope)$);
\coordinate(X1) at ($($(start)-2*(slope)$)!(0.7,0)!($(start)+2*\Lx*(slope)$)$);

\coordinate (start) at (-\shift,4);
\draw[color3,thick]($(start)-2*(slope)$)--+($2*\Lx*(slope)+3*(slope)$);

\tikzmath{
   \shift=0.2*sqrt(\N*\N+\k*\k)/\k-2/\kd;
}

\coordinate (start) at (-\shift,0);
\coordinate (X2) at ($($(start)-3*(slope)$)!(0.7,0)!($(start)+2*\Lx*(slope)$)$);



\end{scope}

\node[draw] at (0,-2.2) {N=\N, k=\k};

\end{scope}

\end{tikzpicture}

&

\begin{tikzpicture}

\tikzmath{int \N,\k,\Nk,\Lx,\Ly,\d,\Nd,\kd;
\N=6; \k=2;
\Lx=\N+1; \Ly=3;
\Nk=\N-\k;
\d=gcd(\N,\k); \Nd=\N/\d; \kd=\k/\d;
\scale=0.5;}

\tikzset{
style1/.style={dashed},
color1/.style={red},
color2/.style={green},
color3/.style={orange},
}

\begin{scope}[scale=\scale,font=\tiny]

\begin{scope}

\clip ($(-2,-1.5)-(0.2,0.2)$) rectangle ($(2*\Lx-2,4*\Ly-1.5)+(0.2,0.2)$);

\tikzmath{
for \i in {-1,...,\Lx-1}{
};
}

\tikzmath{
for \z in {-1,...,\Ly}{
\j=0;\l=-1;
for \i in {-1,...,\Lx-2}{
  if ((\i+1)*\k/\N)>=\j then{
    \j=\j+1;
    {
     \draw[color3,style1] (\i*2,4*\z)--(\i*2+2,4*\z) (\i*2,4*\z+2)--(\i*2+2,4*\z+2);
     \draw (\i*2,\z*4-1-\l) pic[scale=\scale] {dimer2};
     \draw[gray,thick] (\i*2,\z*4-1-\l) pic {vertices2=\scale};
      };
     \l=-\l;
  } else{
    {
     \draw  (\i*2,\z*4-1+\l) pic[scale=\scale] {dimer1};
     \draw[gray,thick]  (\i*2,\z*4-1+\l) pic {vertices1=\scale};
      };};
  };
};
}

\tikzmath{
  \shift=0.2*sqrt(\N*\N+\k*\k)/\k;
}
\coordinate (slope) at ($(1,\k/\N)$);
\coordinate (start) at (-\shift,0);
\draw[color3,thick]($(start)-2*(slope)$)--+($2*\Lx*(slope)+3*(slope)$);
\coordinate(X1) at ($($(start)-2*(slope)$)!(0.7,0)!($(start)+2*\Lx*(slope)$)$);

\coordinate (start) at (-\shift,4);
\draw[color3,thick]($(start)-2*(slope)$)--+($2*\Lx*(slope)+3*(slope)$);

\tikzmath{
   \shift=0.2*sqrt(\N*\N+\k*\k)/\k-2/\kd;
}

\coordinate (start) at (-\shift,0);
\coordinate (X2) at ($($(start)-3*(slope)$)!(0.7,0)!($(start)+2*\Lx*(slope)$)$);



\end{scope}

\node[draw] at (0,-2.2) {N=\N, k=\k};

\end{scope}

\end{tikzpicture}

\end{tabular}
\end{center}
\caption{Construction of the "Uniform" dimer lattice.}
\label{fig:uniform_lattice}
\end{figure}

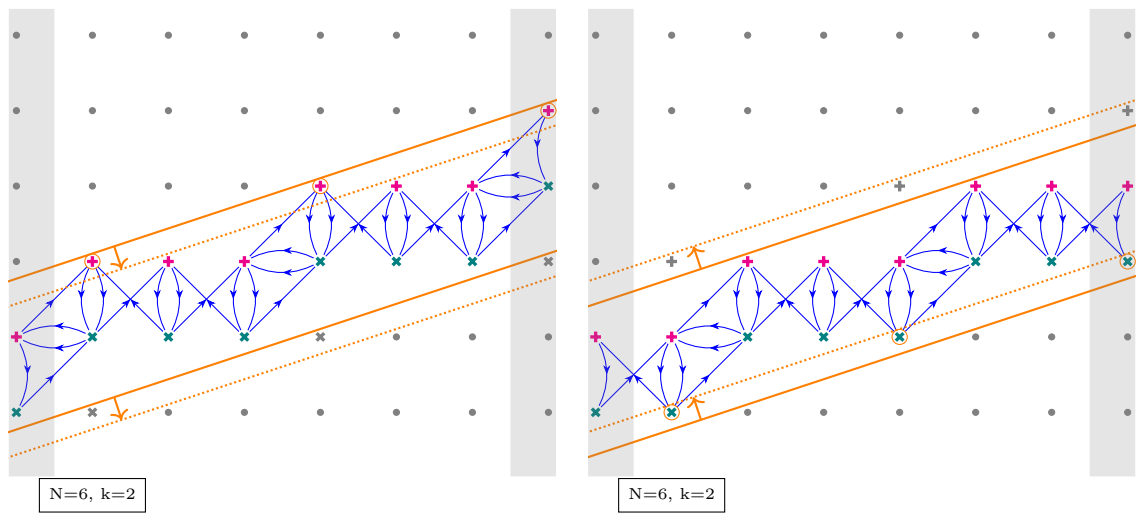
\begin{figure}[h!]
\begin{center}
\noindent
\begin{tabular}{cc}

\begin{tikzpicture}

\tikzmath{int \N,\k,\Nk,\Lx,\Ly,\d,\Nd,\kd;
\N=6; \k=2;
\Lx=\N+1; \Ly=3;
\Nk=\N-\k;
\d=gcd(\N,\k); \Nd=\N/\d; \kd=\k/\d;
\scale=0.5;}

\tikzset{
style1/.style={dashed},
color1/.style={red},
color2/.style={green},
color3/.style={orange},
}

\begin{scope}[scale=\scale,font=\tiny]

\begin{scope}

\clip ($(-2,-1.5)-(0.2,0.2)$) rectangle ($(2*\Lx-2,4*\Ly-1.5)+(0.2,0.2)$);

\begin{scope}[xshift=-2cm]

\tikzmath{
for \i in {0,...,\Lx}{
 for \j in {0,...,2*\Ly-1}{
   {\draw[gray,fill=gray](2*\i,2*\j) circle(0.08);};
 };
};
}

\tikzmath{
\j=0;
for \i in {0,...,\N}{
 if (\i*\k/\N)>=\j then{
     {\draw(2*\i,2*\j) pic{quiver2=\scale};};
     \j=\j+1;
   }else{
     {\draw(2*\i,2*\j) pic{quiver1=\scale};};
     };
 };
}

\tikzmath{
\j=0;\l=0;
for \i in {0,...,\N}{
 if (\i*\k/\N)>=\j then{
     if Mod(\k*\i,\N)==0 then {
       {\draw[orange](2*\i+2,2*\j+4) circle[radius=0.2];
        \draw(2*\i+2,2*\j) pic[gray,rotate=0]{crossFilled=\scale};
        };};
     \j=\j+1;
     };
 };
if \k==\N then {{\draw[orange] (0,2) circle[radius=0.2];};};
}

\end{scope}

\tikzmath{
  \shift=0.2*sqrt(\N*\N+\k*\k)/\k;
}
\coordinate (slope) at ($(1,\k/\N)$);
\coordinate (start) at (-\shift,0);
\draw[color3,thick]($(start)-2*(slope)$)--+($2*\Lx*(slope)+3*(slope)$);
\coordinate(X1) at ($($(start)-2*(slope)$)!(0.7,0)!($(start)+2*\Lx*(slope)$)$);

\coordinate (start) at (-\shift,4);
\draw[color3,thick]($(start)-2*(slope)$)--+($2*\Lx*(slope)+3*(slope)$);

\tikzmath{
   \shift=0.2*sqrt(\N*\N+\k*\k)/\k-2/\kd;
}

\coordinate (start) at (-\shift,0);
\draw[color3,thick,densely dotted]($(start)-4*(slope)$)--+($2*\Lx*(slope)+3*(slope)$);
\coordinate (X2) at ($($(start)-3*(slope)$)!(0.7,0)!($(start)+2*\Lx*(slope)$)$);

\draw[color3,thick,densely dotted]($(start)-4*(slope)+(0,4)$)--+($2*\Lx*(slope)+3*(slope)$);

\draw[color3,->,thick] ($(X1)+(0,4)$)--($(X2)+(0,4)$);
\draw[color3,->,thick] ($(X1)+(0,0)$)--($(X2)+(0,0)$);

\fill[gray,opacity=0.2] ($(-2,-1.5)-(0.2,0.2)$) rectangle ($(-1,4*\Ly-1.5)+(0,0.2)$)
($(2*\N-1,-1.5)-(0,0.2)$) rectangle ($(2*\Lx-2,4*\Ly-1.5)+(0.2,0.2)$);

\end{scope}

\node[draw] at (0,-2.2) {N=\N, k=\k};

\end{scope}

\end{tikzpicture}
&

\begin{tikzpicture}

\tikzmath{int \N,\k,\Nk,\Lx,\Ly,\d,\Nd,\kd;
\N=6; \k=2;
\Lx=\N+1; \Ly=3;
\Nk=\N-\k;
\d=gcd(\N,\k); \Nd=\N/\d; \kd=\k/\d;
\scale=0.5;}

\tikzset{
style1/.style={dashed},
color1/.style={red},
color2/.style={green},
color3/.style={orange},
}

\begin{scope}[scale=\scale,font=\tiny]

\begin{scope}

\clip ($(-2,-1.5)-(0.2,0.2)$) rectangle ($(2*\Lx-2,4*\Ly-1.5)+(0.2,0.2)$);

\begin{scope}[xshift=-2cm]

\tikzmath{
for \i in {0,...,\Lx}{
 for \j in {0,...,2*\Ly-1}{
   {\draw[gray,fill=gray](2*\i,2*\j) circle(0.08);};
 };
};
}

\tikzmath{
\j=0;
for \i in {0,...,\N}{
 if (\i*\k/\N-1/\Nd)>=\j then{
     {\draw(2*\i,2*\j) pic{quiver2=\scale};};
     \j=\j+1;
   }else{
     {\draw(2*\i,2*\j) pic{quiver1=\scale};};
     };
 };
}

\tikzmath{
\j=0;\l=0;
for \i in {0,...,\N}{
 if (\i*\k/\N)>=\j then{
     if Mod(\k*\i,\N)==0 then {
       {\draw[orange](2*\i+2,2*\j+4) pic[gray,rotate=45]{crossFilled=\scale};
        \draw[orange](2*\i+2,2*\j) circle[radius=0.2];
        };};
     \j=\j+1;
     };
 };
if \k==\N then {{\draw[orange] (0,2) circle[radius=0.2];};};
}

\end{scope}

\tikzmath{
  \shift=0.2*sqrt(\N*\N+\k*\k)/\k;
}
\coordinate (slope) at ($(1,\k/\N)$);
\coordinate (start) at (-\shift,0);
\draw[color3,thick,densely dotted]($(start)-2*(slope)$)--+($2*\Lx*(slope)+3*(slope)$);
\coordinate(X1) at ($($(start)-2*(slope)$)!(0.7,0)!($(start)+2*\Lx*(slope)$)$);

\coordinate (start) at (-\shift,4);
\draw[color3,thick,densely dotted]($(start)-2*(slope)$)--+($2*\Lx*(slope)+3*(slope)$);

\tikzmath{
   \shift=0.2*sqrt(\N*\N+\k*\k)/\k-2/\kd;
}

\coordinate (start) at (-\shift,0);
\draw[color3,thick]($(start)-4*(slope)$)--+($2*\Lx*(slope)+3*(slope)$);
\coordinate (X2) at ($($(start)-3*(slope)$)!(0.7,0)!($(start)+2*\Lx*(slope)$)$);

\draw[color3,thick]($(start)-4*(slope)+(0,4)$)--+($2*\Lx*(slope)+3*(slope)$);

\draw[color3,<-,thick] ($(X1)+(0,4)$)--($(X2)+(0,4)$);
\draw[color3,<-,thick] ($(X1)+(0,0)$)--($(X2)+(0,0)$);

\fill[gray,opacity=0.2] ($(-2,-1.5)-(0.2,0.2)$) rectangle ($(-1,4*\Ly-1.5)+(0,0.2)$)
($(2*\N-1,-1.5)-(0,0.2)$) rectangle ($(2*\Lx-2,4*\Ly-1.5)+(0.2,0.2)$);

\end{scope}

\node[draw] at (0,-2.2) {N=\N, k=\k};

\end{scope}

\end{tikzpicture}

\end{tabular}
\end{center}
\caption{Mutations of the "Uniform" quiver.}
\label{fig:uniform_mutations}
\end{figure}

\noindent
The main object in the paper \cite{BGM} was the group $\mathcal{G}_\mathcal{Q}$. For a given quiver $\mathcal{Q}$ this group consists of compositions of mutations and permutations of the vertices, which preserve the quiver $\mathcal{Q}$. An element $T\in\mathcal{G}_\mathcal{Q}$ of infinite order generates a discrete flow.

As explained above, mutations correspond to changing the shape of the polyline strip (see Lemma~\ref{Lem:YNk:quiver} and Lemma~\ref{Lem:mut}), group $\mathcal{G}_\mathcal{Q}$ consists of the elements which transform the polyline strip to another strip of the same shape~\footnote{Actually we have not proven that any element of $\mathcal{G}_\mathcal{Q}$ can be given as a composition of mutations in Fig.~\ref{fig:mut:quiv} and permutations, but we believe that this is indeed true.}.
Such transformations exist for any initial shape of the polyline strip for $0<k<N$, but there exists ``the best'' (or ``uniform'') shape of such strip and corresponding ``uniform''  quiver $\mathcal{Q}^{N,k}_{\mathrm{u}}$~\footnote{Like the strictly horizontal shape for $k=0$. This case stands a little bit aside, but one can naturally identify $\mc{Q}^{N,0}_{\rm u}$ with a sequence of blocks of type 0.}.
For the ``uniform''  quiver $\mathcal{Q}^{N,k}_{\mathrm{u}}$ the discrete flow element $T\in\mathcal{G}_{\mathcal{Q}^{N,k}}$ can be presented as a composition of $d=\gcd(N,k)$ mutations and permutation. The construction of such ``uniform'' $\mathcal{Q}^{N,k}_{\mathrm{u}}$ is not used in other parts of the paper and can be viewed as a kind of elementary olympiad problem, this construction appeared, for example, in \cite{Ko}.

The construction goes as follows. Draw two slanting lines of slope $k/N$: $(x,y)\in \{(0,\epsilon+n)+t(N,k)|t\in \mathbb R, n=0,2\}$, where $\epsilon$ is a sufficiently small positive number. The integer points between these two lines form the polyline strip of width 2, this strip can be filled by the blocks of type $0$ and $\mathrm{I}$. Therefore, by Lemma~\ref{Lem:YNk:quiver} these integer points can be viewed as the vertices of the quiver of $\mathcal{Q}^{N,k}_{\mathrm{u}}$.

The transformation  $T\in\mathcal{G}_{\mathcal{Q}^{N,k}_{\mathrm{u}}}$ can be realized as a shift of the slanting lines down by $\gcd(N,k)/N$. During such shift each of these lines goes though $d=\gcd(N,k)$ integer points (up to $(N,k)$ periodicity), they are represented by circles in the figure below. The transformation of the quiver from Lemma~\ref{Lem:mut} can be given by mutations $(\mathrm{I},0)\leftrightarrow (0,\mathrm{I})$ in these integer points. We see that the integer points between shifted lines form the strip of the same shape as between original ones. Therefore, the resulting quiver coincides with $\mathcal{Q}^{N,k}_{\mathrm{u}}$ up to permutation of the vertices.

\subsection{Discrete flows for $Y^{N,N}$ systems}
\label{ssec:YNN}

The Thurston diagram, bipartite graph and quiver $\mathcal{Q}^{N,N}$ for the $Y^{N,N}$-triangle were described in Sect.~\ref{ssec:Thurst}. By Lemma~\ref{Lem:YNkThurst} they consist of $N$ blocks of type $\mathrm{I}$. Therefore no allowed mutations from Lemma~\ref{Lem:mut} can be performed in this case. However, there exist rather nontrivial compositions of mutations which preserve such quiver~\footnote{Note that there are no spider moves for such bipartite graphs, since it comes out of hexagonal bi-partite lattice, see also \cite{ILP} where the discrete flows for arbitrary hexagonal lattices were studied.}.

The polyline strip, which consist of the vertices of the quiver, can be now drawn between two lines with unit slope, for example $y=x-1+m+\epsilon$ and $y=x+1+m+\epsilon$, where $\epsilon$ is a small real number. Let us label by $j\in \mathbb{Z}/N\mathbb{Z}$ the ``upper vertices'' (with the coordinates $(j,j+m+1)$) and by $j'\in \mathbb{Z}/N\mathbb{Z}$ --- the ``lower vertices'' (with the coordinates $(j,j+m)$). Then discrete flow is given by the quiver automorphism of the form
\eq{\label{eq:T:YNN}
	T=(1 3' 3 4' 4 \ldots N' N 1')(2' 2)\cdot \mu_N \mu_{N-1} \ldots \mu_2 \mu_1 \cdot \mu_3 \mu_4 \ldots \mu_N\,.
}
On a plane this automorphism of the quiver can be interpreted as a unit shift down
\begin{equation}
\tau_{(j,j+m)}=T(\tau_{(j,j+m+1)})\,,\qquad \tau_{(j,j+m-1)}=T(\tau_{(j,j+m)})\,,
\end{equation}
so that tau-functions in the ``lower'' vertices, which become ``upper'' vertices of the quiver after this shift down, remain intact, but one gets nontrivial formulas for the ``new lower'' transformed tau-functions $\{\tau_{(j,j+m-1)}\}$. As before, these formulas are equivalent to bilinear relations. As in Sect.~\ref{ssec:Hirota} let us introduce $N$ tau-functions by collecting $\tau_{j,j+m}=\tau_j(q^m z_0)$, and introduce nontrivial coefficients $\{y_{j,m}\}$, related with the Casimir functions; then the action of the generator $T$ is equivalent to bilinear equations
\eq{\label{eq:bilin:YNN}
	\tau_{j}\lb q z\rb \tau_{j}\lb q^{-1} z\rb=(1-z)\tau_j(z)^2+z^{1/N} \tau_{j+1}\lb qz\rb \tau_{j-1}\lb q^{-1}z\rb, \quad j\in \mathbb{Z}/N\mathbb{Z}\,,\quad z=z_0q^m.
}
We do not present here complete (rather cumbersome) proof of \eqref{eq:bilin:YNN}, but illustrate it on the example below.
Note immediately, that in the case $k=N$ (unlike $k<N$ case) the bilinear relations \rf{eq:bilin:YNN} are not recurrent formulas for evolution in discrete time, since they contain simultaneously $\tau_j(qz)$ in the l.h.s. and $\tau_{j+1}(qz)$ in the r.h.s. As is illustrated in the example, the recurrent relations themselves are more complicated.
\begin{Example}
	Consider the case $Y^{3,3}$, the action of $T$ defined in \eqref{eq:T:YNN} gives here the following trilinear relations:
	\eq{\label{eq:Y33:sol}
\tau_1(qz)\tau_1(q^{-1}z)\tau_2(q^{-1}z)=\tau_2(q^{-1}z)\tau_1(z)^2+z^{1/3}\tau_3(q^{-1}z)\tau_2(z)^2+z^{2/3} \tau_1(q^{-1}z)\tau_3(z)^2\,,\\	 \tau_2(qz)\tau_3(q^{-1}z)\tau_3(q^{-1}z)=z^{2/3}\tau_2(q^{-1}z)\tau_1(z)^2+\tau_3(q^{-1}z)\tau_2(z)^2+z^{1/3} \tau_1(q^{-1}z)\tau_3(z)^2\,,\\		 \tau_3(qz)\tau_3(q^{-1}z)\tau_1(q^{-1}z)=z^{1/3}\tau_2(q^{-1}z)\tau_1(z)^2+z^{2/3} \tau_3(q^{-1}z)\tau_2(z)^2+\tau_1(q^{-1}z)\tau_3(z)^2\,.
	}
	On the other hand the bilinear equations  \eqref{eq:bilin:YNN} form a system of three linear equations for $\{\tau_j(qz)\}$
\eqs{
\begin{pmatrix}
\tau_1(q^{-1}z)&  -z^{1/3}\tau_3(q^{-1}z)& 0 \\
0 &\tau_2(q^{-1}z) &-z^{1/3}\tau_1(q^{-1}z) \\
-z^{1/3}\tau_2(q^{-1}z)& 0 &\tau_3(q^{-1}z)
\end{pmatrix}
\begin{pmatrix}
\tau_1(qz)\\
\tau_2(qz)\\
\tau_3(qz)
\end{pmatrix}
=(1-z)
\begin{pmatrix}
\tau_1(z)^2\\	
\tau_2(z)^2\\	
\tau_3(z)^2	
\end{pmatrix}.
 	}
 Solving this system using Cramer's rule one gets \eqref{eq:Y33:sol}.
\end{Example}

In order to relate the bilinear form \eqref{eq:bilin:YNN} to to \eqref{eq:toda_deaut2} we introduce new variables $\hat\tau_j(z)$ by the formula
\eq{\label{eq:hattau}
\hat\tau_j(z)=(qz;q,q)_\infty \tau_j(z)\,,
}
where $q$-Pochhammer symbol $(qz;q,q)_\infty = \prod_{i,j=0}^\infty(1-zq^{1+i+j})=\prod_{n=1}^\infty(1-zq^{n})^n$ (see also more generic formula \eqref{eq:Poch} below). Then renormalized functions $\hat\tau_j(z)$ solve exactly \eqref{eq:toda_deaut2}, which in this case has the form
\eq{
\hat \tau_{j}\lb q z\rb \hat \tau_{j}\lb q^{-1} z\rb=\hat \tau_j(z)^2+z^{1/N} \hat \tau_{j+1}\lb qz\rb \hat \tau_{j-1}\lb q^{-1}z\rb\,.
}
We will also need autonomous verstion of these bilinear relations. Note, however, that for $q=1$ the substitution \eqref{eq:hattau} is ill defined. Therefore, in order to remove the factor $(1-z)$ in \eqref{eq:bilin:YNN}, we use another substitution, namely
\eq{
\tau_{j,m}=(1-z)^{\frac12(m-m^2)}\tilde\tau_{j,m}\,,
}
so that the variables $\tilde\tau_{j,m}$ in the r.h.s. satisfy equation \eqref{eq:toda_deaut1}.

\subsection{Discrete flows for $L^{1,2N-1,2}$ systems}
\label{ssec:L12n-12}

The algorith for the $L^{1,2N-1,2}$ polygon is again the same: one constructs the Thurston diagram, bipartite graph on torus, and the quiver. We omit here the details, since they are similar to previous cases, and moreover, there is no clear interpretation of the
intermediate stages~\footnote{See also \cite{Hanany2}, where using another method the case of generic $L^{a,b,c}$ was discussed.}. Hence, we just present the final form of the quiver on Fig.~\ref{fig:quiverL}.

\tikzset{quiver2NN/.pic={
\tikzset{styleArrow/.style={blue,postaction={decorate},decoration={markings,mark=at position 0.5 with {\arrow{Stealth[scale=1.5]}}}}}

\draw [styleArrow,bend left] (0,2) to (0,0);
\draw [styleArrow,bend right] (2,2) to (2,0);
\draw [styleArrow,bend left=20] (2,0) to (0,2);
\draw [styleArrow,bend right=20] (2,0) to (0,2);
\draw [styleArrow] (0,2) to (2,2);
\draw [styleArrow] (0,0) to (2,0);

\draw (0,0) pic[magenta]{crossFilled=1}
(2,0) pic[magenta]{crossFilled=1}
(2,2) pic[magenta]{crossFilled=1}
(0,2) pic[magenta]{crossFilled=1};

}
}

\tikzset{quiver3NN/.pic={
\tikzset{styleArrow/.style={blue,postaction={decorate},decoration={markings,mark=at position 0.5 with {\arrow{Stealth[scale=1.5]}}}}}

\draw [styleArrow] (0,0) to (2,2);
\draw [styleArrow] (0,2) to (2,0);
\draw [styleArrow] (2,2) to (0,2);
\draw [styleArrow] (2,0) to (0,0);

\draw (0,0) pic[magenta]{crossFilled=1}
(2,0) pic[magenta]{crossFilled=1}
(2,2) pic[magenta]{crossFilled=1}
(0,2) pic[magenta]{crossFilled=1};

}
}

\begin{figure}[!h]
\begin{center}
\begin{tikzpicture}
\tikzset{styleArrow/.style={blue,postaction={decorate},decoration={markings,mark=at position 0.5 with {\arrow{Stealth[scale=1.5]}}}}}
\tikzmath{\Ni=6;
\N=\Ni;}

\draw[styleArrow,bend right] (0,2) to (0,0);
\draw[styleArrow,bend left] (2*\N-2,2)to (2*\N-2,0);
\draw[styleArrow] plot[smooth] coordinates {(2*\N-2,0) (2*\N-4,-1) (2,-1) (0,0) };
\draw[styleArrow] plot[smooth] coordinates { (2*\N-2,2) (2*\N-4,3) (2,3) (0,2)};
\draw[styleArrow] plot[smooth] coordinates {(0,0) (2,-1.8) (2*\N-2+1,-1.3) (2*\N-2,2)};
\draw[styleArrow] plot[smooth] coordinates {(0,0) (2,-1.6) (2*\N-2+0.6,-1.1) (2*\N-2,2)};

\draw(0,0) pic{quiver2NN};
\draw(2,0) pic{quiver3NN};
\tikzmath{
for \i in {2,...,\N-2}{{\draw(2*\i,0) pic{quiver2NN};};};
\k=1; int \i;
for \i in {1,...,2*\N}{\x=Mod(-\i+4,2*\N)-Mod(\i,2);\y=2.8*Mod(\i,2)-0.4; {\draw(\x,\y) node {\i};};};
}

\node[draw] at (0,-2){N=\Ni};

\end{tikzpicture}

\end{center}
\caption{Quiver for $L^{1,2N-1,2}$ system.}
\label{fig:quiverL}
\end{figure}
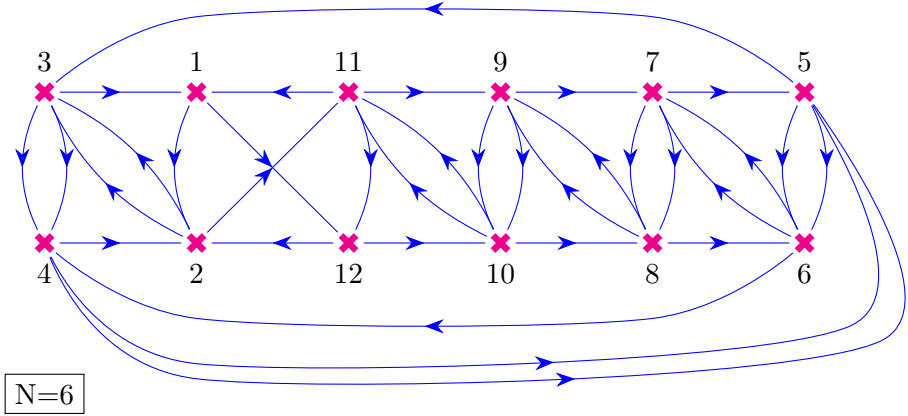

\noindent
The quiver automorphism is now given by
\eq{
T=\lb(2N)(2N-1)\ldots 4321\rb\cdot\mu_1\,,
}
and for the cluster algebra coefficients, belonging to the same semifield as before, one can take
\be
y_k =q^\frac1{2N-1},\quad k\neq 1,2N\,,\\
y_1 =z_0 q^{\frac{n}{2N-1}}\,,\quad
y_{2N} =z_0^{-1} q^{\frac{1-n}{2N-1}}\,.
\ee
This automorphism acts on the tau-functions as
\eqs{
T(\tau_i)&=\tau_{i+1},\ \ \ \ i=1,\ldots,2N-1,
\\
T(\tau_{2N})&=\frac{\tau_{2N}\tau_2+ z_0 q^{\frac{n}{2N-1}}\tau_{2N-1}\tau_3}{\tau_1}\,.
}
Defining $T^l(\tau_m)=\tau_{m+l}=\tau( q^{\frac{m+n+l}{2N-1}} z_0)=\tau\lb  q^{\frac{m+l}{2N-1}} z\rb$, we get
\eq{
\tau\lb q^{\frac{N}{2N-1}} z\rb \tau\lb q^{-\frac{N}{2N-1}} z\rb=\tau\lb  q^{\frac{N-1}{2N-1}} z\rb\tau\lb q^{-\frac{N-1}{2N-1}} z\rb
+z\cdot \tau\lb q^{\frac{N-2}{2N-1}} z\rb \tau\lb q^{\frac{2-N}{2N-1}} z\rb\,,
\label{eq:Labc_equation}
}
the Hirota bilinear equation for this case.

\begin{Remark} It is easy to notice an accidental coincidence of two Newton polygons from different families: $L^{1,3,2}=Y^{2,1}$, so there is a question about relation of corresponding difference equations. Let us rewrite \eqref{eq:Labc_equation} as
\eq{
\tau_{l+2}\tau_{l-2}=\tau_{l+1}\tau_{l-1}+z_0q^{l/3}\tau_l^2
}
and define $\tau_l=\hat\tau_{-l}z_0^{\frac{l^2}8}q^{\frac{l^3}{72}}$ so that
\eq{
\hat\tau_{l+2}\hat\tau_{l-2}=z_0^{-\frac34}q^{\frac94l}\hat\tau_{l+1}\hat\tau_{l-1}+\hat\tau_l^2}\,.
After rescaling the variables $z_0=\hat z_0^{-2/3}$, $q=\hat q^{1/9}$  we obtain
\eq{
\hat\tau_{l+2}\hat\tau_{l-2}=\hat z_0^{1/2}\hat q^{l/4}\hat\tau_{l+1}\hat\tau_{l-1}+\hat\tau_l^2
}
which actually coincides with \eqref{eq:toda_deaut2}.
\end{Remark}

\section{Solutions of Toda systems} \label{sec:sol}

\subsection{Solutions and Nekrasov functions}
\label{ssec:Nek}

In order to write the solutions let us prepare first some special functions. Recall that infinite multiple $q$-deformed Pochhammer symbol is defined by
\begin{equation}\label{eq:Poch}
(x;t_1,\ldots t_N)_{\infty}=\prod_{i_1,\ldots i_N=0}^{\infty}\left(1-x\prod_{k=1}^Nt_k^{i_k}\right)
=\exp\left(-\sum_{m=1}^{\infty}\frac{x^m}m\prod_{k=1}^N\frac1{1-t_k^m}\right).
\end{equation}
The product exists if all $|t_k|<1$, but the exponent is meaningfull in the region $|t_k|\neq 1$, so that the function $(x;t_1,\ldots t_N)_{\infty}$, defined by the second expression, satisfies
\be
(x;t_1^{-1},t_2,\ldots ,t_N)_{\infty}=(xt_1;t_1,t_2,\ldots ,t_N)^{-1}_{\infty} \label{qtrans}\,.
\ee
By $Z^{N,k}(\vec{u};q_1,q_2|z)$ we denote Nekrasov partition function for the $5d$ pure $SU(N)$ gauge theory with the Chern-Simons level $k$. Here $\vec{u}=(u_1,\ldots,u_N)$ with $u_j=e^{R a_j}$,  $q_1=e^{R\epsilon_1}, q_2=e^{R\epsilon_2}$,  where
$\vec{a}$ is the condensate of scalar field, $\epsilon_1, \epsilon_2$ are parameters of $\Omega$-background, and $R$ is the radius of 5-th compact dimension. For $SU(N)$ gauge group $\sum_j a_j=0$, and therefore $\prod_j u_j=1$.

The formulas for Nekrasov functions are given e.g. in \cite{GNY} (following \cite{IKP02,Ta}), and have the form
\begin{equation} \label{eq:ZNekkCs}
Z^{N,k}(\vec{u};q_1,q_2|z)=Z^{N,k}_{\rm cl}(\vec{u};q_1,q_2|z)\cdot Z^{N}_{\rm 1-loop}(\vec{u};q_1,q_2)\cdot Z^{N,k}_{\rm inst}(\vec{u};q_1,q_2|z)\,,
\end{equation}
where
\eqs{
\label{eq:ZNekkCs:inst}
Z^{N,k}_{\rm cl}(\vec{u};q_1,q_2|z)&=\exp\left(\log z\frac{\sum\left(\log u_i \right)^2}{-2 \log q_1 \log q_2}+k\frac{\sum\left(\log u_i \right)^3}{-6\log q_1 \log q_2}\right),\\
Z^{N}_{\rm 1-loop}(\vec{u};q_1,q_2)&=\prod_{1\leq i\neq j\leq N}({u_i}/{u_j};q_1,q_2)_\infty\,,\\
Z^{N,k}_{\rm inst}(\vec{u};q_1,q_2|z)&=\sum_{\vec{\lambda}} \frac{z^{|\vec{\lambda}| } \prod_{i=1}^N (\mathsf{T}_{\lambda^{(i)}}(u_i;q_1,q_2))^k}{\prod_{i,j=1}^N \mathsf N_{\lambda^{(i)},\lambda^{(j)}}(u_i/u_j;q_1,q_2)}\,,\\
\vec{\lambda}&=(\lambda^{(1)},\ldots,\lambda^{(N)}),\;\; |\vec{\lambda}|=\sum |\lambda^{(i)}|,\;\;
 |\lambda|=\sum \lambda_j,\;\\
\mathsf N_{\lambda,\mu}(u,q_1,q_2)
&=\prod_{s\in \lambda}
(1-u q_2^{-a_\mu(s)-1}q_1^{\ell_\lambda(s)}) \cdot
\prod_{s \in \mu}
(1-u q_2^{a_\lambda(s)}q_1^{-\ell_\mu(s)-1})\,,\\
\mathsf{T}_\lambda(u;q_1,q_2)&=u^{|\lambda|}q_1^{\frac12(\lVert\lambda'\rVert-|\lambda'|)}
q_2^{\frac12(\lVert\lambda\rVert-|\lambda|)}=\prod_{(i,j)\in \lambda} uq_1^{i-1}q_2^{j-1},  \quad \lVert\lambda\rVert=\sum \lambda_j^2\,.
}
Here $\{\lambda^{(i)}|i=1,\ldots,N\}$ are $N$-tuples of partitions (or corresponding Young diagrams), $\lambda'$ denotes the partition transposed to $\lambda$, $a_{\lambda}(s), l_{\lambda}(s)$ denote the lengths of arms and legs for the box $s$ in the Young diagram $\lambda$~\footnote{Our conventions for the Chern-Simons term differ from those of \cite{GNY} by sign of the level: $k$ here is $-k$ in \cite{GNY}.}.
Below we consider only the case $q_1=q$, $q_2=q^{-1}$, the infinite products $(x;q,q^{-1})_\infty$ should be understood, using the formula in the r.h.s. of \eqref{eq:Poch} or \eqref{qtrans}. The case of arbitrary $q_1$ and $q_2$ corresponds to the quantum deautonomized system with the multiplicative quantum parameter $p=q_1q_2$, see \cite{BGM}. We are going to return to the higher-rank quantum cluster systems elsewhere.

\begin{Remark}\label{rem:0kN}
	If the Chern-Simons level is  restricted to the region $0\leq k \leq N$, then the Nekrasov functions $Z^{N,k}(\vec{u};q,t|z)$ are equivalent to the topological string amplitudes on 3d toric Calabi-Yau, defined by the polygons $Y^{N,k}$ from Fig.~\ref{fig:YNk} (see also Theorem~\ref{th:polygons}). These toric Calabi-Yau correspond to resolved $A_{N-1}$ singularity, fibered over $\mathbb{CP}^1$, or minimal resolution of the $Y^{N,k}$ singularity.
	
	We do not discuss here the convergence of the series expansions for $Z^{N,k}_{\rm inst}$. It has been proven for $N=2,k=0$ (and $q_1=q$, $q_2=q^{-1}$) in \cite{BS:2016:1}, this proof works for any $N$ and $k=0$, other regions for the parameters $q_1,q_2$ were studied in \cite{Felder:2017}. Numeric experiments suggest that $Z^{N,k}_{\rm inst}$ converge for $-N\leq k\leq N$, but diverge for $|k|>N$.
	
	Below we restrict ourselves to the region $0\leq k \leq N$, as in Sect.~\ref{sec:Todas} where it comes from the fact that for $k>N$ the Newton polygons for $Y^{N,k}$ (see Fig.~\ref{fig:YNk}) become non-convex. Note also, that the difference equations \eqref{eq:toda_deaut3} for $k>N$ have ``higher order''.
\end{Remark}

We identify the root lattice of $A_{N-1}$ with the set $Q_{N-1}=\{(n_1,\ldots,n_N)\in \mathbb{Z}^{N}|\sum_i n_i=0\}$. The fundamental weights are then
\be
\omega_j=\underbrace{\left(\frac{N-j}{N},\ldots,\frac{N-j}{N}\right.}_{j\ \text{times}},\left.\frac{-j}{N}, \ldots, \frac{-j}{N}\right)\in\mathbb{Q}^N, \text{ where } 1 \leq j\le N-1,,
\ee
and below we also use notation $\omega_0=0$. For any $\vec{\Lambda}=(\Lambda_1,\ldots,\Lambda_N)$ denote $s^{\Lambda}=s_1^{\Lambda_1}\cdot\ldots\cdot s_N^{\Lambda_N}$ and $q^{\vec{\Lambda}}=(q^{\Lambda_1},\ldots,q^{\Lambda_N})$. If $\sum_j \Lambda_j=0$, then $s^{\Lambda}$ is invariant under the symmetry $(s_1,\ldots,s_n)\rightarrow (ts_1,\ldots,ts_n)$, so the actual number of dual parameters $\{s_j\}$ is $N-1$ and coincides with the number of parameters $\{u_j\}$ constrained by $\prod_j u_j=1$.

Define the Fourier transformed Nekrasov functions by
\begin{equation}\label{eq:TauGen}
\mathcal{T}^{N,k}_j(\vec{u},\vec{s};q|z)=\sum_{\vec{\Lambda} \in Q_{N-1}+\omega_j} s^\Lambda Z^{N,k}(\vec{u}q^{\vec{\Lambda}};q^{-1},q|z),\ \ \ \ \ j\in \mathbb{Z}/N\mathbb{Z}\,.
\end{equation}
Sometimes we omit $\vec{u},\vec{s}$ (parameters of solution) and $q$ (lattice step) below for brevity.

\begin{conj}\label{conj:bilinTau}
	The functions \eqref{eq:TauGen} satisfy the bilinear relations
\begin{equation}\label{eq:Tau:bilin:YNk}
\mathcal{T}^{N,k}_j(qz)	 \mathcal{T}^{N,k}_j(q^{-1}z)=\mathcal{T}^{N,k}_j(z)^2-z^{1/N}\mathcal{T}^{N,k}_{j+1}(q^{k/N}z)\mathcal{T}^{N,k}_{j-1}(q^{-k/N}z)\,.
\end{equation}
\end{conj}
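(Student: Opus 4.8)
The plan is to substitute the explicit Fourier series \eqref{eq:TauGen} into both sides of \eqref{eq:Tau:bilin:YNk} and reduce the claimed identity to a family of bilinear relations directly on the Nekrasov partition functions $Z^{N,k}(\vec u;q^{-1},q|z)$ --- these should be a Chern--Simons--deformed, K-theoretic version of the blow-up equations of \cite{GNY} (following Nakajima--Yoshioka) for the minimal resolution of the blow-up of $\mathbb{C}^2/\mathbb{Z}_2$. Since $\mathcal{T}^{N,k}_j(z)=\sum_{\vec\Lambda\in Q_{N-1}+\omega_j}s^\Lambda Z^{N,k}(\vec uq^{\vec\Lambda}|z)$, every product in \eqref{eq:Tau:bilin:YNk} becomes a double lattice sum over $(\vec\Lambda_1,\vec\Lambda_2)$, which I would reorganize by the ``center of mass'' $\vec\mu=\vec\Lambda_1+\vec\Lambda_2$ and the ``relative'' vector $\vec\nu=\vec\Lambda_1-\vec\Lambda_2$. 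On the left $\vec\mu$ runs over $2\omega_j+Q_{N-1}$, and since $\omega_{j-1}+\omega_{j+1}-2\omega_j\in Q_{N-1}$ for every $j\in\mathbb{Z}/N\mathbb{Z}$ the $z^{1/N}$-term on the right keeps $\vec\mu$ in the same coset; hence the coefficient of each monomial $s^\mu$ can be compared separately, turning \eqref{eq:Tau:bilin:YNk} into, for every $\vec\mu$, an identity of the schematic shape
\begin{equation*}
\sum_{\vec\nu\in Q_{N-1}}Z^{N,k}\!\lb\vec uq^{\frac{\vec\mu+\vec\nu}{2}}\Big|qz\rb Z^{N,k}\!\lb\vec uq^{\frac{\vec\mu-\vec\nu}{2}}\Big|q^{-1}z\rb=\sum_{\vec\nu}Z^{N,k}(\cdots|z)\,Z^{N,k}(\cdots|z)-z^{1/N}\!\!\sum_{\vec\nu}Z^{N,k}(\cdots|q^{\frac kN}z)\,Z^{N,k}(\cdots|q^{-\frac kN}z),
\end{equation*}
where in the last sum the Coulomb shifts are centred at $\tfrac12\vec\mu+\tfrac12(\omega_{j+1}-\omega_{j-1})$.

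Next I would unpack $Z^{N,k}=Z^{N,k}_{\rm cl}\,Z^{N}_{\rm 1-loop}\,Z^{N,k}_{\rm inst}$ and track the prefactors. Because $q_1=q^{-1}$, $q_2=q$ makes $\log q_1\log q_2=-(\log q)^2$, substituting $\log u_i\mapsto\log u_i+\Lambda_i\log q$ into $Z^{N,k}_{\rm cl}$ produces exactly the powers of the dual variables $s_i$ (from the quadratic term) together with the powers of $q^{1/N}$ and $q^{k/N}$ that shift $z$ (from the cross terms of the quadratic and cubic pieces with $\log z$, and from the $k$-term). After pulling these out, the residual $\vec\nu$-sum on each side should coincide with the sum appearing in the K-theoretic blow-up formula: with $\epsilon_1=-\epsilon_2$ the factors $Z(\cdot|q_2z)$ and $Z(\cdot|q_1z)$ with moduli shifted by $\pm\tfrac12\vec\nu$ are the two ``halves'' of the blow-up of $\mathbb{C}^2$ glued along the exceptional $\mathbb{P}^1$. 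The fractional $z^{1/N}$ on the right carries the non-integral instanton charge of the weight shift $\omega_{j\pm1}-\omega_j$ (a fundamental weight, not a root); the overall minus sign is a sign factor $(-1)^{(\vec\nu,\rho)}$ of the type appearing in K-theoretic blow-up formulae, restricted to the ``odd'' flux sublattice; and the two right-hand terms together reassemble the single blow-up partition function whose extra exceptional component (from blowing up $\mathbb{C}^2/\mathbb{Z}_2$ rather than $\mathbb{C}^2$) accounts for the two-term splitting.

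For the remaining combinatorial part I would use that $\sum_{\vec\Lambda}s^\Lambda Z^{N}_{\rm 1-loop}(\vec uq^{\vec\Lambda})\,Z^{N,k}_{\rm inst}(\vec uq^{\vec\Lambda}|z)$ is the standard resummation of the $SU(N)$ K-theoretic topological vertex into staircase (Maya-diagram) profiles; in that language the required blow-up identity is a Hirota--Fay bilinear identity for the associated ($q$-deformed $\widehat{\mathfrak{gl}}_N$, Chern--Simons-twisted) tau functions. In the autonomous limit $q\to1$ this object degenerates to the Riemann theta function on the Jacobian of the $Y^{N,k}$ hyperelliptic curve, and \eqref{eq:Tau:bilin:YNk} becomes literally the Fay trisecant identity --- this gives the $q=1$ case and a stringent consistency check; for generic $q$ one verifies \eqref{eq:Tau:bilin:YNk} order by order in $z$ and in $s$, and checks the four-dimensional limit $R\to0$, where it reduces to the known analogue for 4d $SU(N)$ with Chern--Simons level $k$.

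The main obstacle is that the Chern--Simons--deformed K-theoretic blow-up equations for the blow-up of $\mathbb{C}^2/\mathbb{Z}_2$ are not available in the literature in the precise normalization needed: the $k=0$ $A_1$-ALE blow-up relations and the rank-one K-theoretic blow-up formula of \cite{GNY} supply the correct skeleton, but controlling the cubic ($k$-dependent) classical prefactor, the fractional powers of $q$ and $z$, the exact sign, and the identification of $\omega_{j\pm1}-\omega_j$ with a flux through the exceptional curve all require genuine work; moreover even for $k=0$ the convergence of $Z^{N,k}_{\rm inst}$ for $N\geq 3$ is not established (cf.\ Remark~\ref{rem:0kN}). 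This is why the statement is left as a conjecture: what can be proved unconditionally now is the $q=1$ (Fay) case, the $N=2$ cases --- where \eqref{eq:Tau:bilin:YNk} is the bilinear form of $q$-Painlev\'e $\mathrm{III}_3$ and its neighbours, proven in \cite{BGM} --- the 4d limit, and agreement to any prescribed finite order in $z$.
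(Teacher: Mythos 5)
The statement you are asked to prove is left as a conjecture in the paper: the authors offer no proof, only verification by expansion in $z$ for many cases with $N\leq 3$, the remark that the relations ``resemble the blow-up equations'' of \cite{GNY}, \cite{NY}, and the explicit special cases worked out in the Examples ($N=1$; the twist-field solutions at $k=0$ and $k=N$; the $N=2$ reductions to $q$-Painlev\'e). Your reduction plan --- extracting the coefficient of each $s^{\vec\mu}$ after splitting the double lattice sum into center-of-mass and relative parts (using $\omega_{j+1}+\omega_{j-1}-2\omega_j=-\alpha_j\in Q_{N-1}$), tracking the classical prefactor to produce the $z^{1/N}$ and $q^{k/N}$ shifts, and recognizing the residual identity as a Chern--Simons-deformed K-theoretic blow-up relation for $\mathbb{C}^2/\mathbb{Z}_2$ --- is exactly the route the authors gesture at, and you correctly diagnose why it cannot currently be completed, so your honest conclusion that the statement must remain conjectural matches the paper.

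Two small overstatements in your last paragraph: the $N=2$ bilinear relations were \emph{proposed}, not proven, in \cite{BGM} and \cite{BS:2016:1} (the $N=2$, $k=0$ relation is explicitly attributed there as a proposal), so they do not belong on your list of things ``proved unconditionally now''; and the $q=1$ case is obtained in Sect.~\ref{ssec:autlim} only through a heuristic saddle-point plus Poisson-resummation passage from \eqref{eq:TauGen} to theta functions, after which the Fay identity applies --- so the autonomous limit is a strong consistency check on the conjecture rather than a rigorous proof of its $q\to1$ degeneration.
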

In other words, the Fourier transformed Nekrasov functions $\{\mathcal{T}^{N,k}_j\}$ are the tau-functions, solving the bilinear relations \eqref{eq:toda_deaut2} (up to redefinition $z^{1/N}\leftrightarrow -z^{1/N}$).

We have checked this conjecture by expansion in $z$ for many cases with $N\leq 3$, this relation for $N=2,  k=0$ was already proposed in \cite{BS:2016:1}. Note that at the level of formal series in $z$ the relation  \eqref{eq:Tau:bilin:YNk} holds for any value of $k$, though we need them only for $0\leq k \leq N$, see Remark~\ref{rem:0kN}. These relations resemble the blow-up equations, conjectured in \cite{GNY} and proven in \cite{NY}.

\begin{Example}
	For $N=1$ there is no sum in \rf{eq:TauGen}, and tau-functions just coincide with Nekrasov functions themselves, having the form of the products
\be
\label{TU1}
\mathcal{T}^{1,0}(z)=Z^{1,0}(z)=(qz;q,q)_\infty=\prod_{n=1}^\infty(1-zq^{n})^n\,,
\\
\mathcal{T}^{1,1}(z)=Z^{1,1}(z)=\frac1{(-qz;q,q)_\infty}=\prod_{n=1}^\infty(1+zq^{n})^{-n}\,,
\ee
which trivially depend on omitted index $j$ and parameters $u,s$. These formulas fit the bilinear equation \eqref{eq:Tau:bilin:YNk}, which here turns into
\begin{equation}
\label{EU1}
\mathcal{T}^{1,0}(qz)\mathcal{T}^{1,0}(q^{-1}z)=(1-z)\mathcal{T}^{1,0}(z)^2,\quad (1+z)\mathcal{T}^{1,1}(qz)\mathcal{T}^{1,1}(q^{-1}z)=\mathcal{T}^{1,1}(z)^2.
\end{equation}
\end{Example}

\begin{Example}
Let $k=0$, i.e. consider the case of standard relativistic $N$-particle Toda integrable system and pure 5d $SU(N)$ gauge theory (without Chern-Simons term). The bilinear relations \eqref{eq:Tau:bilin:YNk} take the form
\begin{equation}\label{eq:Tau:bilin:YN0}
\mathcal{T}^{N,0}_j(qz)\mathcal{T}^{N,0}_j(q^{-1}z)=\mathcal{T}^{N,0}_j(z)^2-z^{{1}/{N}}\mathcal{T}^{N,0}_{j+1}(z)\mathcal{T}^{N,0}_{j-1}(z)\,.
\end{equation}
There is a special solution to \eqref{eq:Tau:bilin:YN0}
\begin{equation}\label{eq:TauN0:twisted}
\mathcal{T}^{N,0}_0(z)=\ldots=\mathcal{T}^{N,0}_{N-1}(q)=(q^{1/N}z^{1/N};q^{1/N},q^{1/N})_\infty
=\prod_{n=1}^\infty\left(1-z^{1/N}q^{n/N}\right)^n\,,
\end{equation}
basically obtained from the first expressions in \rf{TU1} by substitution $z\to z^{1/N}$, $q\to q^{1/N}$. This solution corresponds to the $q$-deformed twisted fields in the intermediate channel, see formula \eqref{eq:twist_4d} below and related discussion. In this case charge in the intermediate channel equals to $\vec u=(q^{\frac{1-N}{2N}},q^{\frac{3-N}{2N}},\ldots,q^{\frac{N-1}{2N}})$.

In the opposite case $k=N$ bilinear equations \eqref{eq:Tau:bilin:YNk} have the form \begin{equation}\label{eq:Tau:bilin:YNN}
\mathcal{T}^{N,N}_j(qz)\mathcal{T}^{N,N}_j(q^{-1}z)=\mathcal{T}^{N,N}_j(z)^2-z^{{1}/{N}}\mathcal{T}^{N,N}_{j+1}(qz)\mathcal{T}^{N,N}_{j-1}(q^{-1}z)\,.
\end{equation}
These equations also have special solution
\begin{equation}
\mathcal{T}^{N,N}_0(z)=\ldots=\mathcal{T}^{N,N}_{N-1}(q)=\frac1{(-q^{1/N}z^{1/N};q^{1/N},q^{1/N})_\infty}
=\prod_{n=1}^\infty\left(1+z^{1/N}q^{n/N}\right)^{-n}\,,
\end{equation}
corresponding to $q$-deformed twist fields in the intermediate channel, see~\cite{MNTT}. Moreover, in the case of arbitrary parameters $s$ and special parameters $u$, corresponding to twist fields, relation between the equation \eqref{eq:Tau:bilin:YNN} and formula \eqref{eq:TauGen} was shown in the paper~\cite{Takasaki09}.

\end{Example}
\begin{Example}
Let $N=2$, $k=2$. Then, since $\mathcal{T}^{2,2}_{j+1}=\mathcal{T}^{2,2}_{j-1}$ it follows from \eqref{eq:Tau:bilin:YNk} that
\be
\mathcal{T}^{2,2}_j(qz)	 \mathcal{T}^{2,2}_j(q^{-1}z)=\mathcal{T}^{2,2}_j(z)^2-z^{1/2}\mathcal{T}^{2,2}_{j+1}(qz)
\mathcal{T}^{2,2}_{j+1}(q^{-1}z) =
\\ =\mathcal{T}^{2,2}_j(z)^2-z^{1/2}\mathcal{T}^{2,2}_{j+1}(z)^2+z\mathcal{T}^{2,2}_j(qz)	 \mathcal{T}^{2,2}_j(q^{-1}z)\,.
\ee
This equation is actually equivalent to that of $k=0$ example \eqref{eq:Tau:bilin:YN0} for $N=2$ after the substitution
\begin{equation}
\mathcal{T}^{2,0}_j(z)=(qz;q,q)_\infty \mathcal{T}^{2,2}_j(z)\,,
\end{equation}	
which follows from relation for Nekrasov partiton functions $Z^{2,0}(z)=(qz;q,q)_\infty Z^{2,2}(z)$. Equivalence of the $Y^{2,2}$ and $Y^{2,0}$ geometries is certainly well-known \cite{IKP02}, note also that the corresponding quivers coincide.
\end{Example}
\begin{Example}
	Let now $N=2$ and $k=1$, denote for brevity
\[\mathcal{T}=\mathcal{T}^{2,1}_0(\vec{u},\vec{s}|z),\quad \overline{\mathcal{T}}=\mathcal{T}^{2,1}_0(q^{\omega_1}\vec{u},\vec{s}|q^{1/2}z)=
\mathcal{T}^{2,1}_1(\vec{u},\vec{s}|q^{1/2}z),\quad \underline{\mathcal{T}}=\mathcal{T}^{2,1}_0(q^{-\omega_1}\vec{u},\vec{s}|q^{-1/2}z)\]\,
	then by induction
\[\overline{\overline{\mathcal{T}}}=\mathcal{T}^{2,1}_1(q^{\omega_1}\vec{u},\vec{s}|qz)=\mathcal{T}^{2,1}_0(\vec{u},\vec{s}|qz)\]
	and equation \eqref{eq:Tau:bilin:YNk} can be rewritten in the form
	\begin{equation} \overline{\overline{\mathcal{T}}}\underline{\underline{\mathcal{T}}}=
\mathcal{T}^2-z^{1/2}\overline{\mathcal{T}}\underline{\mathcal{T}}\,.
	\end{equation}
	This is bilinear form of the $q$-difference Painlev\'e equations of the surface type $A_7^{(1)}$, relation between the Newton polygon $Y^{2,1]}$ and this Painlev\'e equation was proposed in \cite{BGM}. In order to compare with the standard form of this equation (see e.g. \cite[eq. (2.44)]{SakaiLax}), let $g=z^{1/2}\overline{\mathcal{T}}\underline{\mathcal{T}}\mathcal{T}^{-2}$, then for the function $g$ one gets
	\begin{equation}
	\overline{g}g^2\underline{g}=z^{-2}(g-1)\,.
	\end{equation}
Note also, that function $g$ is nothing but the $x_{(n,m)}^{***}$-varible, used in Sect.~\ref{ssec:bezout}.
\end{Example}

\subsection{Solutions in the autonomous limit}
\label{ssec:autlim}

Let us now turn to the solutions of the autonomous versions of equations \eqref{eq:toda_deaut1}.
One can certainly derive these solutions as a $q\to 1$ limit of generic non-autonomous solutions, given by \eqref{eq:TauGen}.

The limiting procedure looks as follows (for simplicity we consider the case $k=0$). Denote $\epsilon=\epsilon_1=-\epsilon_2$, $R=1$, then $q=e^{\epsilon}$, recall also that $u_i=e^{\epsilon a_i}$ and introduce similarly $s_i=e^{\eta_i/\epsilon}$. By \cite{NO:2003} we have the following limiting behavior at $\epsilon\rightarrow 0$:
\begin{equation}
\log Z^{N,0}(\vec{u};q^{-1},q|z)\ \stackreb{\epsilon\to 0}{ =}\  \frac{1}{\epsilon^2}\mathcal{F}^{N,0}(\vec{a},z)+\mathcal{F}_1^{N,0}(\vec{a},z) + o(\epsilon)\,.
\end{equation}
The computation of tau-function \eqref{eq:TauGen} should be done in this case by a kind of improved saddle point
approximation for the Fourier series~\footnote{The same technique was applied to matrix models, when studying the breakdown of genus expansion for the multi-cut solutions (see \cite{BDE} for details), where the Fourier series appeared as a result of summation over the eigenvalues filling fractions for different cuts, with analogs of $\{a_i\}$ being the filling fractions themselves.}. Consider \eqref{eq:TauGen} at $\epsilon\rightarrow 0$
\eq{
\mc T^{N,0}_j(z)=\sum_{\vec\Lambda\in Q_{N-1}+\omega_j}\exp\left(\frac{(\vec\eta,\vec\Lambda)}\epsilon+\frac1{\epsilon^2}\mc F^{N,0}(\vec a+\epsilon\vec\Lambda,z)+\mc F_1^{N,0}(\vec a+\epsilon\vec\Lambda,z)+o(\epsilon)\right)\,.
}
and, first, find the point $\vec a_\ast=\vec a+\epsilon\vec\Lambda_\ast$, where exponent has maximal value:
\eq{
\eta_i+\frac{\d \mc F^{N,0}(\vec a_*,z)}{\d a_{\ast i}}=0\,.
\label{eq:saddle}
}
Expanding into Taylor series around this point, one gets
\eq{
\mc T^{N,0}_j(z)=\exp\left(\frac{(\vec\eta,\vec a_\ast-\vec a)}{\epsilon^2}+\frac1{\epsilon^2}\mc F^{N,0}(\vec a_\ast,z)+\mc F_1^{N,0}(\vec a_\ast,z)+o(\epsilon)\right)\times
\\
\times
\sum_{\vec\Lambda\in Q_{N-1}}\exp\left(\frac12\frac{\d^2\mc F^{N,0}(\vec a_\ast,z)}{\d a_{\ast i}\d a_{\ast i'}}\left(\Lambda_i+\frac{a_i-a_{\ast i}}{\epsilon}+\omega_j\right)\left(\Lambda_{i'}+\frac{a_{i'}-a_{\ast i'}}{\epsilon}+\omega_j\right)+o(\epsilon)\right)\,.
}
Now we neglect $o(\epsilon)$ terms and rewrite this using the Poisson summation formula. The result can be written as a sum over dual lattice, namely the the weight lattice $P_{N-1}=\{(L_1,\ldots,L_N)\in\mathbb{Q}^N|\sum L_i=0, L_{i}-L_{i-1}\in \mathbb{Z}\}$ and has the form
\eq{
\mc T^{N,0}_j(z)=\exp\left(\frac{(\vec\eta,\vec a_\ast-\vec a)}{\epsilon^2}+\frac1{\epsilon^2}\mc F^{N,0}(\vec a_\ast,z)+\mc F_1^{N,0}(\vec a_\ast,z)\right)
\det\left(-\frac1{2\pi}\frac{\d^2\mc F^{N,0}(\vec a_\ast,z)}{\d a_{\ast i}\d a_{\ast i'}}\right)^{-1/2}
\times\\\times
\sum_{\vec L\in P_{N-1}}\exp\left(2\pi^2\left[\frac{\d^2\mc F^{N,0}(\vec a_\ast,z)}{\d a_{\ast}\otimes\d a_{\ast}}\right]^{-1}_{ii'}L_iL_k+
2\pi i\left(\frac{\vec a-\vec a_{\ast}}{\epsilon}+\omega_j,\vec L\right)\right)\,.
}
Consider now the $z$-dependence of this tau function, and substitute $z=z_0q^m=z_0e^{\epsilon m}$. The value $a_{\ast}$ defined by \eqref{eq:saddle} becomes dependent on $m$ (we keep $\vec{\eta}$ constant) and one gets
\eq{\label{eq:a*m-a*0}
(a_{\ast i'}(m)-a_{\ast i'}(0))\frac{\d^2\mc F^{N,0}(\vec a_\ast,z)}{\d a_{\ast i'}\d a_{\ast i}}=-\epsilon mz_0\frac{\d^2\mc F^{N,0}(\vec a_\ast,z)}{\d z_0 \d a_{\ast i}}\,.
}
This can be rewritten as $\vec a_\ast(m)=\vec a_\ast(0) + \epsilon m U$ for certain $U\in \mathbb{C}^N$ defined by \eqref{eq:a*m-a*0}. Introducing also $Z\in \mathbb{C}^N$ as $\epsilon Z=\vec a-\vec a_\ast(0)$ we get finally
\eq{
\mc T^{N,0}_j\left(z_0q^m\right)=(\ldots)\cdot
e^{\beta N^2 m^2}\cdot\sum_{\vec L\in P_{N-1}} \exp\left(2\pi^2\left[\frac{\d^2\mc F^{N,0}(\vec a_\ast,z)}{\d a_{\ast}\otimes\d a_{\ast}}\right]^{-1}_{ii'}L_iL_{i'}+
2\pi i\left(Z+j\omega_1+mU,\vec L\right)\right)\,,
}
where $\beta$ is a certain combination of $\mc F$ derivatives, and we omit the exponent of linear in $m$ function, which can be removed by gauge transformation, see \eqref{eq:gauge} below. Here we also used $\vec\omega_j-j\vec\omega_1\in Q_{N-1}$.

Therefore, basically we get a  theta-function on Jacobian of genus $g=N-1$ curve. Moduli of this curve are locally parameterized by $z$ and the vector $\vec{\eta}$ which can be identified with ``dual Seiberg-Witten periods'' $\vec{a}_D$. The period matrix is equal to $\dfrac{\partial^2 \mathcal{F}^{N,0}(\vec{a}_\ast,z)}{\partial a_{_\ast i} \partial a_{_\ast i'}}$, this matrix coincides with period matrix of the curve with the Newton polygon $Y^{N,0}$ (see review \cite{M99} and references therein). Such curves are spectral curves $\mathcal{C}^{N,0}$ of $N$-particle relativistic Toda chain. The vector in Jacobian $J(\mathcal{C}^{N,0})$ depends linearly on $j$  and $m$.

We postpone general discussion of this issue, which should be valid for the limit of topological string partition function, constructed for any convex Newton polygon.
We believe that solutions of autonomous discrete flows for any Newton polygon $\Delta$ can be given in terms of the theta functions, see \cite{F}.
Here we just present explicit solutions for all our Toda family curves, using the Fay trisecant identity.

\subsubsection{Hirota bilinear equation}

As we have seen in Sect.~\ref{sec:Todas}, any discrete integrable system of the Toda family can be obtained as a reduction of Hirota difference equation (see e.g.~\cite{Zabrodin_Hir})
\eq{
T_{n,m+1,p}T_{n,m-1,p}=B \cdot T_{n,m,p+1}T_{n,m,p-1}+ C\cdot T_{n+1,m,p}T_{n-1,m,p}\,.
\label{eq:Hirota_general}
}
It is well-known that Hirota equations can be solved using the Fay trisecant identity \cite[eq (45)]{Fay}:
\eqs{
&E(\tilde{x},\tilde{v})E(\tilde{u},\tilde{y})\Theta\lb Z+\tilde{\mc A}(\tilde{x})-\tilde{\mc A}(\tilde{u})\rb\Theta\lb Z+\tilde{\mc A}(\tilde{y})-\tilde{\mc A}(\tilde{v})\rb=\\
=&E(\tilde{x},\tilde{y})E(\tilde{u},\tilde{v})\Theta\lb Z\rb\Theta\lb Z+\tilde{\mc A}(\tilde{x})+\tilde{\mc A}(\tilde{y})-\tilde{\mc A}(\tilde{u})-\tilde{\mc A}(\tilde{v})\rb+\\+
&E(\tilde{x},\tilde{u})E(\tilde{v},\tilde{y})\Theta\lb Z+\tilde{\mc A}(\tilde{x})-\tilde{\mc A}(\tilde{v})\rb\Theta\lb Z+\tilde{\mc A}(\tilde{y})-\tilde{\mc A}(\tilde{u})\rb\,,
\label{Fay_gen}
}
where $\tilde{x}, \tilde{y}, \tilde{u}, \tilde{v} \in \tilde{\mathcal{C}}$ are four points on a universal cover of a curve $\mathcal{C}$ of genus $g$, $E(\tilde{x},\tilde{y})$ is the Prime form, $\mc A: \mathcal{C}\mapsto J(\mathcal{C})$ is the Abel map, $\tilde{\mc A}: \tilde{\mathcal{C}}\mapsto \mathbb{C}^g$ its cover, $\Theta$ is theta function corresponding to $\mathcal{C}$, and $Z$ is an arbitrary vector in $\mathbb{C}^g$. Using \rf{Fay_gen} one can write down general solution of \eqref{eq:Hirota_general} in terms of theta-functions:
\eq{
\label{TTh}
T_{n,m,p}=\Theta\lb Z+nV+mU+pW\rb\,,
}
where we have for three $g$-dimensional vectors
\eq{
\label{UVW}
2U=\tilde{\mc A}(\tilde{x})-\tilde{\mc A}(\tilde{y})-\tilde{\mc A}(\tilde{u})+\tilde{\mc A}(\tilde{v})\,,\\
2V=\tilde{\mc A}(\tilde{x})-\tilde{\mc A}(\tilde{y})+\tilde{\mc A}(\tilde{u})-\tilde{\mc A}(\tilde{v})\,,\\
2W=\tilde{\mc A}(\tilde{x})+\tilde{\mc A}(\tilde{y})-\tilde{\mc A}(\tilde{u})-\tilde{\mc A}(\tilde{v})\,,
}
and
\eq{
B=\frac{E(\tilde{x},\tilde{y})E(\tilde{u},\tilde{v})}{E(\tilde{x},\tilde{v})E(\tilde{u},\tilde{y})}\,,\qquad
C=\frac{E(\tilde{x},\tilde{u})E(\tilde{v},\tilde{y})}{E(\tilde{x},\tilde{v})E(\tilde{u},\tilde{y})}
\,.
}
for the coefficients in \rf{eq:Hirota_general}.

Moreover, equations \eqref{eq:Hirota_general} are invariant under the ``gauge transformation'' of the form
\be
\label{eq:gauge}
T_{n,m,p}\mapsto T_{n,m,p} e^{P(n,m,p)}\,,
\ee
where $P(n,m,p) = \sum_{0\leq i,j,k\leq 1}P_{ijk}n^im^jp^k$ is just a multilinear function of three discrete
arguments. Similar transformation
\begin{equation}\label{eq:T_transformation:quadratic}
T_{n,m,p}\mapsto T_{n,m,p} e^{Q(n,m,p)}
\end{equation}
with quadratic function $Q(n,m,p) = \alpha m^2 + \beta n^2 +\gamma p^2$
preserves the structure of equations \eqref{eq:Hirota_general}, but changes the coefficients $B, C$, and we are going to use this freedom below.

\subsubsection{$Y^{N,k}$ system}

The autonomous version of \eqref{eq:toda_deaut1} can be written as
\eq{\label{eq:toda_autNK}
\tau_{n,m+1}\tau_{n,m-1}=\tau_{n,m}^2+ z_0^{1/N}\cdot\tau_{n+1,m}\tau_{n-1,m}\,.
}
This equation does not depend on $p$, in order to obtain it as a reduction of \eqref{eq:Hirota_general} we impose
\eq{
\tau_{n,m}=T_{n,m,p}.
}
According to the ansatz \eqref{eq:T_transformation:quadratic} this can be achieved if $P(n,m,p)$ does not depend on  $p$ and shift by $W$ does not change theta function. The last condition means that $W$ belongs to the lattice of $A$-cycles, we denote this lattice as $\mathbb Z^g_A$. Therefore, for the projections of $\tilde{x}, \tilde{y}, \tilde{u}, \tilde{v}$ on $\mathcal{C}$ we have a relation $\mc A(x)+\mc A(y) - \mc A(u)-\mc A(v) = 0$, which implies that there exists a function of degree two with zeroes at $x$, $y$ and two poles at $u$, $v$, so the curve $\mathcal{C}$ is hyperelliptic. The hyperelliptic involution
$\sigma$ permutes the points from each pair: $u=\sigma(v)$, $y=\sigma(x)$. It is convenient to choose the base
point of the Abel map $\mc A$ at a branch point of $\mathcal{C}$, then
$\mc A(\sigma(w))=-\mc A(w)$ for any $w \in \mathcal{C}$, and $\mc A(x)+\mc A(y) = \mc A(u)+\mc A(v) = 0$. Then one can choose preimages $\tilde{x}, \tilde{y}, \tilde{u}, \tilde{v} \in \tilde{\mathcal{C}}$ such that $W=0$.

To satisfy periodicity condition $\tau_{n+N,m+k}=\tau_{n,m}$ (see \eqref{eq:boundary_condition1}) it is natural  to use \eqref{eq:T_transformation:quadratic} in the form \begin{equation}\label{eq:tau:theta}
    \tau_{n,m}=e^{\beta(nk-mN)^2} \Theta\lb Z+nV+mU\rb
\end{equation}
and impose that theta function does not change under the shift by $N V+k U$. This means that $N V+k U \in \mathbb Z^g_A$, which leads together with \rf{UVW} and $W=0$ to
\eq{
N\lb\tilde{\mc A}(\tilde{v})-\tilde{\mc A}(\tilde{x})\rb-k\lb \tilde{\mc A}(\tilde{v})-\tilde{\mc A}(\tilde{y})\rb\in \mathbb Z^g_A\,.
\label{eq:reductionNk:tilde}
}
This leads to the relation in the Jacobian
\eq{
N\lb\mc A(v)-\mc A(x)\rb=k\lb \mc A(v)-\mc A(\sigma(x))\rb\,,
\label{eq:reductionNk}
}
recall that $y=\sigma(x)$. Clearly \eqref{eq:reductionNk:tilde} does not follow from \eqref{eq:reductionNk}, but for given $x,y,u,v$ satisfing \eqref{eq:reductionNk} we can find basis of $A$- and $B$-cycles such that l.h.s. of \eqref{eq:reductionNk:tilde} lies in the lattice of $A$-cycles, so we get a periodicty condition.\footnote{Equivalently, on can go from \eqref{eq:reductionNk} to \eqref{eq:reductionNk:tilde} using arbitrary choice of $A$- and $B$-cycles but generic transformation of the form \eqref{eq:T_transformation:quadratic}.}

In order to get solutions to \eqref{eq:toda_autNK} we also choose
\be
\exp(2N^2\beta)=\frac{E(\tilde{x},\tilde{y})E(\tilde{u},\tilde{v})}{E(\tilde{x},\tilde{v})E(\tilde{u},\tilde{y})}, \quad z_0^{1/N}=\frac{E(\tilde{x},\tilde{u})E(\tilde{v},\tilde{y})}{E(\tilde{x},\tilde{v})E(\tilde{u},\tilde{y})}\exp(2\beta(k^2-N^2))\,.
\ee
Now, to solve Hirota equations \eqref{eq:toda_autNK} we have to find solutions to the linear equations \eqref{eq:reductionNk}, using explicit description of the hyperelliptic curves $\mathcal{C}^{N,k}$, defined by equations \eqref{SC} with the Newton polygons $Y^{N,k}$, see Sect.~\ref{ssec:polygons}. We will see that $z_0$ does not depend on the choice of lifts of points $x,y,u,v$ to $\tilde{\mathcal{C}}^{N,k}$.

\paragraph{$Y^{N,0}$ system.} The curve $\mathcal{C}^{N,0}$ is defined by
\eq{
\lambda^{-1}+\mu^N\lambda+\mu^N+c_{N-1}\mu^{N-1}+\ldots+c_0=0\,,
}
so for the zeroes and poles of the functions $\lambda$ an $\mu$ on $\mathcal{C}^{N,0}$ (for generic coefficients in this equation) one gets:
\begin{itemize}
    \item $\lambda$ has $N$-th order zero at the point $(\lambda,\mu)=(0,\infty)$, where $\mu\to \infty$, $\lambda\sim -\mu^{-N}$;
    \item $\lambda$ has $N$-th order pole at $(\lambda,\mu)=(\infty,0)$, where $\mu\to 0$, $\lambda\sim-\mu^{-N}$;
    \item $\mu$ has simple zeros at the point $(\lambda,\mu)=(-c_0^{-1},0)$, and the point $(\lambda,\mu)=(\infty,0)$, where $\mu\to 0$, $\lambda\sim-c_0\mu^{-N}$;
    \item $\mu$ has simple poles at the point $(\lambda,\mu)=(0,\infty)$, where  $\mu\to \infty$, $\lambda\sim-\mu^{-N}$,  and at the point $(\lambda,\mu)=(-1,\infty)$.
\end{itemize}
Hence, one gets for the divisors of these functions on $\mathcal{C}^{N,0}$
\eq{
(\lambda)=N(0,\infty)-N(\infty,0),\quad (\mu)=(-c_0^{-1},0)+(\infty,0)-(0,\infty)-(-1,\infty)\,,
}
which means that for $k=0$ equation \eqref{eq:reductionNk} has an obvious solution
\eq{
x=(0,\infty),\quad v=(\infty,0)\,,
}
or vice versa.

\paragraph{$Y^{N,k}$ systems for $0< k < N$.} The curve $\mathcal{C}^{N,k}$ is defined here by
\eq{
\lambda^{-1}+\mu^{N-k}\lambda+\mu^N+c_{N-1}\mu^{N-1}+\ldots+c_0=0
}
so that for the zeroes and poles of functions $\lambda$ an $\mu$ one has:
\begin{itemize}
\item $\lambda$ has $N$-th order zero at the point $(\lambda,\mu)=(0,\infty)$, where $\mu\to \infty$, $\lambda\sim -\mu^{-N}$;
\item $\lambda$ has pole of order $N-k$ at the point $(\infty,0)$, where $\mu\to 0$, $\lambda\sim-c_0\mu^{k-N}$, and an extra pole of order $k$, at the point $(\lambda,\mu)=(\infty,\infty)$, where $\mu\to\infty$, $\lambda\sim-\mu^k$;
\item $\mu$ has zeros at the points $(\lambda,\mu)=(-c_0^{-1},0)$, and $(\lambda,\mu)=(\infty,0)$, where now $\mu\to 0$, $\lambda\sim-c_0\mu^{k-N}$;
\item $\mu$ has poles at the points $(\lambda,\mu)=(0,\infty)$, where $\mu\to \infty$, $\lambda\sim -\mu^{-N}$, and $(\lambda,\mu)=(\infty,\infty)$, where $\mu\to\infty$, $\lambda\sim-\mu^k$.
\end{itemize}
Hence, for the divisors on $\mathcal{C}^{N,k}$ one gets
\eq{
(\lambda)=N(0,\infty)-(N-k)(\infty,0)-k(\infty,\infty),\quad (\mu)=(-c_0^{-1},0)+(\infty,0)-(0,\infty)-(\infty,\infty)\,,
}
and therefore \eqref{eq:reductionNk} in this case is solved by
\eq{
v=(\infty,0),\quad x=(0,\infty),\quad \sigma(x)=(\infty,\infty)\,,
}
where the last equality follows from the fact that $\sigma$ permutes two poles of $\mu$.

Now one can also write down explicit expressions for the functions $\lambda$ and $\mu$:
\eq{
\lambda(p)=C_\lambda\frac{E(\tilde p,\tilde x)^N}{E(\tilde p,\tilde v)^{N-k}E(\tilde p,\tilde y)^k}\,,\qquad\mu(p)=C_\mu\frac{E(\tilde p,\tilde u)E(\tilde p,\tilde v)}{E(\tilde p,\tilde x)E(\tilde p,\tilde y)}\,.
}
where the constants $C_\lambda$ and $C_\mu$ are found, requiring near $x$, $y$, $u$, $v$:
\eqs{
p\to x, \lambda\mu^N\to-1: & \qquad C_\lambda C_\mu^N\frac{E(\tilde x,\tilde v)^kE(\tilde x,\tilde u)^N}{E(\tilde x,\tilde y)^{N+k}}=-1\,,\\
p\to y, \lambda\mu^{-k}\to -1: & \qquad C_\lambda C_\mu^{-k}\frac{E(\tilde y,\tilde x)^{N+k}}{E(\tilde y,\tilde v)^NE(\tilde y,\tilde u)^k}=-1\,,\\
p\to u, \lambda=-c_0^{-1}: & \qquad C_\lambda \frac{E(\tilde u,\tilde x)^N}{E(\tilde u,\tilde v)^{N-k}E(\tilde u,\tilde y)^k}=-c_0^{-1}\,.
}
Solving these equations we get
\eq{
c_0^{-1}=(-1)^{N}
\frac{(E(\tilde v,\tilde y)E(\tilde x,\tilde u))^{\frac{N^2}{k+N}}}{(E(\tilde x,\tilde y)E(\tilde u,\tilde v))^{N-k}(E(\tilde x,\tilde v)E(\tilde u,\tilde y))^{\frac{k^2}{k+N}}}=(-1)^{N}z_0^{\frac{N}{k+N}}.
}
Therefore $z_0$, which is the coefficient in the Hirota equation \eqref{eq:toda_autNK} is a coefficient in the spectral curve $\mathcal{C}^{N,k}$ equation. In particular $z_0$ does not depend on the choice of the lifts $\tilde{x}, \tilde{y}, \tilde{u}, \tilde{v}$ of the points $x,y,u,v \in \mc{C}^{N,k}$.

\paragraph{$Y^{N,N}$ system.} The curve $\mathcal{C}^{N,N}$ is defined by
\eq{\label{eq:CNN_equation}
\lambda^{-1}+\lambda+\mu^N+c_{N-1}\mu^{N-1}+\ldots+c_0=0\,,
}
so that for the zeroes and poles of functions $\lambda$ an $\mu$ one now has:
\begin{itemize}
\item $\lambda$ has $N$-th order zero at the point $(\lambda,\mu)=(0,\infty)$, where $\mu\to \infty$, $\lambda\sim -\mu^{-N}$;
\item $\lambda$ has $N$-th order pole at the point $(\lambda,\mu)=(\infty,\infty)$, where $\mu\to\infty$, $\lambda\sim-\mu^N$;
\item $\mu$ has two zeros at the points $(\lambda,\mu)=(\lambda_\pm,0)$, where $\{\lambda_\pm\}$ are two roots of quadratic equation $\lambda+\lambda^{-1}+c_0=0$;
\item $\mu$ has poles at the point $(\lambda,\mu)=(0,\infty)$, where $\mu\to \infty$, $\lambda\sim -\mu^{-N}$, and the point $(\lambda,\mu)=(\infty,\infty)$, where $\mu\to\infty$, $\lambda\sim-\mu^N$.
\end{itemize}
Hence, the divisors here are
\eq{
(\lambda)=N(0,\infty)-N(\infty,\infty),\quad (\mu)=(\lambda_+,0)+(\lambda_-,0)-(0,\infty)-(\infty,\infty)\,,
}
and solution of \eqref{eq:reductionNk} for $k=N$ is given by
\eq{
x=(0,\infty),\quad \sigma(x)=(\infty,\infty)\,.
}
Note that in this case there is no condition on the point $v$, so it can be chosen arbitrarily. In other words, the zeroes $\{(\lambda_\pm,0)\}$ of the hyperelliptic co-ordinate $\mu$ are not distinguished on $\mathcal{C}^{N,N}$, since constant shift of $\mu$ preserves the form \eqref{eq:CNN_equation}, but moves the zeroes to another pair of points.

\begin{Example} Consider the case $Y^{2,0}$, when equations can be solved by Jacobi theta functions on torus with periods $(\pi,\pi\uptau)$. They satisfy
the following addition formulas (we follow \cite{Whittaker Watson} in notations)
\eq{
\label{WWadd}
\theta_3^2(0)\theta_3(Z+U)\theta_3(Z-U)=\theta_3(U)^2\theta_3(Z)^2+\theta_1^2(U)\theta_1^2(Z)\,,\\
\theta_3^2(0)\theta_1(Z+U)\theta_1(Z-U)=\theta_3(U)^2\theta_1(Z)^2-\theta_1^2(U)\theta_3^2(Z)\,,
}
where vectors $U$ and $Z$ in this $g=1$ case are just $\mathbb{C}$-numbers.
Introducing therefore two functions, which differ by shift $Z\mapsto Z+V$ with $2V=0$
\begin{equation}
\tau_{0,m}=\left(\frac{\theta_3(0)}{\theta_3(U)}\right)^{m^2}\theta_3(Z+mU),\quad
\tau_{1,m}=e^{i\pi/4}\left(\frac{\theta_3(0)}{\theta_3(U)}\right)^{m^2}\theta_1(Z+mU)\,,
\end{equation}
one gets from \rf{WWadd} the solutions of the following bilinear equations (the particular versions of \eqref{eq:toda_autNK})
\eq{
\tau_{1,m+1}\tau_{1,m-1}=\tau_{1,m}^2+z^{1/2}\tau_{0,m}^2\,,
\\
\tau_{0,m+1}\tau_{0,m-1}=\tau_{0,m}^2+z^{1/2}\tau_{1,m}^2
}
with the coefficient
\be
\label{zTh}
z^{1/4}=e^{-i\pi/4}\frac{\theta_1(U)}{\theta_3(U)}\,.
\ee
One can also find all corresponding $x$-variables:
\eq{
x^{+00}_{0,m}=x_{0,m}=z^{1/2}\frac{\tau_{1,m-1}^2}{\tau_{0,m-1}^2}\,,\quad x^{+00}_{1,m}=x_{1,m}=z^{1/2}\frac{\tau_{0,m-1}^2}{\tau_{1,m-1}^2}\,,\\
x_{0,m}^{\times00}=z^{-1/2}\frac{\tau_{0,m+1}^2}{\tau_{1,m+1}^2}\,,\quad x_{1,m}^{\times00}=z^{-1/2}\frac{\tau_{1,m+1}^2}{\tau_{0,m+1}^2}\,,\\
x_{0,m}^{+11}=\frac{z^{1/2}\tau_{1,m}^4}{\tau_{1,m+1}^2\tau_{0,m-1}^2}\,,\quad
x_{1,m}^{+11}=\frac{z^{1/2}\tau_{0,m}^4}{\tau_{0,m+1}^2\tau_{1,m-1}^2}\,,\\
x_{0,m}^{\times11}=\frac{\tau_{0,m+1}^2\tau_{1,m-1}^2}{z^{1/2}\tau_{1,m}^4}\,,\quad
x_{1,m}^{\times11}=\frac{\tau_{1,m+1}^2\tau_{0,m-1}^2}{z^{1/2}\tau_{0,m}^4}\,.
}
Notice that in this case the quiver contains only the vertices of types $(0,0)$, $(\rm I,I)$ and $(\rm -I,-I)$, and effectively all these $x$-variables are expressed in terms of a single function
\eq{
x_{0,m+1}=G_m=\lb\frac{\theta_1(U)}{\theta_3(U)}\frac{\theta_1(Z+mU)}{\theta_3(Z+mU)}\rb^2\,,
\label{eq:toy_solution}
}
which solves
\eq{
G_{m+1}G_{m-1}=\left(\frac{G_m+z}{G_m+1}\right)^2
\label{eq:toy_equation}
}
with the constant $z$ given by \rf{zTh}.
This discrete equation has an integral of motion, the Hamiltonian:
\eq{
H=G_m^{1/2}(G_{m-1}^{1/2}+G_{m-1}^{-1/2})+G_{m}^{-1/2}(G_{m-1}^{1/2}+zG_{m-1}^{-1/2})\,,
}
whose value on solution \rf{eq:toy_solution} is given by
\eq{
H=\frac{2\theta_3(0)^2}{\theta_2(0)\theta_4(0)}\frac{\theta_2(U)\theta_4(U)}{\theta_3(U)^2}\,.
\label{eq:toy_Hamiltonian}
}

\end{Example}

\subsubsection{$L^{1,2N-1,2}$ system}

The autonomous version of equation \eqref{eq:Labc_equation} can be written as
\eq{
\label{eq:Labc_aut}
\tau_{l+N}\tau_{l-N}=\tau_{l+N-1}\tau_{l-N+1}+
z\cdot\tau_{l+N-2}\tau_{l-N+2}\,,
}
and to get it from \eqref{eq:Hirota_general} we identify three above shifts of $l$ with three shifts of different indices of the $\{\tau_l=T_{n,m,p}\}$-variables, e.g.
\eq{
\label{Ttauid}
\tau_{l\pm N}=T_{n,m\pm1,p},\quad \tau_{l\pm (N-1)}=T_{n,m,p\pm 1},\quad \tau_{l\pm (N-2)}=T_{n\pm 1,m,p}\,.
}
It is convenient then to set
\begin{equation}
    T_{n,m,p}=e^{\left((N-2)n+Nm+(N-1)p\right)^2\beta}\Theta(Z+nV+mU+pW)\,,
\end{equation}
so that \rf{eq:Labc_aut}, \rf{Fay_gen} gives for parameters $\beta$ and $z$
\begin{equation}
    e^{-2(2N-1)\beta}=\frac{E(\tilde x,\tilde y)E(\tilde u,\tilde v)}{E(\tilde x,\tilde v)E(\tilde u,\tilde y)}\,,\quad
z=\frac{E(\tilde x,\tilde u)E(\tilde v,\tilde y)}{E(\tilde x,\tilde v)E(\tilde u,\tilde y)}e^{8(N-1)\beta}\,,
\end{equation}
and identifications \rf{Ttauid} due to $T_{n+1,m+1,p} = T_{n,m,p+2}$ and $T_{n-N,m+N,p}=T_{n,m+2,p}$ constrain the vectors
\eq{
U+V-2W\in\mathbb Z^g_A,\quad N(U-V)-2 U\in\mathbb Z^g_A\,.
}
In terms of four points $\tilde x,\tilde y,\tilde u,\tilde v \in \tilde{\mathcal{C}}^{1,2N-1,2}$ formulas \rf{UVW} imply
\eq{
\tilde{\mc A}(\tilde u)+\tilde{\mc A}(\tilde v)-2\tilde{\mc A}(\tilde y)\in\mathbb Z^g_A\,,\quad (N-1)\lb \tilde{\mc A}(\tilde v)-\tilde{\mc A}(\tilde u) \rb-\tilde{\mc A}(\tilde x)+\tilde{\mc A}(\tilde y)\in\mathbb Z^g_A\,.
\label{eq:reductionL:tilde}
}
These two relations can be first projected to Jacobian:
\eq{
\mc A(u)+\mc A( v)-2\mc A( y)=0\,,\quad (N-1)\lb {\mc A}( v)-{\mc A}( u) \rb-{\mc A}( x)+{\mc A}( y)=0\,.
\label{eq:reductionL}
}
The first relation means that on $\mathcal{C}^{1,2N-1,2}$ there is a function of order 2, the hyperelliptic involution $\sigma$ now acts as $u=\sigma(v)$, $y=\sigma(y)$.
We present below points that solve these equations.
After this, as before, one can choose preimages of these points in $\tilde{\mc C}$ such that the l.h.s of the first relation vanishes. Then we choose such $A$- and $B$-cycles that l.h.s. of the second relation lies in the lattice of $A$-cycles, and thus solve \eqref{eq:reductionL:tilde}.

Now we present solution to the linear equations \eqref{eq:reductionL} for the curve $\mathcal{C}^{1,2N-1,2}$ defined by equation
\eq{
\lambda+\lambda^{-1}\mu^{-1}+c\lambda^{-1}+\mu^{N-1}+\ldots+c_0= \lambda+\lambda^{-1}\mu^{-1}+c\lambda^{-1}
+ P(\mu) = 0\,.
}
Consider again zeroes and poles of the functions $\lambda$ and $\mu$ on $\mathcal{C}^{1,2N-1,2}$:
\begin{itemize}
\item $\lambda$ has $(N-1)$-th order zero at the point $(\lambda,\mu)=(0,\infty)$, where $\mu\to \infty$, $\lambda\sim -c \mu^{-N+1}$, and simple zero at the point $(\lambda,\mu)=(0,-c^{-1})$, where $\lambda\sim -(\mu^{-1}+c)/P(-c^{-1})$;
\item $\lambda$ has $(N-1)$-th order pole at the point $(\lambda,\mu)=(\infty,\infty)$, where $\mu\to\infty$, $\lambda\sim-\mu^{N-1}$, and simple pole at the point $(\lambda,\mu)=(\infty,0)$, where $\lambda\to \infty$, $\mu \sim -\lambda^{-2}$;
\item $\mu$ has second order zero at the point $(\lambda,\mu)=(\infty,0)$, where $\lambda\to \infty$, $\mu \sim -\lambda^{-2}$;
\item $\mu$ has simple poles at the point $(\lambda,\mu)=(0,\infty)$, where $\mu\to \infty$, $\lambda\sim -c\mu^{-N+1}$, and at the point $(\lambda,\mu)=(\infty,\infty)$, where $\mu\to\infty$, $\lambda\sim-\mu^{N-1}$.
\end{itemize}
Therefore, for their divisors one gets
\eq{
(\lambda)= (N-1)(0,\infty)+(0,-c^{-1})-(N-1)(\infty,\infty)-(\infty,0)\,,\quad (\mu)=2(\infty,0)-(0,\infty)-(\infty,\infty)\,,
}
which basically coincide with relations \eqref{eq:reductionL}, being therefore solved by
\eq{
v=(0,\infty)\,,\quad u=(\infty,\infty)\,,\quad y=(\infty,0)\,,\quad x=(0,-c^{-1})\,.
}
This example completes the list of solutions of our autonomous systems.

\subsection{4d limit}

Let us now discuss a particular limit of our bilinear equations and their solutions. In terms of Nekrasov functions it corresponds $R\rightarrow 0$, or, in other words, the limit from 5-dimensional to 4-dimensional supersymmetric gauge theory. In terms of $q$-deformed infinite-dimensional algebras this is the conformal limit, reproducing well-known algebras of two-dimensional conformal theories with extended symmetry. The $q$-difference equations in this limit turn into differential equations.

Assume that $q=\exp R$ (i.e. rescale the background parameters to $\epsilon_1=-\epsilon_2=1$) and $z =R^{2N}\mathsf{z}$, then the $R \rightarrow 0$ limit of equation \eqref{eq:Tau:bilin:YNk} (or, equivalently, \eqref{eq:toda_deaut2}) acquires the form
\begin{equation} \label{eq:4d_equation}
    ( \partial_{\log \mathsf{z}})^2 \log \tau_j  =\mathsf{z}^{1/N}\frac{\tau_{j+1} \tau_{j-1} }{\tau_{j} ^2}\,,
    \quad j\in \mathbb{Z}/N\mathbb{Z}\,,
\end{equation}
where $\tau_j (\mathsf{z})$ denotes the 4d limit of $\mathcal{T}_j(z)$.

The limit of the solution \eqref{eq:TauGen} is straightforward, 5d Nekrasov functions become their 4d versions, the double Pochhammer products are replaced by the Barnes $\mathsf{G}$-functions. The result is a generalization of the Painlev\'e-$\mathrm{III}_3$ tau function \cite{GIL1302} to the higher rank case, or degeneration of the four-point higher-rank isomonodromic  tau-function from \cite{GIso}. Equation \eqref{eq:4d_equation} can be viewed as a Toda tau-form of the corresponding isomonodromy deformation problem, for $N=2$ case this has been described in \cite{BS:2016:2}.

There is a special $j$-independent solution of the system \eqref{eq:4d_equation}, in this case
\eq{\label{eq:twist_4d}
\tau_j({\mathsf z})={\mathsf z}^{\frac{N^2-1}{24N}}e^{N^2 {\mathsf z}^{1/N}}\,,
}
which is a 4d limit of the solution \rf{eq:TauN0:twisted}.
In terms of the definition \eqref{eq:TauGen} this solution corresponds to all $\{s_j=1\}$ and special values of the condensates $\{u_j\}$, which correspond to twist field see \cite{BS:2016:2} for $N=2$ case, in the case of generic $N$ the corresponding twist fields were discussed \cite{GM}, in particular the number $\frac{N^2-1}{24N}$ is the dimension of the twist field corresponding to the Coxeter element for $GL(N)$.

Tau functions \eqref{eq:twist_4d} can be also seen as (a special case of) the dual partition functions from \cite{NO:2003}. Recall that the latter were defind as matrix elements of the form $\tau=\langle0|e^{\frac1N J_{1}}{\mathsf z}^{L_0}s^{J_0}e^{\frac1N J_{-1}}|0\rangle$ --- in terms of \eqref{eq:TauGen} it corresponds already to generic values of $\{s_j\}$-parameters, but still fixed $\{u_j\}$-condensates given by those of the twist fields. The relation between dual partition functions and Toda equation was stated in \cite[eq. (5.26)]{NO:2003}. \footnote{It looks that the non-autonomous factor ${\mathsf z}^{1/N}$ was missed there.}

Recall also another form of the equation \eqref{eq:4d_equation}. Denote $\phi_j=\log \tau_j-\log \tau_{j-1}$ and $r=2N {\mathsf z}^{\frac1{2N}}$, then one gets the radial Toda equation
\eq{\label{eq:4d_Toda_radial}
\frac{d^2 \phi_n}{dr^2}+\frac1r\frac{d \phi_n}{dr}=e^{\phi_{n+1}-\phi_n}-e^{\phi_{n}-\phi_{n-1}}\,.
}
In \cite{BGT1} the Fredholm determinant formula was proposed for the solution of equation \eqref{eq:4d_Toda_radial}, conjecturally this formula corresponds to the limit of \eqref{eq:TauGen} with $\{s_j=1\}$ but generic values of the condensates $\{u_j\}$. Analogous Fredholm determinant for solution of corresponding $q$--difference equation with $s=1$ is written in \cite{BGT2}.

\subsection{Quantization and Poisson bracket}

We believe that quantization of all  $Y^{N,k}$ non-autonomous integrable systems is straightforward along the lines, proposed in \cite{BGM}, and similar to the case $Y^{2,0}$ presented there explicitly. We postpone the detailed discussion of this issue, but our conjecture for the solutions of the quantum cluster system is the following: one should replace the partition functions with $q_1q_2=1$ by partition functions of refined topological string theory, and supply this with extra quantization of the  $\{u_j\}$ and $\{s_j\}$ variables. Namely, the product
$q_1q_2=p \neq 1$ becomes the multiplicative quantum Planck constant~\footnote{Possibly related wth quantum gravitational anomaly, we would like to thank N.~Nekrasov for pointing out this issue.}
so that the parameters of solution $u_js_k=p^{\delta_{jk}}s_ku_j$ are no longer commutative. In the quasiclassical limit one should have the statement about the Poisson bracket: $\{u_j,s_k\}=\delta_{jk}u_js_k$, the traces of this relation can be found in Sect.~\ref{ssec:autlim}.

Surprisingly, but analogous quantization of the solutions in autonomous case looks more tricky. To demonstrate it let us present the cluster Poisson bracket
\eq{
\{G_m,G_{m-1}\}=2G_mG_{m-1}
}
rewritten in terms of parameters of solution \eqref{eq:toy_solution}:
\eqs{
&\{\uptau,Z\}=-\frac{2i}{\pi}\frac{\theta_2(U)\theta_4(U)}{\theta_3(0)^4\theta_3(U)^2}\,,\\
&\{U,Z\}=\frac1{2\pi^2}\frac{\frac{\d}{\d y}\lb\theta_2(U)\theta_4(U)\rb}{\theta_3(0)^4\theta_3(U)^2}\,,\\
&\{\uptau,U\}=0\,.
}
where $\uptau$ is modulus of the elliptic curve.
As a byproduct of this computation one finds the Hamiltonian flow, generated by \eqref{eq:toy_Hamiltonian}:
\eq{
\{H,Z\}=\frac{2\theta_3(0)^2}{\theta_2(0)\theta_4(0)}\frac{\theta_1(U)^2}{\theta_3(U)^2}\,.
}
These brackets are highly non-linear, so the problem of solution of quantum autonomous equations looks to be more difficult than for non-autonomous ones (see \cite[Sect. 4]{BGM} for the example of solution of non-autonomous quantum equation).

\begin{Remark}
There are also papers \cite{HM}, \cite{FHM} where the spectra of quantum cluster integrable systems, like relativistic Toda chain, are studied. The conjectural exact quantization conditions are given in terms of topological strings amplitudes. It would be interesting to find a relation between these results and our formulas like \eqref{eq:TauGen} (and their quantum analogs) for the solutions of the discrete flow equations.  
\end{Remark}

\section{Conclusion}

The results of this paper strongly support the main proposal of \cite{BGM}:
\begin{itemize}
  \item Deautonomization of a cluster integrable system, defined by a Newton polygon $\Delta$, leads to $q$-difference equations of the Painlev\'e type, generated by discrete flows, which can be treated as sequences of quiver mutations.
  \item These equations have the tau-form, which usually can be written as a system of Hirota bilinear difference equations.
  \item The tau-functions (solutions of the tau-form) are given by Fourier series of Nekrasov partition functions of 5d supersymmetric gauge theory (with the Seiberg-Witten curve, determined by 
  the polygon $\Delta$). Equivalently one can express tau-functions in terms of partition functions of the topological string on 3d Calabi-Yau (also determined by the same polygon $\Delta$).
\end{itemize}

We have now tested this conjecture on the cluster integrable systems of the Toda family.
This family corresponds to hyperelliptic curves depending on one Casimir, there is a single discrete flow preserving Hamiltonians of the integrable system. This family corresponds to so called $Y^{N,k}$ and $L^{1,2N-1,2}$ geometries. The 5d supersymmetric gauge theories for  $Y^{N,k}$ family is pure $SU(N)$ theory with Chern-Simons term at level $k$. We have presented solutions of the corresponding $q$-difference equations in terms of Nekrasov functions.

Certainly many important questions remained beyond the scope of this paper. We have discussed very briefly
the autonomous limit of the solutions to $q$-difference equations. In principle this procedure should work for any cluster integrable
system, giving rise simultaneously to the extremal ``Seiberg-Witten'' geometry of the corresponding topological string model, as well as to generic solution of a cluster integrable system in terms of the theta-functions.
Also, and perhaps the most important issue is quantization. The results of this paper, in addition to those of 
\cite{BGM}, suggest that there should be a direct procedure in the \emph{non-autonomous} case, coming from straightforward quantization of a cluster variety, and leading to the solutions in terms of refined partition functions of the topological strings, however with many interesting subtleties in the autonomous limit. We are planning to return to these issues elsewhere.

\section*{Acknowledgements}

We would like to thank  B.~Feigin, M.~Finkelberg, V.~Fock, R.~Gonin, A.~Grassi, A.~Hanany, A.~Ilina, T.~Ishibashi, N.~Nekrasov, M.~Semenyakin, R.-K. Seong,  A.~Shchechkin and Y. Zenkevich for stimulating discussions.

The main results of Sect.~\ref{sec:Todas} have been obtained using support of Russian Science Foundation by the RSF grant No. 16-11-10160,
the work of AM has been also partially supported by RFBR grant 18-01-00460A and
the RFBR/JSPS joint project 17-51-50051.
This work has been also funded by the  Russian Academic Excellence Project
'5-100'.
MB and PG are Young Russian Mathematics award winners and would like to thank its sponsors and jury. We would like also to thank the Galileo Galilei Institute for
Theoretical Physics for the hospitality (and the INFN for partial support)
during the program {\em Supersymmetric Quantum Field Theories in the Non-perturbative Regime}, where this paper has been completed and the results were reported.

\newpage


\footnotesize

\bigskip
\noindent \textsc{Landau Institute for Theoretical Physics, Chernogolovka, Russia,\\
	Center for Advanced Studies, Skoltech, Moscow, Russia,\\
	Laboratory for Mathematical Physics, NRU HSE, Moscow, Russia,\\
	Institute for Information Transmission Problems, Moscow, Russia,\\
	Independent University of Moscow, Moscow, Russia}

\emph{E-mail}:\,\,\textbf{mbersht@gmail.com}\\

\noindent \textsc{Center for Advanced Studies, Skoltech, Moscow, Russia,\\
Laboratory for Mathematical Physics, NRU HSE, Moscow, Russia,\\
	Bogolyubov Institute for Theoretical Physics, Kyiv, Ukraine}

\emph{E-mail}:\,\,\textbf{pasha.145@gmail.com}\\

\noindent \textsc{Center for Advanced Studies, Skoltech, Moscow, Russia,\\
	Department of Mathematics and Laboratory for Mathematical Physics, NRU HSE, Moscow, Russia,\\
	Institute for Theoretical and Experimental Physics, Moscow, Russia\\
	Theory Department of Lebedev Physics Institute, Moscow, Russia}

\emph{E-mail}:\,\,\textbf{andrei.marshakov@gmail.com}


\begin{thebibliography}{99}
	


\bibitem[BGM]{BGM}
M.~Bershtein, P.~Gavrylenko and A.~Marshakov, \emph{Cluster integrable systems, q-Painlev\'e equations and their quantization}, JHEP \textbf{1802} (2018), 077;
[\href{http://arxiv.org/abs/1711.02063}
{{\tt arXiv:1711.02063}}].	


\bibitem[BS16q]{BS:2016:1}M. Bershtein and A. Shchechkin, \textit{$q$-deformed Painlev\'e tau function and $q$-deformed conformal blocks}, J. Phys. A.  \textbf{50 8} (2017) 085202;
[\href{http://arxiv.org/abs/1608.02566}
{{\tt arXiv:1608.02566}}].


\bibitem[BS16b]{BS:2016:2}M. Bershtein and A. Shchechkin, \textit{B\"acklund transformation of Painlev\'e III($D_8$) $\tau$ function}, J. Phys. A.  \textbf{50 11} (2017) 115205;
[\href{http://arxiv.org/abs/1608.02568}
{{\tt arXiv:1608.02568}}].


\bibitem[BGT1]{BGT1}
G.~Bonelli, A.~Grassi, and A.~Tanzini, \emph{New results in $\mathcal{N}=2$ theories from non-perturbative string}  Annales Henri Poincare \textbf{19 3} (2018)  743--774;
[\href{http://arxiv.org/abs/1704.01517}
{{\tt arXiv:1704.01517}}].

\bibitem[BGT2]{BGT2}
G.~Bonelli, A.~Grassi, and A.~Tanzini, \emph{Quantum curves and $q$--deformed Painlev\'e equations};
[\href{http://arxiv.org/abs/1710.11603}
{{\tt arXiv:1710.11603}}].

\bibitem[BDE]{BDE} G. Bonnet, F. David, B. Eynard,
\emph{Breakdown of universality in multi-cut matrix models}; J.Phys. A \textbf{33} (2000) 6739--6768;
[\href{https://arxiv.org/abs/cond-mat/0003324}
{{\tt arXiv:cond-mat/0003324}}].	



\bibitem[BT]{Brini}  A. Brini, A. Tanzini, \emph{Exact results for topological strings on resolved $Y^{p,q}$ singularities}, Comm. Math. Phys. (2009) \textbf{289} 205--252;
[\href{http://arxiv.org/abs/0804.2598}
{{\tt arXiv:0804.2598}}].

\bibitem[DF]{DiFrancesco} P. Di Francesco, \emph{Quantum $A_r$ Q-system solutions as q-multinomial series}, Electron. J. Combin., \textbf{18(1)} (2011), 176, 17;
[\href{https://arxiv.org/abs/1104.0339}
{{\tt arXiv:1104.0339}}]%


\bibitem[EFS]{Eager:2011}R.~Eager, S.~Franco, and K.~Schaeffer, \emph{Dimer Models and Integrable Systems}, JHEP \textbf{1206} (2012) 106;
[\href{http://arxiv.org/abs/1702.03958}
{{\tt arXiv:1702.03958}}].


\bibitem[EK]{Eguchi:2003} T. Eguchi and H. Kanno, \textit{Topological strings and Nekrasov's formulas},
JHEP. \textbf{0312},  (2003)  006;
[\href{http://arxiv.org/abs/hep-th/0310235}
{{\tt arXiv:hep-th/0310235}}].

\bibitem[FT]{FT} L.D.~Faddeev, L.~Takhtajan, \emph{Hamiltonian methods in the theory of solitons}, Springer, 2007.


\bibitem[Fa]{Fay} J.~Fay, {\it Theta-functions on Riemann surfaces,}
Lect. Notes Math. {\bf 352}, Springer, N.Y. 1973.


\bibitem[FML]{Felder:2017}G.~Felder and M.~M\"uller-Lennert, \textit{Analyticity of Nekrasov Partition Functions};
[\href{http://arxiv.org/abs/1709.05232}
{{\tt arXiv:1709.05232}}].

\bibitem[FT17]{FT17}M.~ Finkelberg and A. Tsymbaliuk, \emph{ Multiplicative slices, relativistic Toda and shifted quantum affine algebras}
[\href{http://arxiv.org/abs/1708.01795}
{{\tt arXiv:1708.01795}}].

\bibitem[Fo]{F} V.V.Fock, \emph{Inverse spectral problem for GK integrable system}
[\href{https://arxiv.org/abs/1503.00289}
{\tt arXiv:1503.00289}]

	
\bibitem[FM14]{FM:2014} V.~Fock and A.~Marshakov, \emph{Loop groups, Clusters, Dimers and Integrable systems}, in Geometry and Quantization of Moduli Spaces 1--65;
[\href{http://arxiv.org/abs/1401.1606}
{{\tt arXiv:1401.1606}}].

\bibitem[FZ]{FZ:2006}S.~Fomin and A.~Zelevinsky, \textit{Cluster algebras IV: coefficients}, Compos. Math. \textbf{143} (2007), 112--164;
[\href{http://arxiv.org/abs/math/0602259}
{{\tt arXiv:math/0602259}}].



\bibitem[FHKVW]{Hanany1} S.~Franco, A.~Hanany, K.~Kennaway, D.~Vegh, B.~Wecht, \emph{Brane Dimers and Quiver Gauge Theories},  JHEP \textbf{0601} (2006) 096
[\href{http://arxiv.org/abs/hep-th/0504110}
{{\tt arXiv:hep-th/0504110}}].

\bibitem[FHMSVW]{Hanany2} S.~Franco, A.~Hanany, D.~Martelli, J.~Sparks, D.~Vegh, B.~Wecht, \emph{Gauge Theories from Toric Geometry and Brane Tilings}, JHEP \textbf{0601} (2006) 128
[\href{http://arxiv.org/abs/hep-th/0505211}
{{\tt arXiv:hep-th/0505211}}].

\bibitem[FHM]{FHM}S. Franco, Y. Hatsuda, M. Marino, \emph{Exact quantization conditions for cluster integrable systems} J.Stat.Mech. (2016) 063107; [\href{http://arxiv.org/abs/1512.03061}
{{\tt arXiv:1512.03061}}].

\bibitem[HM]{HM}  Y. Hatsuda, M. Marino \emph{Exact quantization conditions for the relativistic Toda lattice} JHEP \textbf{1605} (2016) 133
[\href{http://arxiv.org/abs/arXiv/1511.02860}
{{\tt arXiv:1511.02860}}].

\bibitem[GIL12]{GIL1207}
O. Gamayun, N. Iorgov, and O. Lisovyy, \textit{Conformal field theory of Painlev\'e VI}, JHEP \textbf{1210}, (2012), 38;
[\href{http://arxiv.org/abs/1207.0787}
{{\tt arXiv:1207.0787}}].

\bibitem[GIL13]{GIL1302}
O. Gamayun, N. Iorgov, and O. Lisovyy, \emph{How instanton combinatorics solves Painlev\'e VI,V and III's}, J. Phys. A: Math. Theor. \textbf{46} (2013) 335203;
[\href{http://arxiv.org/abs/1302.1832}
{{\tt arXiv:1302.1832}}].

\bibitem[G]{GIso}
P.~Gavrylenko,
\textit{Isomonodromic $\tau$-functions and $W_N$ conformal blocks}, JHEP {\bf 0915}, (2015), 167;
[\href{http://arxiv.org/abs/1505.00259}
{{\tt arXiv:1505.00259}}].

\bibitem[GM]{GM}
P.~Gavrylenko and A.~Marshakov,
\textit{Exact conformal blocks for the W-algebras, twist fields and isomonodromic deformations},
JHEP \textbf{1602} (2016) 181; [\href{https://arxiv.org/abs/1507.08794}
{{\tt arXiv:1507.08794}}].



\bibitem[GSV]{GSV:2009}M.~Gekhtman, M.~Shapiro, A.~Vainshtein \emph{Generalized B\"acklund-Darboux transformations for Coxeter-Toda flows from a cluster algebra perspective} Acta Math., \textbf{206}, (2011), 245--310;
[\href{http://arxiv.org/abs/0906.1364}
{{\tt arXiv:0906.1364}}].

\bibitem[GK]{GK}A. B. Goncharov and R. Kenyon, \emph{Dimers and cluster integrable systems}, Ann.  Sci. Ec. Norm. Sup. (2013) \textbf{46, 5}, 747--813;
[\href{http://arxiv.org/abs/1107.5588}
{{\tt arXiv:1107.5588}}].

\bibitem[GNY]{GNY} L.~G\"ottshe, H.~Nakajima, K.~Yoshioka \emph{$K$-theoretic Donaldson invariants via instanton counting}, Pure Appl.Math.Quart. 5 (2009) 1029--1111;
[\href{http://arxiv.org/abs/math/0611945}
{{\tt  arXiv:math/0611945}}].

\bibitem[GT]{GT:2018}  R.~Gonin, A.~Tsymbaliuk \emph{On Sevostyanov's construction of quantum difference Toda lattices for classical groups}
[\href{http://arxiv.org/abs/1804.01063}
{{\tt  arXiv:1804.01063}}].

\bibitem[ILP]{ILP} Rei Inoue, Thomas Lam, Pavlo Pylyavskyy, \emph{Toric networks, geometric $R$-matrices and generalized discrete Toda lattices},
Commun. Math. Phys. (2016) 347: 799;
[\href{https://arxiv.org/abs/1504.03448}
{{\tt  	arXiv:1504.03448}}]

\bibitem[IKP]{IKP02}A. Iqbal and A. K. Kashani-Poor, \emph{Instanton counting and Chern-Simons theory}, Adv.Theor.Math.Phys. \textbf{7 3} (2003), 457--497;
[\href{http://arxiv.org/abs/hep-th/0212279}
{{\tt  arXiv:hep-th/0212279}}].


\bibitem[K]{Ko}M. Kontsevich \emph{Uniform placements}  Kvant Magazine (1985) \textbf{7} 51--52.

\bibitem[MNTT]{MNTT} T.~Maeda, T.Nakatsu, K.~Takasaki, T.~Tamakoshi \textit{Five-dimensional supersymmetric Yang-Mills theories and random plane partitions}
JHEP 0503 (2005) 056
[\href{http://arxiv.org/abs/hep-th/0412327}
{{\tt arXiv:hep-th/0412327}}].

\bibitem[M99]{M99}A.~Marshakov \emph{Seiberg-Witten Theory and Integrable Systems} World Sci. Publishing, 1999.

\bibitem[M13]{AMJGP}
A.~Marshakov, \emph{Lie Groups, Cluster Variables and Integrable Systems}, J. Geom. Phys. \textbf{67} (2013) 16--36;
[\href{http://arxiv.org/abs/1207.1869}
{{\tt arXiv:1207.1869}}].

\bibitem[NO]{NO:2003}N. Nekrasov, A. Okounkov, \emph{Seiberg-Witten theory and random partitions}, Prog.Math. \textbf{244} (2006) 525-596;
[\href{http://arxiv.org/abs/hep-th/0306238}
{{\tt  hep-th/0306238}}].

\bibitem[NY]{NY}H.~Nakajima, K.~Yoshioka, \emph{Perverse coherent sheaves on blow-up. III. Blow-up formula from wall-crossing}, Kyoto J. Math. \textbf{51 2} (2011), 263--335;
[\href{http://arxiv.org/abs/0911.1773}
{{\texttt{arXiv:0911.1773}}}].



\bibitem[S01]{SakaiCMP} H.~Sakai, \textit{Rational surfaces associated with affine root systems and geometry of the Painlev\'e equations}, Comm. Math.Phys. \textbf{220(2)} (2001) 165--229.

\bibitem[S07]{SakaiLax} H. Sakai, \textit{Problem: Discrete Painlev\'e equations and their Lax forms} RIMS K\^oky\^uroku Bessatsu  \textbf{B2}  (2007) 195–208.


\bibitem[Tac]{Ta} Y.~Tachikawa, \emph{Five-dimensional Chern-Simons terms and Nekrasov's instanton
	counting}, JHEP \textbf{0402} (2004) 050,
[\href{http://arxiv.org/abs/hep-th/0401184}
{{\texttt{arXiv:hep-th/0401184}}}].

\bibitem[Tak]{Takasaki09} 	K,~Takasaki \textit{Integrable structure of melting crystal model with two q-parameters} J.Geom.Phys. \textbf{59} (2009) 1244--1257
[
{{\texttt{arXiv:0903.2607}}}].

\bibitem[WW]{Whittaker Watson}E. T. Whittaker and G. N. Watson \emph{A Course of Modern Analysis}. Cambridge University Press (1927).

\bibitem[Z]{Zabrodin_Hir} A.~Zabrodin, \emph{A survey of Hirota's difference equations}, Theoret. and Math. Phys.,  \textbf{113 2}, (1997), 1347--1392;
[\href{http://arxiv.org/abs/solv-int/9704001}
{{\texttt{arXiv:solv-int/9704001}}}].


\end{thebibliography}
\end{document}